\let\MR\relax
\pgfplotsset{compat=newest}
\crefname{item}{Item}{Item}
\newcounter{qst}
\crefname{qst}{Question}{Questions}
\newcommand{\declarecolor}[2]{\definecolor{#1}{RGB}{#2}\expandafter\newcommand\csname #1\endcsname[1]{\textcolor{#1}{##1}}}
\definecolor{mydarkblue}{rgb}{0,0.08,0.45}
\patchcmd\algocf@Vline{\vrule}{\vrule \kern-0.4pt}{}{}
\patchcmd\algocf@Vsline{\vrule}{\vrule \kern-0.4pt}{}{}
\let\cref@old@stepcounter\stepcounter
\def\stepcounter#1{%
  \cref@old@stepcounter{#1}%
  \cref@constructprefix{#1}{\cref@result}%
  \@ifundefined{cref@#1@alias}%
    {\def\@tempa{#1}}%
    {\def\@tempa{\csname cref@#1@alias\endcsname}}%
  \protected@edef\cref@currentlabel{%
    [\@tempa][\arabic{#1}][\cref@result]%
    \csname p@#1\endcsname\csname the#1\endcsname}}
\theoremstyle{plain}
\newtheorem{theorem}{Theorem}[section]
\newtheorem*{theorem*}{Theorem}
\newtheorem{lemma}[theorem]{Lemma}
\newtheorem{corollary}[theorem]{Corollary}
\newtheorem{proposition}[theorem]{Proposition}
\newtheorem{claim}[theorem]{Claim}
\theoremstyle{definition}
\newtheorem{definition}[theorem]{Definition}
\theoremstyle{remark}
\newtheorem{remark}[theorem]{Remark}
\let\E\relax
\newcommand{\tlow}[1]{\underline{t_{#1}}}
\newcommand{\thigh}[1]{\overline{t_{#1}}}
\newcommand{\vx}{\vec{x}}
\renewcommand{\vu}{\vec{u}}
\newcommand{\vxstar}{\vec{x}^\star}
\newcommand{\proj}{\Pi}
\newcommand*{\N}{{\mathbb{N}}}
\let\R\relax
\newcommand*{\R}{{\mathbb{R}}}
\newcommand*{\E}{\operatornamewithlimits{\mathbb E}}
\newcommand*{\cX}{{\mathcal{X}}}
\newcommand*{\cR}{{\mathcal{R}}}
\newcommand*{\cA}{{\mathcal{A}}}
\newcommand{\defeq}{\coloneqq}
\newcommand{\tcrit}{t_{c}}
\newcommand{\BRgap}{\mathsf{BRGap}}
\newcommand{\KKTgap}{\mathsf{KKTGap}}
\newcommand{\Phimax}{\Phi_{\mathsf{range}}}
\newcommand{\reg}{\mathsf{Reg}}
\mathchardef\mathhyphen="2D
\NewDocumentCommand{\treeset}{o}{\mathbb{T}\IfNoValueF{#1}{_{#1}}}
\newcommand{\mwu}{\texttt{MWU}}
\newcommand{\vR}{\vec{r}}
\newcommand{\vtheta}{\vec{\theta}}
\newcommand{\vdelta}{\vec{\delta}}
\newcommand{\urange}{u_{\textsf{range}}}
\newcommand{\RM}{\texttt{RM}}
\newcommand{\RMplus}{\texttt{RM}^+}
\newcommand{\DRMplus}{\texttt{DRM}^+}
\newcommand{\md}{\texttt{MD}}
\newcommand{\ftrl}{\texttt{FTRL}}
\DeclarePairedDelimiterX{\infdivx}[2]{(}{)}{%
  #1\;\delimsize\|\;#2%
}
\renewcommand{\vec}[1]{\bm{#1}}
\newcommand{\mat}[1]{\mathbf{#1}}
\DeclarePairedDelimiterX{\card}[1]{\lvert}{\rvert}{#1}
\DeclarePairedDelimiterX{\tuple}[1]{\lparen}{\rparen}{#1}
\DeclarePairedDelimiterX{\parens}[1]{\lparen}{\rparen}{#1}
\DeclarePairedDelimiterX{\brackets}[1]{\lbrack}{\rbrack}{#1}
\DeclarePairedDelimiterX{\set}[1]\{\}{#1}
\let\Pr\relax
\DeclarePairedDelimiterXPP{\Pr}[1]{\mathbb{P}}[]{}{#1}
\DeclarePairedDelimiterXPP{\PrX}[2]{\mathbb{P}_{#1}}[]{}{#2}
\DeclarePairedDelimiterXPP{\Ex}[1]{\mathbb{E}}[]{}{#1}
\DeclarePairedDelimiterXPP{\ExX}[2]{\mathbb{E}_{#1}}[]{}{#2}
\tikzset{
  fitting node/.style={
    inner sep=0pt,
    fill=none,
    draw=none,
    reset transform,
    fit={(\pgf@pathminx,\pgf@pathminy) (\pgf@pathmaxx,\pgf@pathmaxy)}
  },
  reset transform/.code={\pgftransformreset}
}
\tikzset{cross/.style={path picture={
  \draw[black]
(path picture bounding box.south east) -- (path picture bounding box.north west) (path picture bounding box.south west) -- (path picture bounding box.north east);
}}}
\tikzstyle{ox}=[semithick,draw=black,circle,cross,inner sep=1.2mm]
\newcommand{\nc}{\newcommand}
\nc\io[1]{\ifnum\Comments=1 {\textcolor{purple}{[ioannis: #1]}}\fi}
\nc\br[1]{\ifnum\Comments=1 {\textcolor{teal}{[brian: #1]}}\fi}
\nc\et[1]{\ifnum\Comments=1 {\textcolor{orange}{[ET: #1]}}\fi}
\nc{\Opthedge}{OMWU\xspace}
\nc{\DMO}{\DeclareMathOperator}
\nc\old[1]{\textcolor{brown}{[old: #1]}}
\nc{\BR}{\mathbb{R}}
\nc{\BC}{\mathbb{C}}
\DMO{\Bin}{Bin}
\nc{\BN}{\mathbb{N}}
\nc{\distrs}[1]{\Delta({#1})}
\nc{\BZ}{\mathbb{Z}}
\nc{\ep}{\epsilon}
\nc{\ra}{\rightarrow}
\nc{\st}{\star}
\nc{\Reg}[2]{\REG_{{#1},{#2}}}
\nc{\til}{\tilde}
\nc{\kld}[2]{\KL({#1};{#2})}
\nc{\chisq}[2]{\chi^2({#1};{#2})}
\DMO{\POLYLOG}{polylog}
\nc{\matx}[1]{\left(\begin{matrix}#1\end{matrix}\right)}
\DMO{\VAR}{Var}
\DMO{\COV}{Cov}
\nc{\Var}[2]{\VAR_{{#1}}\left({#2}\right)}
\nc{\Cov}[3]{\COV_{{#1}}\left({#2},{#3}\right)}
\DMO{\DD}{D}
\nc{\fd}[2]{\DD_{#1}{#2}}
\nc{\fds}[3]{\left(\fd{#1}{#2}\right)\^{#3}}
\nc{\fdc}[2]{\DD^\circ_{#1}{#2}}
\nc{\fdcs}[3]{\left(\fdc{#1}{#2}\right)\^{#3}}
\nc{\shf}[2]{\EEE_{#1}{#2}}
\nc{\shfs}[3]{\left(\shf{#1}{#2}\right)\^{#3}}
\nc{\normst}[2]{\left\| {#2} \right\|_{#1}^\st}
\renewcommand{\^}[1]{^{(#1)}}
\DeclareMathOperator*{\argmax}{arg\,max}
\nc{\lng}{\langle}
\nc{\rng}{\rangle}
\nc{\bbone}{\mathbf{1}}
\nc{\bbzero}{\mathbf{0}}
\nc{\MD}{\mathcal{D}}
\nc{\MM}{\mathcal{M}}
\nc{\MZ}{\mathcal{Z}}
\nc{\MU}{\mathcal{U}}
\nc{\MR}{\mathcal{R}}
\nc{\MC}{\mathcal{C}}
\nc{\MT}{\mathbb{T}^{n}}
\nc{\MS}{\mathcal{S}}
\nc{\MX}{\mathcal{X}}
\nc{\MY}{\mathcal{Y}}
\nc{\MB}{\mathcal{B}}
\nc{\MJ}{\mathcal{J}}
\nc{\MF}{\mathcal{F}}
\nc{\MG}{\mathcal{G}}
\nc{\ML}{\mathcal{L}}
\nc{\MQ}{\mathcal{Q}}
\nc{\ba}{\mathbf{A}}
\nc{\bx}{\mathbf{x}}
\nc{\by}{\mathbf{y}}
\nc{\bz}{\mathbf{z}}
\nc{\bs}{\mathbf{s}}
\nc{\bt}{\mathbf{t}}
\nc{\ME}{\mathcal{E}}
\DMO{\View}{View}
\DMO{\KL}{KL}
\nc{\MW}{\mathcal{W}}
\nc{\CS}{\mathscr{S}}
\nc{\CI}{\mathscr{I}}
\nc{\CQ}{\mathscr{Q}}
\nc{\CL}{\mathscr{L}}
\nc{\CM}{\mathscr{M}}
\nc{\CG}{\mathscr{G}}
\nc{\CR}{\mathscr{R}}
\nc{\wh}{\widehat}
\nc{\BM}{BM\xspace}
\nc{\ALG}{\texttt{ALG}}
\nc{\MCT}{{\rm MCT}}
\nc{\matrixLL}{\mat{L}}
\nc{\vectorecks}{\vec{x}}
\nc{\vectorLL}{\vec{\ell}}
\nc{\matrixKYU}{\mat{Q}}
\nc{\rowdot}{\cdot}
\nc{\X}{\mathcal{X}}
\nc{\Y}{\mathcal{Y}}
\newcommand{\delimit}[3]{\newcommand{#1}[1]{\left#2##1\right#3}}
\definecolor{p0color}{RGB}{0,0,0}
\definecolor{p1color}{RGB}{31,119,180}
\definecolor{p2color}{RGB}{214,39,40}
\def\va{{\bm{a}}}
\def\vg{{\bm{g}}}
\def\vr{{\bm{r}}}
\def\vu{{\bm{u}}}
\def\vx{{\bm{x}}}
\renewcommand\vec\bm
\title{Convergence of Regret Matching in Potential Games and Constrained Optimization}
\author[1]{Ioannis Anagnostides}
\author[1,2]{Emanuel Tewolde}
\author[3]{Brian Hu Zhang}
\author[4]{Ioannis Panageas}
\author[1,2,5]{Vincent Conitzer}
\author[1,6]{Tuomas Sandholm}
\affil[1]{Carnegie Mellon University}
\affil[2]{Foundations of Cooperative AI Lab (FOCAL)}
\affil[4]{University of California, Irvine}
\affil[3]{Massachusetts Institute of Technology}
\affil[5]{University of Oxford}
\affil[6]{Additional affiliations: Strategy Robot, Inc., Strategic Machine, Inc., Optimized Markets, Inc.}
\affil[ ]{}
\affil[ ]{\texttt{\{ianagnos,etewolde,conitzer,sandholm\}}\texttt{@cs.cmu.edu}, \texttt{zhangbh}\texttt{@csail.mit.edu},\texttt{ipanagea@ics.uci.edu}}
\begin{document}

\maketitle

\begin{abstract}
    \textit{Regret matching ($\RM$)}---and its modern variants---is a foundational online algorithm that has been at the heart of many AI breakthrough results in solving benchmark zero-sum games, such as poker. Yet, surprisingly little is known so far in theory about its convergence beyond two-player zero-sum games. For example, whether regret matching converges to Nash equilibria in \textit{potential games} has been an open problem for two decades. Even beyond games, one could try to use $\RM$ variants for general constrained optimization problems. Recent empirical evidence suggests that they---particularly \emph{regret matching$^+$ ($\RMplus$)}---attain strong performance on benchmark constrained optimization problems, outperforming traditional gradient descent-type algorithms.

    We show that $\RMplus$ converges to an $\epsilon$-KKT point after $O_\epsilon(1/\epsilon^4)$ iterations, establishing for the first time that it is a sound and fast first-order optimizer. Our argument relates the KKT gap to the accumulated \emph{regret}, two quantities that are entirely disparate in general but interact in an intriguing way in our setting, so much so that when regrets are bounded, our complexity bound improves all the way to $O_\epsilon(1/\epsilon^2)$. From a technical standpoint, while $\RMplus$ does \emph{not} have the usual one-step improvement property in general, we show that it does in a certain region that the algorithm will quickly reach and remain in thereafter. In sharp contrast, our second main result establishes a lower bound: $\RM$, with or without alternation, can take an exponential number of iterations to reach a crude approximate solution even in two-player potential games. This represents the first worst-case separation between $\RM$ and $\RMplus$. Our lower bound shows that convergence to coarse correlated equilibria in potential games is exponentially faster than convergence to Nash equilibria.
\end{abstract}

\section{Introduction}

\emph{Regret matching} is a foundational online algorithm for minimizing \emph{regret}. It was famously introduced by~\citet{Hart00:Simple}, although its conception can be traced much further back to the seminal \emph{approachability} framework of~\citet{Blackwell56:analog}, which lay the groundwork for online learning and regret minimization. As the name suggests, regret matching prescribes playing each action with probability proportional to the (nonnegative) regret accumulated by that action. Its appeal lies in its simplicity and scalability, being both \emph{parameter free} and \emph{scale invariant}.

Regret matching---and modern versions thereof---has been at the forefront of equilibrium computation in massive two-player zero-sum games. A notable variant with strong empirical performance is \emph{regret matching$^+$}, introduced by~\citet{Tammelin14:Solving}; the only difference is that it truncates all negative coordinates of the regret vector to zero in every iteration. Even so, this variant is typically far superior than its predecessor, and was a central component in AI poker breakthroughs~\citep{Bowling15:Heads,Brown17:Superhuman,Brown19:Superhuman,Moravvcik17:DeepStack} and a more recent superhuman agent for dark chess~\citep{Zhang25:General}.

As such, the regret matching family of algorithms has rightfully been the subject of intense study in contemporary research. Much of this focus has been confined to two-player zero-sum games, where minimizing regret translates to convergence---of the \emph{average} strategies---to minimax (equivalently, Nash) equilibria~\citep{Freund99:Adaptive}. More broadly, in general-sum games, no-regret algorithms guarantee convergence to the set of \emph{coarse correlated equilibria}~\citep{Moulin78:Strategically}---a more permissive concept than Nash equilibria.

In this paper, we examine the convergence of regret matching and its variants in the seminal class of \emph{potential games}, and, more broadly, nonconvex optimization constrained over a product of simplices. Surprisingly little is known about this question even though it was identified early on as an important open question in this space~\citep{Kleinberg09:Multiplicative,Marden07:Regret}. Recent empirical evidence brings this question to the fore again: \citet{Tewolde25:Decision} showed that the regret matching family---and especially regret matching$^+$---attains strong performance on a benchmark suite of constrained optimization problems, significantly outperforming gradient descent-type algorithms. Yet, there is no theory to suggest that regret matching will even asymptotically converge to approximate KKT points in constrained optimization, which are tantamount to Nash equilibria when dealing specifically with potential games. We fill this gap in this paper.

\subsection{Our results}

We analyze the convergence of regret matching $(\RM)$ and regret matching $(\RMplus)$ in the general class of (nonconvex) optimization problems constrained over a product of probability simplices. This encompasses as a special case Nash equilibria in potential games when the objective is multilinear. More broadly, to have a unifying treatment of both settings, we think of each probability simplex as being controlled by a single player who is observing the corresponding part of the gradient.

We cover both the simultaneous and the alternating version of $\RMplus$---whereby players update their strategies one after the other, akin to coordinate descent. Our result for $\RMplus$ is summarized below.

\begin{theorem}
    \label{theorem:main-RMplus}
    $\RMplus$ converges to an $\epsilon$-KKT point of any optimization problem over a product of simplices after $O_\epsilon(1/\epsilon^4)$ iterations.
\end{theorem}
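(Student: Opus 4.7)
I would think of alternating $\RMplus$ as an uncoupled learning dynamic in which each simplex player $i\in[n]$ observes, on their turn at iteration $t$, the utility vector $\vec{u}^t_i := -\nabla_i f\bigl(\vec{x}^{t+1}_1,\dots,\vec{x}^{t+1}_{i-1},\vec{x}^t_i,\dots,\vec{x}^t_n\bigr)$ and performs the $\RMplus$ update on the regret vector $\vec{R}^t_i$. Under this framing the $\epsilon$-KKT condition at $\vec{x}^t$ is equivalent to $\BRgap(\vec{x}^t):=\max_i\max_{\vec{y}_i\in\Delta_i}\langle \vec{y}_i-\vec{x}^t_i,\vec{u}^t_i\rangle\le\epsilon$, so my goal is to produce at least one iterate $t\le T$ whose best-response gap is at most $\epsilon$ for $T=O_\epsilon(1/\epsilon^4)$.

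My proof plan rests on two a-priori estimates together with one hard lemma. The first a-priori estimate is the classical Blackwell potential bound: letting $\vec{r}^t_i:=\vec{u}^t_i-\langle \vec{x}^t_i,\vec{u}^t_i\rangle\vec{1}$, the $\RMplus$ state satisfies $\|\vec{R}^{t+1}_i\|_2^2-\|\vec{R}^t_i\|_2^2\le\|\vec{r}^t_i\|_2^2$, because $\langle \vec{R}^t_i,\vec{r}^t_i\rangle=\|\vec{R}^t_i\|_1\cdot\langle\vec{x}^t_i,\vec{r}^t_i\rangle=0$; this immediately yields $\reg^T_i\le\|\vec{R}^T_i\|_2=O(G\sqrt{T})$ per player, where $G$ bounds $\|\nabla f\|_\infty$. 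The second a-priori estimate is the alternating-update telescoping identity
\[ f(\vec{x}^{t+1})-f(\vec{x}^t) \;=\; -\sum_i \langle \vec{x}^{t+1}_i-\vec{x}^t_i,\vec{u}^t_i\rangle \;+\; O\bigl(L\,\|\vec{x}^{t+1}-\vec{x}^t\|^2\bigr), \]
whose $L$-error vanishes for multilinear $f$ (hence is absent in potential games) and is otherwise controlled by smoothness of $f$. I would also record the simple per-iterate lower bound $\max_i\|\vec{r}^t_i\|_\infty\ge\BRgap(\vec{x}^t)$, which holds on the nose for the player that attains the maximum.

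The core technical step is the \emph{one-step improvement lemma} previewed in the abstract: in general $\langle \vec{x}^{t+1}_i-\vec{x}^t_i,\vec{u}^t_i\rangle$ can be negative for $\RMplus$, which would cause the identity in the second a-priori estimate to drift upward; but in a well-chosen region (roughly, once the regret vector $[\vec{R}^t_i]^+$ has grown enough that the truncation operator stops reshaping the proportional-response direction) the $\RMplus$ update does ensure this inner product is nonnegative, up to an error I can absorb into the $L$-smoothness term. I would prove this in two sub-steps: (a) characterize the region quantitatively in terms of a comparison between $\|\vec{R}^t_i\|_1$ and the per-step utility range, showing that $\RMplus$ there reduces to a proportional-response / multiplicative-weights-style map that is known to have one-step improvement; and (b) use the monotonicity of the Blackwell potential from the first a-priori estimate to argue that the algorithm enters this region after $O_\epsilon(1)$ iterations and cannot leave it (since $\|\vec{R}^t_i\|_2^2$ only grows). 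I expect sub-step (a) to be the main obstacle, as it hinges on a delicate sign analysis of the interaction between $[\cdot]^+$, the $\ell^1$-normalization forming $\vec{x}^t_i$, and the inner product with $\vec{u}^t_i$.

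With the lemma in hand I would aggregate as follows. Inside the good region, telescoping the identity of the second a-priori estimate gives $\sum_t\sum_i \langle \vec{x}^{t+1}_i-\vec{x}^t_i,\vec{u}^t_i\rangle \le f(\vec{x}^1)-\inf f + O(L\cdot\text{total squared movement}) = O(1)$. I then plan to lower-bound each per-player summand in terms of $\BRgap(\vec{x}^t)^2$ by combining the $\RMplus$ regret bound (which ties the cumulative ``utility improvement'' to a fixed-comparator gain) with the instantaneous bound $\max_i\|\vec{r}^t_i\|_\infty\ge\BRgap(\vec{x}^t)$, yielding a master inequality of the form $\sum_{t=1}^T\BRgap(\vec{x}^t)^2\le O(\sqrt{T}+1)$. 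A pigeonhole over $t\le T$ then gives $\min_{t\le T}\BRgap(\vec{x}^t)=O(T^{-1/4})$, so $T=O_\epsilon(1/\epsilon^4)$ suffices, matching the theorem; the improvement to $O_\epsilon(1/\epsilon^2)$ advertised in the abstract then corresponds to replacing the right-hand side of the master inequality by $O(1)$ whenever $\reg^T_i$ is a priori bounded.
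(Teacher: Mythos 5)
Your high-level architecture does match the paper's: a one-step improvement property that holds only once the regret vector is large enough, combined with monotone growth of the regret norm to show the algorithm reaches and stays in that region, followed by telescoping and pigeonhole. But the execution has three concrete gaps. First, you misidentify what the core lemma must establish: for $\RMplus$ the linearized term $\langle \vx^{(t+1)}_i - \vx^{(t)}_i, \vu^{(t)}_i\rangle$ is \emph{never} negative --- \Cref{lemma:onestepRM+} shows unconditionally that it is at least $\frac{1}{\|\vR'\|_1}\BRgap(\vx,\vu)^2$ --- so the ``delicate sign analysis'' you plan for sub-step (a) targets a statement that is both automatic and insufficient. What actually fails outside the good region is the \emph{function-value} improvement: the smoothness penalty $\frac{L}{2}\|\vx-\vx'\|_2^2$ can dominate when $\|\vR'\|$ is small, since the step length is only controlled by $\|\vR-\vR'\|_1\bigl(\frac{1}{\|\vR\|_1}+\frac{1}{\|\vR'\|_1}\bigr)$ (\Cref{lemma:closeness}), and this is precisely why the threshold in \Cref{lemma:R-L} scales with $mL$. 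Moreover, the quantitative per-step lower bound $\BRgap^2/\|\vR'\|_1$ cannot be recovered, as you propose in the aggregation step, from the cumulative regret bound against a fixed comparator together with $\max_i\|\vr^t_i\|_\infty\ge\BRgap$; it requires the per-step algebraic identity of \Cref{lemma:onestepRM+}, which exploits that each coordinate of $\vR'-\vR$ has the same sign as the corresponding instantaneous regret. Second, the claim that ``$\|\vR^t_i\|_2^2$ only grows'' does not follow from the Blackwell \emph{upper} bound you cite; it is a separate lower bound, $\|\vr^{(t)}\|_2^2 \geq \|\vr^{(t-1)}\|_2^2 + \|[\vg^{(t)}]^+\|_2^2 \geq \|\vr^{(t-1)}\|_2^2 + \KKTgap(\vx^{(t)})^2$ (\Cref{prop:increasing-regret}), and its quantitative increment is what bounds the number of sub-threshold rounds with large gap by $O(1/\epsilon^2)$, not $O_\epsilon(1)$.

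Third, and most consequentially for the stated $O_\epsilon(1/\epsilon^4)$ bound: with multiple simplices the players cross the regret-norm threshold at different times, and a player updating while still below threshold can \emph{decrease} the objective by up to $\Theta(L)$ in that round. Charging these losses over the $O(1/\epsilon^2)$ sub-threshold rounds and feeding them into your telescoping inequality only yields $O_\epsilon(1/\epsilon^8)$ --- exactly the bound the paper proves for zero-initialized $\RMplus$ (\Cref{theorem:badrate}). The paper obtains the theorem's $O_\epsilon(1/\epsilon^4)$ rate for a product of simplices by initializing each regret vector at $\vr_i^{(0)} = \max\{2\sqrt{m_i},\, 9\sqrt{m_i}L\}\vec{1}$, so that every player starts (and, by \Cref{prop:increasing-regret}, remains) above the threshold and every round is a good round (\Cref{cor:easycor}); your proposal contains no substitute for this step.
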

This theorem confirms that $\RMplus$ is a sound and efficient first-order optimizer, lending further credence to the empirical results of~\citet{Tewolde25:Decision}. We hope that~\Cref{theorem:main-RMplus} will help cement $\RMplus$ in the optimization arsenal going forward.

We remark that for potential games and constrained optimization over a single simplex, the $O_\epsilon(1/\epsilon^4)$ bound holds no matter how the regrets in (alternating) $\RMplus$ are initialized. For constrained optimization over multiple simplices---with or without alternation---we obtain the same rate by suitably initializing the regret vectors (\Cref{cor:easycor,cor:easycor-sim}); for the usual parameter-free version of $\RMplus$, in which the regret vectors are initialized at zero, we can only guarantee an inferior bound growing as $O_\epsilon(1/\epsilon^8)$ (\Cref{theorem:badrate}).

Our argument proceeds by parameterizing the rate of convergence of $\RMplus$ as a function of the accumulated regret, so much so that if the regret with respect to each individual simplex remains bounded, the rate is improved all the way to $T^{-1/2}$.

\begin{theorem}
    \label{theorem:main-refinedRM+}
    Suppose that the regret of $\RMplus$ on each individual simplex grows as at most $T^{\alpha}$ for some $\alpha \in [0, \nicefrac{1}{2} ]$. Then $\RMplus$ converges to an $\epsilon$-KKT point after $O_\epsilon( 1 / \epsilon^{\nicefrac{2}{1-\alpha}})$ iterations. 
\end{theorem}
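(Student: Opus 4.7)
The plan is to extend the analysis behind \Cref{theorem:main-RMplus} by explicitly parameterizing the rate of KKT gap accumulation in terms of the cumulative regret of $\RMplus$, and then substituting the assumed tighter bound $\reg_i^{(T)} \le T^\alpha$ for the generic $\reg_i^{(T)} = O(\sqrt{T})$ that the proof of \Cref{theorem:main-RMplus} implicitly uses. Concretely, the $\epsilon^{-4}$ rate in \Cref{theorem:main-RMplus} arises from plugging in the worst-case value $\alpha = 1/2$; any smaller $\alpha$ should propagate through the same argument to give the stated $\epsilon^{-2/(1-\alpha)}$ rate.

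The centerpiece of the proof of \Cref{theorem:main-RMplus} is a ``one-step improvement'' inequality that kicks in once $\RMplus$ enters a certain good region after $T_0 = O_\epsilon(1)$ iterations. For each player $i$ who updates at time $t \ge T_0$ in the alternating schedule, I expect an inequality of the form
\[
    \phi(\vec{x}^{(t+1)}) - \phi(\vec{x}^{(t)}) \ge c \cdot \KKTgap_i^2(\vec{x}^{(t)}) - e_t^{(i)},
\]
where $\phi$ denotes the objective (the potential, in the potential game case), $c > 0$ is a problem-dependent constant, and the error $e_t^{(i)}$ is a one-step quantity whose sum over $t \ge T_0$ is controlled by a constant multiple of $\reg_i^{(T)}$, not any larger quantity. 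This last fact is the slot where the worst-case $\sqrt{T}$ bound enters \Cref{theorem:main-RMplus}; under the hypothesis of \Cref{theorem:main-refinedRM+} I can substitute $T^\alpha$ directly.

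Telescoping in $\phi$ and using that $\phi$ is bounded on the product of simplices with range at most $\Phimax$, then summing over the finitely many players,
\[
    \sum_{t=T_0}^T \KKTgap^2(\vec{x}^{(t)}) \le \frac{O(\Phimax)}{c} + \frac{1}{c}\sum_i \sum_{t=T_0}^T e_t^{(i)} = O(1) + O(T^\alpha).
\]
A standard pigeonhole (average-to-minimum) argument in $t$ then gives $\min_{T_0 \le t \le T} \KKTgap^2(\vec{x}^{(t)}) \le O(T^{\alpha-1})$, so that $\min_t \KKTgap(\vec{x}^{(t)}) \le O(T^{-(1-\alpha)/2})$. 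Setting this at most $\epsilon$ yields the target iteration complexity $T = O_\epsilon(\epsilon^{-2/(1-\alpha)})$.

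The main obstacle is verifying that the error terms $e_t^{(i)}$ really do inherit the refined regret growth, rather than being inflated by an intermediate $L^\infty$-to-$L^2$ (or similar) conversion on player $i$'s simplex. This hinges on structural properties of $\RMplus$ already exploited in \Cref{theorem:main-RMplus}: because $\vec{x}_i^{(t)}$ is proportional to the positive part of the accumulated regret vector $\vec{R}_i^{(t)}$, per-step differences in the objective can be expressed via telescoping quantities on $\|\vec{R}_i^{(t)}\|_2^2$ together with the instantaneous regret $\|\vec{R}_i^{(t)}\|_\infty$, both of which are directly controlled by $\reg_i^{(T)}$. Given that calibration, the remaining steps---summation, pigeonhole, and inversion of the rate---are routine.
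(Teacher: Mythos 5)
Your overall skeleton---telescope the objective, bound $\sum_t \KKTgap(\vx^{(t)})^2$ by a quantity controlled by the regret, then pigeonhole---matches the paper's, and your intermediate target $\sum_{t} \KKTgap(\vx^{(t)})^2 = O(T^\alpha)$ is indeed the correct one. But the one-step inequality you posit is not the one that holds, and the gap sits exactly where you flag ``the main obstacle.'' In the paper (\Cref{lemma:onestepRM+} and \eqref{eq:telescopic}), the regret enters \emph{multiplicatively}, not additively: the per-step improvement is $\Phi(\vx^{(t+1)}) - \Phi(\vx^{(t)}) \ge \frac{1}{\|\vR_i^{(t)}\|_1}\,\BRgap_i(\vx_i^{(t)},\vu_i^{(t)})^2$, with no error term, and the refined rate follows by lower-bounding the coefficient $1/\|\vR_i^{(t)}\|_1 \ge 1/(C t^{\alpha})$ inside the telescoped sum, so that $\Phimax \ge \sum_{t=1}^T \epsilon^2/(C t^{\alpha}) = \Omega(\epsilon^2 T^{1-\alpha}/C)$ whenever every round has a player with gap above $\epsilon$. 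Your proposed form $c\cdot\KKTgap_i^2 - e_t^{(i)}$ with a \emph{constant} $c$ and errors summing to $O(T^\alpha)$ cannot be instantiated from this: whenever $\|\vR_i^{(t)}\|_1 \gg 1/c$, the error you would need to subtract is $e_t^{(i)} \approx c\,\KKTgap_i(\vx^{(t)})^2$, which is precisely the quantity being bounded, so the decomposition is circular; and if the $e_t^{(i)}$ genuinely summed to $O(T^\alpha)$ independently of the gaps, the same argument run with any $\alpha$ would prove an unconditional $O_\epsilon(1/\epsilon^2)$ rate, which would make the regret hypothesis vacuous. The regret bound is not a ``slot'' for an additive error budget; it is the denominator of the per-step gain.

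Two further mismatches with the paper are worth noting. First, for potential games the one-step improvement is unconditional, so no ``good region reached after $T_0$ iterations'' is needed; that device belongs to the general smooth case (\Cref{lemma:R-L}), where improvement is only guaranteed once $\|\vR^{(t)}\|_2 \ge \max\{2m, 9mL\}$, and the rounds violating this are controlled via the monotone growth of $\|\vr^{(t)}\|_2^2$ by at least $\KKTgap(\vx^{(t)})^2$ per round (\Cref{prop:increasing-regret}); their number is $O_\epsilon(1/\epsilon^2)$, not $O_\epsilon(1)$, and for multiple simplices they need not form an initial segment, which is exactly why the parameter-free multi-simplex version only achieves $O_\epsilon(1/\epsilon^8)$ (\Cref{theorem:badrate}). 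Second, the repair is short: drop the additive decomposition, keep the multiplicative form, bound $\|\vR_i^{(t)}\|_1 \le C t^{\alpha} \le C T^{\alpha}$, and telescope; everything downstream of your displayed sum (pigeonhole and inversion of the rate) then goes through as you wrote it.
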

$\RMplus$ always guarantees regret growing as $\sqrt{T}$, so~\Cref{theorem:main-RMplus} is implied by~\Cref{theorem:main-refinedRM+}. What makes the latter theorem surprising is that, in general, regret is a fundamentally disparate property compared to KKT gap: as we point out in~\Cref{prop:4cycle}, a sequence can incur zero regret while having an $\Omega(1)$ KKT gap in each iteration. Even so, \Cref{theorem:main-refinedRM+} directly relates the KKT gap in terms of the regret. In particular, the non-asymptotic rate of~\Cref{theorem:main-RMplus} is a consequence of the fact that $\RMplus$ has the no-regret property! In the special case of potential games, regret is known to drive the rate of convergence to \emph{coarse correlated equilibria (CCE)}; \Cref{theorem:main-refinedRM+} shows, for the first time, that regret can also govern the rate of convergence to Nash equilibria. In particular, if convergence to CCE happens at a rate of $T^{-(1-\alpha)}$, for some $\alpha \in [0, \nicefrac{1}{2}]$, the rate of convergence to Nash equilibria is no slower than $T^{-\frac{1-\alpha}{2}}$.

\iffalse
On a similar vein, a further important consequence of~\Cref{theorem:main-refinedRM+} is that, in \emph{symmetric} potential games, \emph{simultaneous} $\RMplus$ converges under a symmetric initialization.

\begin{corollary}
    Simultaneous $\RMplus$ converges to an $\epsilon$-Nash equilibrium of any symmetric potential game. Furthermore, if convergence to CCE happens at a rate of $T^{-(1-\alpha)}$, for some $\alpha \in [0, \nicefrac{1}{2}]$, the rate of convergence to Nash equilibria is no slower than $T^{-\frac{1-\alpha}{2}}$.
\end{corollary}
\fi

From a technical standpoint, the key challenge is that $\RMplus$ does \emph{not} have a one-step improvement property: even if one initializes $\RMplus$ close to a KKT point, $\RMplus$ can still grossly overshoot. And, of course, it is a parameter-free algorithm, so the usual treatment of gradient descent-type algorithms that relies on appropriately tuning the learning rate falls short. In this context, our starting observation is that, at least when the utility function is linear, $\RMplus$ is bound to improve the utility, although the improvement is inversely proportional to the norm of the regret vector (\Cref{lemma:onestepRM+}). This key property already suffices to show that alternating $\RMplus$ will converge to Nash equilibria in potential games. For the more challenging setting where the updates are simultaneous or the objective is not multilinear, we first show that one-step improvement holds conditional on the norm of the regret vector being \emph{sufficiently large} (\Cref{lemma:R-L,lemma:R-L-multi}). To conclude the argument, we combine this property with the crucial insight that the $\ell_2$ norm of the regret vector is \emph{monotonically increasing} proportionally to the KKT gap (\Cref{prop:increasing-regret}). This means that $\RMplus$ will never get stuck in a cycle: the regret vector would quickly grow in norm, at which point the one-step improvement promised by~\Cref{lemma:R-L} kicks in.

Does $\RM$ share the same convergence properties as $\RMplus$? As a reminder, the only difference is that $\RM$ refrains from truncating negative regrets to zero. Even so, we find that this seemingly innocuous difference gives rise to an exponential gap in the performance of $\RM$ \emph{vis-\`a-vis} $\RMplus$, manifested even in two-player identical-interest games---a special case of potential games (\Cref{fig:exp-sep}).

\begin{theorem}
    \label{theorem:main-RMslow}
    There is a two-player $m \times m$ identical-interest game where $\RM$, with or without alternation, requires $m^{\Omega(m)}$ iterations to converge to an $m^{-\Theta(1)}$-approximate Nash equilibrium.
\end{theorem}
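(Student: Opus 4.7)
The plan is to exhibit an explicit two-player $m \times m$ identical-interest game whose payoffs force $\RM$ into a long trajectory of profiles that remain $m^{-\Theta(1)}$-far from any Nash equilibrium. The contrast with $\RMplus$ that I would exploit is that $\RM$ maintains regrets that can grow arbitrarily negative: once an action that must receive mass at every approximate equilibrium has accumulated a large negative cumulative regret, $\RM$ puts zero probability on it until that negative balance is erased, which requires an equally large amount of positive instantaneous regret in the opposite direction. This ``debt'' is precisely what the truncation in $\RMplus$ removes, so any separation must be driven by it.

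\textbf{Concrete mechanism for the lower bound.} I would construct a game in which the shared utility admits an essentially unique approximate Nash equilibrium concentrated on a ``correct'' pair of actions $(a^\star,b^\star)$, while elsewhere it is cheap to deviate to one of many ``decoy'' actions. Starting from uniform play with regrets initialized at zero, the first phase of $\RM$ places mass on a decoy, and because $a^\star$ is significantly worse than this decoy whenever the opponent is not already playing $b^\star$, the cumulative regret of $a^\star$ drifts toward $-\Omega(t)$ at a fixed negative rate per round. To amplify this linear drift into a full $m^{\Omega(m)}$ lower bound, I would chain $\Theta(m)$ such decoys into a cascade: the payoffs are arranged so that erasing the negative regret accrued against decoy $k$ flips the dynamics into a new region in which decoy $k+1$ takes over, driving the negative regret on $a^\star$ down by an additional multiplicative factor before the next escape can begin. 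The total number of rounds needed before the regret of $a^\star$ becomes nonnegative would then be the product of the escape times across all $\Theta(m)$ cascade levels, yielding $m^{\Omega(m)}$; throughout the cascade, the played strategy stays concentrated on decoys and hence remains $m^{-\Theta(1)}$-far from the unique approximate equilibrium.

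\textbf{Handling both variants and the main obstacle.} I would design the game so that the \emph{sign} of the regret drift on $a^\star$ depends only on the support of the opponent's current strategy rather than on the specific mixing weights within that support; alternating updates would then at most perturb the magnitude of the drift but not its direction, so the same construction would cover $\RM$ and alternating $\RM$ in a single argument. The main difficulty will be showing that along the entire $m^{\Omega(m)}$-step trajectory the instantaneous profile is $m^{-\Theta(1)}$-far from \emph{every} approximate Nash equilibrium, not merely from the target $(a^\star,b^\star)$; this requires ruling out spurious mixed equilibria whose supports could accidentally align with the support of $\RM$ during some cascade level. A secondary technical obstacle is that $\RM$ is scale-free, so there is no step size to tune: the cascade and its escape-time analysis must be robust under the implicit step sizes that $\RM$ selects for itself, which themselves depend on the current $\ell_1$-norm of the positive part of the regret vector and therefore evolve nontrivially along the trajectory.
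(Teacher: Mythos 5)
Your high-level mechanism is the right one and matches the paper's: $\RM$ accumulates unbounded negative regret (``debt'') on actions that must eventually be played, and a cascade of $\Theta(m)$ phases whose escape times compound multiplicatively yields the $m^{\Omega(m)}$ bound. The paper instantiates this with an explicit spiral-shaped payoff matrix (adapted from a fictitious-play construction of Panageas et al.) in which the action profiles carry payoffs $1,2,\dots,2m-1$ and the dynamics crawl along the spiral one profile at a time. However, your proposal stops at the level of a plan, and the pieces you leave open are precisely the hard parts. First, a quantitative gap: you describe the regret of $a^\star$ drifting down ``at a fixed negative rate per round,'' with each cascade level adding a ``multiplicative factor.'' A constant per-round drift gives a recursion of the form $T_{k+1} \gtrsim c\sum_{l\le k}T_l$, which only yields $2^{\Omega(m)}$, not $m^{\Omega(m)}$. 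The paper gets the stronger factorial bound because the payoffs \emph{grow linearly along the spiral}: during phase $l$ the dormant action loses $l-1$ per round relative to the played profile, so \Cref{lemma:negreg,lemma:overcomingnegregret} combine into $T_k \geq \sum_{l=2}^{k-1}\frac{l-1}{2}T_l \geq \frac{k-2}{2}T_{k-1}$, whence $T_k \geq (k-2)!/2^{k-3}$. Relatedly, concentrating all the debt on a single action $a^\star$ does not naturally produce a product of escape times; the construction needs a \emph{fresh} action at each level whose debt was accumulated throughout all preceding phases.

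Second, the two obstacles you flag at the end are not side issues but the core of the proof, and you do not resolve either. Showing the iterates are far from \emph{every} approximate equilibrium is handled in the paper by \Cref{lemma:mixedNE}: at any intermediate stage of the spiral, one player can deviate to the next spiral entry and gain at least $\frac{1}{k+2}$, so no spurious equilibrium support can coincide with the trajectory's support. Controlling the ``implicit step sizes'' of the scale-free dynamics is handled by the inductive invariance \Cref{lemma:resting-transition}, whose proof requires a regret \emph{upper} bound (\Cref{lemma:reg-ub}) showing that the positive regrets grow only like $2^{k/2}$ while the negative debt of the next spiral action grows factorially; the comparison between these two growth rates is exactly what guarantees the dynamics never prematurely activate a later action and derail the cascade. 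Without an explicit game and these three ingredients (the growing-payoff drift, the deviation bound at every intermediate profile, and the invariance that the trajectory actually follows the intended cascade), the argument does not close. You also do not construct the gadget needed for the uniform initialization (the paper's augmented matrix $\tilde{\mat{B}}$), though you correctly assert that such an initialization should be coverable.
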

This lower bound holds not just under an adversarial initialization, but also when players initially mix uniformly at random, which is the most common initialization in practice.

\Cref{theorem:main-RMslow} constitutes the first worst-case separation---let alone an exponential one---between $\RM$ and $\RMplus$. Indeed, in zero-sum games, it is known that $\RM$ and $\RMplus$ both attain a rate no faster than $T^{-1/2}$~\citep{Farina23:Regret}, even though $\RMplus$ typically performs much better in practice. \Cref{theorem:main-RMslow} provides further justification for opting for $\RMplus$ instead of $\RM$, albeit in a fundamentally different setting.

The basic flaw of $\RM$ that underpins~\Cref{theorem:main-RMslow} is that, even with a linear utility, it is not guaranteed to improve the utility even when it has a large best-response gap; specifically, as we show in~\Cref{lemma:onestepRM}, the improvement is conditional on a good-enough action having nonnegative regret. But herein lies the problem: it could take many iterations before the regret resurfaces to being positive. What happens in the construction behind~\Cref{theorem:main-RMslow} is that it takes longer and longer---exponentially so---for the regret of the unique good-enough action to be positive; before then, $\RM$ is entirely stalled without making any progress. At the same time, $\RM$ is guaranteed to converge to the set of coarse correlated equilibria (CCE) at a rate of $T^{-1/2}$, simply because it always has the no-regret property. This leads to the following interesting consequence.

\begin{corollary}
    There is a class of two-player potential games in which $\RM$ converges to an $\epsilon$-CCE in $O_\epsilon(1/\epsilon^2)$ rounds but it takes $\exp(\Omega(1/\epsilon))$ rounds to converge to an $\epsilon$-Nash equilibrium.
\end{corollary}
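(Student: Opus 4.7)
The corollary will follow by combining \Cref{theorem:main-RMslow} with the standard no-regret guarantee of $\RM$. First, observe that any two-player identical-interest game is a potential game (the potential function is simply the common utility), so the family of games constructed in \Cref{theorem:main-RMslow} already lies in the class of two-player potential games.

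For the upper bound on convergence to CCE, I would rely on the fact that $\RM$ is a no-regret algorithm whose individual regret on any $A$-action simplex grows as $O(\sqrt{A \, T})$. By the folklore connection between regret minimization and coarse correlated equilibria, the empirical joint distribution of play is an $\epsilon$-CCE as soon as each player's average regret is at most $\epsilon$, which requires only $T = O_\epsilon(1/\epsilon^2)$ iterations in any game; this bound applies verbatim to the games produced by \Cref{theorem:main-RMslow}.

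For the Nash lower bound, I would invoke \Cref{theorem:main-RMslow} directly, which exhibits an $m \times m$ identical-interest game where $\RM$ (with or without alternation) needs $m^{\Omega(m)}$ rounds to reach an $m^{-\Theta(1)}$-Nash equilibrium. Parametrizing the class by the accuracy $\epsilon = m^{-c}$ for the constant $c$ hidden in the ``$\Theta(1)$'' (without loss of generality $c \le 1$), one has $m \ge 1/\epsilon$, so
\[
    m^{\Omega(m)} \;=\; \exp\bigl(\Omega(m \log m)\bigr) \;\ge\; \exp\bigl(\Omega(1/\epsilon)\bigr),
\]
as desired. Letting $m$ vary produces the claimed class of potential games, indexed equivalently by $m$ or by $\epsilon$.

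The only mildly delicate point is aligning the two parametrizations, since \Cref{theorem:main-RMslow} is stated in terms of the game size $m$ whereas the corollary is stated in terms of the target accuracy $\epsilon$; once a polynomial correspondence between $m$ and $\epsilon$ is fixed, no additional mathematical content is required, as both halves of the claimed separation between CCE- and Nash-convergence rates follow immediately from the ingredients cited above.
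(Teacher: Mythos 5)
Your proposal is correct and takes essentially the same route as the paper: the CCE bound follows from the no-regret guarantee of $\RM$ (\Cref{prop:noregret-RM} together with \Cref{prop:CCE}), and the Nash lower bound follows from \Cref{theorem:main-RMslow} with $\epsilon = \Theta(1/m)$, so that $m^{\Omega(m)} = \exp\bigl(\Omega((1/\epsilon)\log(1/\epsilon))\bigr) \geq \exp\bigl(\Omega(1/\epsilon)\bigr)$. The one caveat---which the paper also flags---is that the folk CCE connection applies to simultaneous updates and measures convergence of the \emph{average correlated distribution} of play, whereas the Nash lower bound concerns the individual iterates; your reading is consistent with this.
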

To be clear, convergence to a CCE is meant in terms of the average correlated distribution of play, whereas convergence to Nash equilibria is in terms of the individual iterates produced by $\RM$.

\begin{figure}
    \centering
    \includegraphics[scale=0.5]{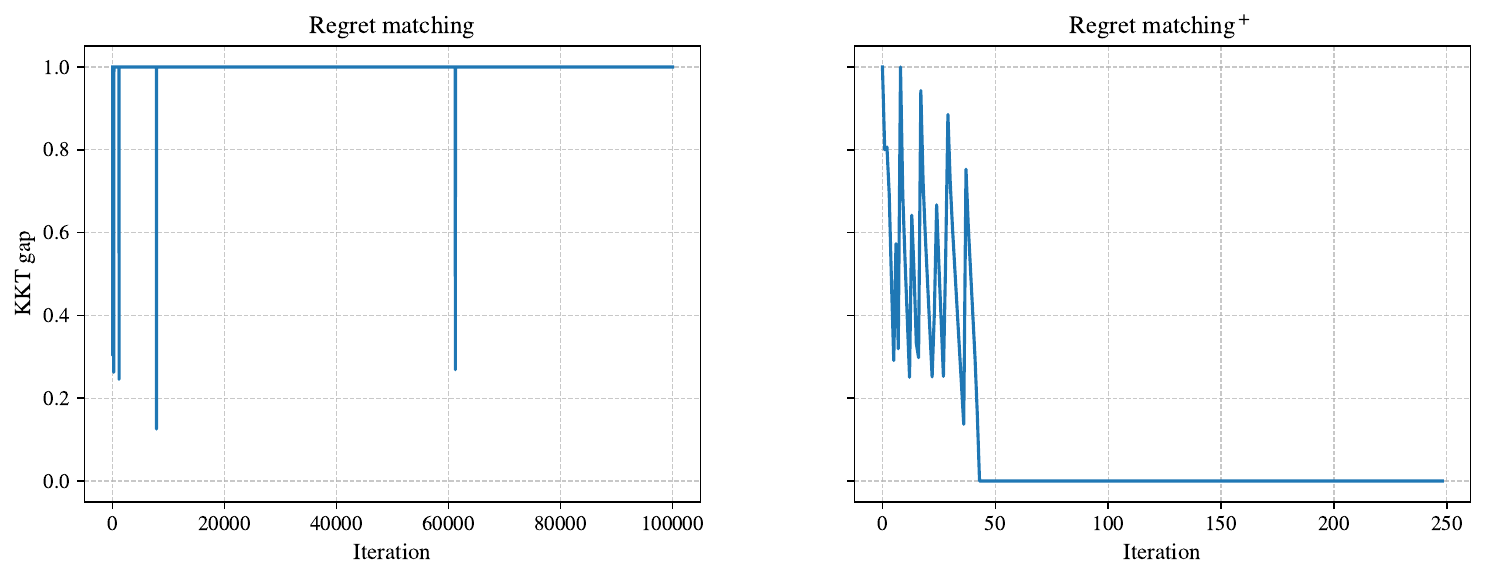}
    \caption{Illustration of our main results: $\RMplus$ always converges fast to a KKT point while $\RM$ can take exponential time even in two-player identical-interest games, constructed in~\Cref{sec:RM}.}
    \label{fig:exp-sep}
\end{figure}

We defer further discussion on related work to~\Cref{sec:related}.

\section{Preliminaries}

We begin by introducing potential games and the more general problem of constrained optimization over a product of simplices (\Cref{sec:potential-prel}), and then recall regret matching($^+$) in~\Cref{sec:onlinelearning}.

\paragraph{Notation} For a vector $\vx \in \R^\cA$, we access its $a$th coordinate by $\vx[a]$. For two vectors $\vx, \vx' \in \R^{\cA}$ of compatible dimension, $\langle \vx, \vx' \rangle \defeq \sum_{a \in \cA} \vx[a] \vx'[a]$ denotes their inner product. For $\vx \in \R^\cA$, we let $\|\vx\|_1 \defeq \sum_{a \in \cA} | \vx[a]|$, $\|\vx\|_2 \defeq \sqrt{ \langle \vx, \vx \rangle  }$, and $\|\vx\|_\infty \defeq \max_{a \in \cA} |\vx[a]|$. We generally use subscripts to indicate the player and superscripts for the (discrete) time index. For simplicity in exposition, we use the notation $O_n(\cdot)$ to suppress the dependence on all parameters except $n$. $O(\cdot), \Omega(\cdot), \Theta(\cdot)$ hide absolute constants.

\subsection{Potential games and constrained optimization}
\label{sec:potential-prel}

\paragraph{Normal-form games} Our first key focus in this paper is on \emph{potential games}, which we represent in the usual normal form. Here, we have $n$ players, each of whom is to select an action $a_i$ from a finite set $\cA_i$, with $m_i \defeq |\cA_i|$ and $m = \max_{1 \leq i \leq n} m_i$. Under a joint action profile $\vec{a} = (a_1, \dots, a_n) \in \cA_1 \times \dots \times \cA_n$, each player $i \in [n]$ receives a payoff given by a \emph{utility function} $u_i: (a_1, \dots, a_n) \mapsto u_i(a_1, \dots, a_n) \in \R$ with range bounded by $1$. A player $i \in [n]$ can randomize by specifying a \emph{mixed strategy} $\vx_i \in \Delta(\cA_i) \defeq \{ \vx_i \in \R_{\geq 0}^{\cA_i} : \sum_{a_i \in \cA_i} \vx_i[a_i] = 1 \} $. Player $i$ strives to maximize its \emph{expected} utility, given by $u_i(\vx_1, \dots, \vx_n) \defeq \sum_{(a_1, \dots, a_n) \in \cA_1 \times \dots \times \cA_n} u_i(a_1, \dots, a_n) \prod_{i' = 1}^n \vx_{i'}[a_{i'}] $. A key fact is that the expected utility is \emph{multilinear}, in that $u_i(\vx_1, \dots, \vx_n) = \langle \vx_i, \vu_i(\vx_{-i})\rangle$ for some utility vector $\vu_i(\vx_{-i}) \in \R^{\cA_i}$ that does not depend on $\vx_i$; namely, $\vu_i(\vx_{-i}) = (\sum_{(a_1, \dots, a_{i-1}, a_{i+1}, \dots, a_n)} u_i(a_1, \dots, a_n) \prod_{i' \neq i} \vx_{i'}[a_{i'}] )_{a_i \in \cA_i}$. Here and throughout, we use the shorthand notation $\vx_{-i} = (\vx_1, \dots, \vx_{i-1}, \vx_{i+1}, \dots, \vx_n)$, while we recall that $\langle \cdot, \cdot \rangle$ denotes the inner product. Further, we use the shorthand notation $\BRgap_i(\vx_i, \vu_i) \defeq \max_{\vx_i' \in \Delta(\cA_i)} \langle \vx_i' - \vx_i, \vu_i \rangle$ for the best-response gap.

The predominant solution concept in game theory is the \emph{Nash equilibrium}~\citep{Nash50:Equilibrium}.

\begin{definition}
    \label{def:NE}
    A strategy profile $(\vx_1, \dots, \vx_n) \in \Delta(\cA_1) \times \dots \times \Delta(\cA_n)$ is an \emph{$\epsilon$-Nash equilibrium} if for any player $i \in [n]$ and unilateral deviation $\vx_i' \in \Delta(\cA_i)$, 
    $$u_i(\vx_i', \vx_{-i}) \leq u_i(\vx_i, \vx_{-i}) + \epsilon.$$
\end{definition}
%We use the notation $\BRgap_i : (\vx_1, \dots, \vx_n) \mapsto \max_{\vx_i' \in \Delta(\cA_i)} $
A standard relaxation of the Nash equilibrium is the \emph{coarse correlated equilibrium} (\Cref{def:CCE}), which can be attained by no-regret algorithms (\Cref{prop:CCE}). While finding a Nash equilibrium is hard even in two-player general-sum games~\citep{Daskalakis08:Complexity,Chen09:Settling}, our focus is on \emph{potential games}---equivalently, \emph{congestion games}~\citep{Monderer96:Potential}.

\paragraph{Potential games} This is a seminal class that goes back to the work of~\citet{Rosenthal73:Class}. The defining property is the admission of a global, player-independent function---the \emph{potential}---whose difference reflects the benefit of any unilateral deviation.

\begin{definition}[Potential game]
    \label{def:potential}
    An $n$-player game is a \emph{potential game} if there exists a function $\Phi : \cA_1 \times \dots \times \cA_n \to \R$ such that for any player $i \in [n]$ and actions $a_i, a_i' \in \cA_i$, $\vec{a}_{-i} \in \bigtimes_{i' \neq i} \cA_{i'}$,
    \begin{equation}
        \label{eq:pot-property}
        \Phi(a_i', \vec{a}_{-i}) - \Phi(a_i, \vec{a}_{-i}) = u_i(a_i', \vec{a}_{-i}) - u_i(a_i, \vec{a}_{-i}).
    \end{equation}
\end{definition}
With an overload in notation, we also denote by $\Phi : \Delta(\cA_1) \times \dots \times \Delta(\cA_n)$ the mixed extension of the utility: $\Phi(\vx_1, \dots, \vx_n) = \E_{(a_1, \dots, a_n) \sim (\vx_1, \dots, \vx_n)} \Phi(a_1, \dots, a_n)$. By~\eqref{eq:pot-property}, it is easy to see that $\Phi(\vx_i', \vx_{-i}) - \Phi(\vx_i, \vx_{-i}) = u_i(\vx_i', \vx_{-i}) - u_i(\vx_i, \vx_{-i})$ for any strategies $\vx_i, \vx_{i}' \in \Delta(\cA_i)$ and $\vx_{-i} \in \bigtimes_{i' \neq i} \Delta(\cA_{i'})$.

A special case of a potential game worth noting is an \emph{identical-interest} game, which means that $u_1(\vx_1, \dots, \vx_n) = \dots = u_n(\vx_1, \dots, \vx_n)$ for all $\vx_1, \dots, \vx_n$. In the presence of only two players, this simplifies to $u_1(\vx_1, \vx_2) = \langle \vx_1, \mat{A} \vx_2 \rangle = u_2(\vx_1, \vx_2)$ for a common payoff matrix $\mat{A} \in \R^{\cA_1 \times \cA_2}$; for the sake of brevity, this will sometimes be referred to as an $m_1 \times m_2$ game, where $m_1 = |\cA_1|$ and $m_2 = |\cA_2|$. In a potential game, it holds that $\vu_i(\vx_{-i}) = \nabla_{\vx_i} \Phi(\vx)$ for any player $i \in [n]$.

A (mixed) Nash equilibrium in potential games is amenable to (projected) gradient descent, but is likely hard to compute when the precision $\epsilon > 0$ is exponentially small~\citep{Babichenko21:Settling}. Our focus will be on algorithms whose complexity is polynomial in $1/\epsilon$.

\paragraph{Constrained optimization} More broadly, beyond potential games, we are interested in computing \emph{Karush-Kuhn-Tucker (KKT) points} of a function $u : \cX \to \R$, where $\cX \defeq \Delta(\cA_1) \times \dots \times \Delta(\cA_n)$. We assume that $u$, which is to be maximized, is differentiable over an open set $\hat{\cX} \supset \cX$ and \emph{$L$-smooth}, meaning that $\| \nabla u(\vx) - \nabla u(\vx') \|_2 \leq L\|\vx - \vx'\|_2$ for all $\vx, \vx' \in \cX$; we recall that $\|\vx \|_2 \defeq \sqrt{ \langle \vx, \vx \rangle } $ denotes the (Euclidean) $\ell_2$ norm. We make the normalization assumption $| \langle \vx_i - \vx_i', \nabla_{\vx_i} u(\vx) \rangle | \leq 1$ for all $i \in [n]$ and $\vx_i, \vx_i' \in \Delta(\cA_i)$. The goal is to minimize the \emph{KKT gap}, which we measure by\footnote{There are other ways of measuring the KKT gap (\emph{e.g.},~\citealp{Fearnley22:Complexity}), which are equivalent in our setting.}
\begin{equation}
    \label{eq:KKT}
    \KKTgap: \cX \ni \vx \mapsto \max_{\vx' \in \cX} \langle \vx' - \vx, \nabla u(\vx) \rangle = \sum_{i=1}^n \BRgap_i(\vx_i, \nabla_{\vx_i} u(\vx)) .
\end{equation}
A point with small KKT gap per~\eqref{eq:KKT} is also referred to as an approximate \emph{first-order stationary point}, which is an approximate fixed point of the (constrained) gradient descent mapping $\vx \mapsto \proj_{\cX}( \vx + \eta \nabla u(\vx))$, where $\eta \leq \nicefrac{1}{L}$ and $\Pi_\cX(\cdot)$ is (Euclidean) projection mapping. A potential game can be seen as the special case in which $u$ is multilinear.\footnote{We caution that if we use the KKT gap per~\eqref{eq:KKT} in the special case of potential games we get the \emph{sum} of the players' deviation benefits, while the approximation in the Nash equilibrium is defined with respect to the \emph{max}.} 

One class of problems that fits in this framework of constrained optimization over a product of simplices is \emph{imperfect-recall} games; we point the interested reader to~\citet{Gimbert20:Bridge,Waugh09:Practical,Koller92:Complexity,Piccione97:Interpretation,Tewolde25:Decision,Berker25:Value} and the references therein. Our results imply that, in imperfect-recall decision problems, $\RMplus$ converges to what is referred to as \emph{CDT equilibria}, which are tantamount to KKT points of the utility function~\citep{Tewolde23:Computational}.

\subsection{Online learning and regret matching}
\label{sec:onlinelearning}

%\paragraph{Online learning and regret matching} 
Moving on, we now introduce $\RM$ and $\RMplus$ within the framework of online learning. Here, a \emph{learner} interacts with an \emph{environment} over a sequence of $T$ rounds. In each round $t \in [T]$, the learner first elects a mixed strategy $\vx \in \Delta(\cA)$. The environment in turn specifies a linear utility function $u^{(t)} : \vx \mapsto \langle \vx, \vu^{(t)} \rangle$ for some utility vector $\vu^{(t)} \in \R^{\cA}$; it is assumed that $u^{(t)}$ has a range bounded by $1$. In the full-feedback setting, $\vu^{(t)}$ is revealed to the learner at the end of the $t$th round. The performance of the learner in this online environment is evaluated through \emph{regret},
\begin{equation}
    \label{eq:reg}
    \reg^{(T)} \defeq \max_{\vx' \in \Delta(\cA)} \sum_{t=1}^T \langle \vx' - \vx^{(t)}, \vu^{(t)} \rangle.
\end{equation}
Two popular algorithms for minimizing regret on the simplex are regret matching $(\RM)$ and regret matching$^+$ $(\RMplus)$, formally defined in~\Cref{alg:regret_matching,alg:regret_matching_plus}. They both prescribe playing an action with probability proportional to the nonnegative regret accumulated by that action. Their \emph{only} difference is that $\RMplus$ always truncates the regret to $0$ (\Cref{line:regupdate+}); in that line, $\vec{1}$ denotes the all-ones vector, whose dimension is omitted as it is clear from the context, and $[\cdot]^+ \defeq \max(\vec{0}, \cdot)$ is the nonnegative part.

\begin{proposition}[\citealp{Zinkevich07:Regret,Farina21:Faster}]
    \label{prop:noregret-RM}
    For any sequence of utilities $(\vu^{(t)})_{t=1}^T$, both $\RM$ and $\RMplus$ guarantee that the $\ell_2$ norm of $[\vr^{(T)}]^+$ is at most $\sqrt{m T}$.
\end{proposition}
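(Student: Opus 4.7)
The plan is to run the classical Blackwell approachability argument on the (nonnegative part of the) regret vector, treating $\RM$ and $\RMplus$ in a unified way. Let $\vg^{(t)} \defeq \vu^{(t)} - \langle \vx^{(t)}, \vu^{(t)} \rangle \vec{1}$ denote the instantaneous regret vector, so that under $\RM$ we have $\vr^{(t+1)} = \vr^{(t)} + \vg^{(t)}$, while under $\RMplus$ we have $\vr^{(t+1)} = [\vr^{(t)} + \vg^{(t)}]^+$. The key fact that powers the whole argument is that in both algorithms $\vx^{(t)}$ is taken to be proportional to $[\vr^{(t)}]^+$, which gives the orthogonality
\begin{equation*}
\langle [\vr^{(t)}]^+, \vg^{(t)}\rangle \;=\; c\cdot \langle \vx^{(t)},\, \vu^{(t)} - \langle \vx^{(t)}, \vu^{(t)}\rangle \vec{1}\rangle \;=\; 0
\end{equation*}
for the appropriate normalizing constant $c\ge 0$ (the degenerate case $[\vr^{(t)}]^+=\vec{0}$ is trivial, using any convention for $\vx^{(t)}$).

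Next I would establish the coordinatewise inequality $[\vr^{(t+1)}]^+ \le [[\vr^{(t)}]^+ + \vg^{(t)}]^+$ that allows me to treat both algorithms simultaneously. For $\RMplus$ this is in fact an equality, since $\vr^{(t)} = [\vr^{(t)}]^+$ already. For $\RM$ it reduces to the elementary claim that $[a+b]^+ \le [[a]^+ + b]^+$ for every $a,b \in \R$, which is immediate by splitting on the sign of $a$ (the only nontrivial case being $a<0 \le a+b$, where the left side equals $a+b$ and the right side equals $b\ge a+b$). Because both sides are nonnegative vectors, taking $\ell_2$ norms preserves this inequality.

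Combining the two steps, I get the Pythagorean-type recursion
\begin{equation*}
\|[\vr^{(t+1)}]^+\|_2^2 \;\le\; \|[\vr^{(t)}]^+ + \vg^{(t)}\|_2^2 \;=\; \|[\vr^{(t)}]^+\|_2^2 + 2\langle[\vr^{(t)}]^+,\vg^{(t)}\rangle + \|\vg^{(t)}\|_2^2 \;=\; \|[\vr^{(t)}]^+\|_2^2 + \|\vg^{(t)}\|_2^2.
\end{equation*}
Since $\vu^{(t)}$ has range bounded by $1$, we can write $\vg^{(t)}[a] = \sum_{a'} \vx^{(t)}[a']\,(\vu^{(t)}[a] - \vu^{(t)}[a'])$, so each coordinate lies in $[-1,1]$ and hence $\|\vg^{(t)}\|_2^2 \le m$.

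Telescoping from $t=1$ to $T$ with $\vr^{(0)} = \vec{0}$ yields $\|[\vr^{(T)}]^+\|_2^2 \le mT$, which is the claimed bound. There is no real obstacle here beyond getting the $\RM$ case to piggyback on the cleaner $\RMplus$ recursion, which is handled by the pointwise inequality in the second step; the rest is Blackwell's textbook Pythagorean expansion.
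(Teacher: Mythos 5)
Your proof is correct and follows essentially the same route as the paper's (\Cref{prop:regret bound}): the orthogonality $\langle[\vr^{(t)}]^+,\vg^{(t)}\rangle=0$ from $\vx^{(t)}\propto[\vr^{(t)}]^+$, the Pythagorean expansion, the bound $\|\vg^{(t)}\|_2^2\le m$, and telescoping. The only addition is that you explicitly verify the $\RM$ case via the pointwise inequality $[a+b]^+\le[[a]^+ +b]^+$, a detail the paper dispatches with ``a similar proof works for $\RM$''; your verification of it is correct.
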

In particular, for both $\RM$ and $\RMplus$, $\reg^{(T)} \leq \| [\vr^{(T)}]^+ \|_\infty \leq \| [\vr^{(T)}]^+ \|_2 \leq \sqrt{m T}$.

\begin{minipage}[t]{0.48\textwidth}
\begin{algorithm}[H]
\caption{Regret matching ($\RM$)}
\label{alg:regret_matching}
Initialize cumulative regrets $\vR^{(0)} \leftarrow \vec{0}$\;
Initialize strategy $\vx^{(0)} \in \Delta(\cA)$\;
\For{$t = 1, \dots, T$}{
    Set $\vtheta^{(t)} \leftarrow [\vR^{(t-1)} ]^+$\;
    \If{$ \vtheta^{(t)} \neq \vec{0}$} {
    Compute $\vx^{(t)} \leftarrow \nicefrac{\vtheta^{(t)}}{\| \vtheta^{(t)} \|_1 } $\;
    }
    \Else{
        $\vx^{(t)} \leftarrow \vx^{(t-1)}$\;
    }
    Output strategy $\vx^{(t)} \in \Delta(\cA)$ \;
    Observe utility $\vu^{(t)} \in \R^{\cA}$\;
    $\vR^{(t)} \leftarrow \vR^{(t-1)} + \vu^{(t)} - \langle \vx^{(t)}, \vu^{(t)} \rangle \vec{1} $\;
}
\end{algorithm}
\end{minipage}
\hfill
\begin{minipage}[t]{0.48\textwidth}
\begin{algorithm}[H]
\caption{Regret matching$^+$ $(\RMplus)$}
\label{alg:regret_matching_plus}
\SetKwInOut{Input}{Input}
\SetKwInOut{Output}{Output}
Initialize cumulative regrets $\vR^{(0)} \defeq \vec{0}$\;
Initialize strategy $\vx^{(1)} \in \Delta(\cA)$\;
\For{$t = 1, \dots, T$}{
    Set $\vtheta^{(t)} \leftarrow \vR^{(t-1)}$\;
    \If{$ \vtheta^{(t)} \neq \vec{0}$} {
    Compute $\vx^{(t)} \leftarrow \nicefrac{\vtheta^{(t)}}{\| \vtheta^{(t)} \|_1 } $\;
    }
    \Else{
        $\vx^{(t)} \leftarrow \vx^{(t-1)}$\;
    }
    Output strategy $\vx^{(t)} \in \Delta(\cA)$ \;
    Observe utility $\vu^{(t)} \in \R^{\cA}$\;
    $\vR^{(t)} \leftarrow [\vR^{(t-1)} + \vu^{(t)} - \langle \vx^{(t)}, \vu^{(t)} \rangle \vec{1}]^+$\;\label{line:regupdate+}
}
\end{algorithm}
\end{minipage}

\paragraph{Simultaneous and alternating updates}

We are interested in the convergence of $\RM$ and $\RMplus$ when used by all players; in the constrained optimization setting, we think of having one player acting on each simplex, in direct correspondence with potential games. In this setting, the sequence of utilities $(\vu_i^{(t)})_{t=1}^T$ given as input to player $i \in [n]$ is determined by the strategies of the other players. If the updates are \emph{simultaneous}, we have $\vu_i^{(t)} = \nabla_{\vx_i} u(\vx^{(t)})$ for each player $i \in [n]$. (In potential games, the potential function $\Phi$ plays the role of $u$, so that $\vu_i^{(t)} = \vu_i(\vx_{-i}^{(t)})$.) In the alternating setting, we go through the players in a round-robin fashion $i = 1, \dots, n$. In the \emph{lazy} version of the update, for a fixed precision $\epsilon > 0$, we first compute $\vu_i^{(t)} = \nabla_{\vx_i} u( \vx_{i' < i}^{(t+1)}, \vx_{i' \geq i}^{(t)} )$. If the best-response gap of player $i \in [n]$ is already at most $\epsilon$, we refrain from updating that player, so that $\vx_i^{(t+1)} \defeq \vx_i^{(t)}$. Otherwise, the player updates its strategy to $\vx_i^{(t+1)}$ using $\vu_i^{(t)}$. We refer to this scheme as \emph{$\epsilon$-lazy alternating} updates ($\epsilon$-lazy simultaneous updates are defined similarly); one limitation of this lazy variant is that it is not an ``anytime algorithm,'' in that one needs to specify the precision beforehand. In the more common, non-lazy version of alternation, a player is updated regardless of the best-response gap; in the sequel, we obtain convergence results for both variants. Alternation speeds up performance, at least in zero-sum games~\citep{Tammelin14:Solving}, and has been the subject of much recent research~\citep{Wibisono22:Alternating,Cevher23:Alternation,Nan25:Convergence}.

\section{Convergence of regret matching$^+$}
\label{sec:RMplus}

In this section, we analyze the convergence of $\RMplus$ in potential games (\Cref{sec:potential}), and more broadly, constrained optimization (\Cref{sec:nonlinear}). A central theme in our analysis of $\RM$ and $\RMplus$ is a recurring connection between regret and convergence to KKT points.

Before we proceed, it is worth highlighting that, in general, the no-regret property is fundamentally different from convergence to KKT points in \emph{nonconvex problems}. To begin with, we point out that when the underlying function to be maximized, $u$, is concave, then the no-regret property does imply convergence to a global optimum, from Jensen's inequality.

\begin{proposition}[Under concavity, no-regret implies convergence]
    \label{prop:concave}
    Let $u$ be a smooth concave function. If an online algorithm observes the sequence of utilities $( \nabla u(\vx^{(t)}))_{t=1}^T$, then $ \frac{1}{T} \sum_{t=1}^T u(\vx^{(t)}) \geq \max_{\vx \in \cX} u(\vx) - \frac{1}{T} \reg^{(T)}$, where $\reg^{(T)}$ is the regret of the algorithm per~\eqref{eq:reg}.
\end{proposition}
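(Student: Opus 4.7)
The plan is to use the first-order characterization of concavity together with the definition of regret; the conclusion follows in a few lines without needing the smoothness hypothesis.

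First I would fix $\vxstar \in \arg\max_{\vx \in \cX} u(\vx)$ (which exists since $\cX$ is compact and $u$ is continuous). For each round $t \in [T]$, concavity of $u$ yields the supporting-hyperplane inequality
\begin{equation*}
u(\vxstar) \;\leq\; u(\vx^{(t)}) + \langle \nabla u(\vx^{(t)}), \vxstar - \vx^{(t)} \rangle.
\end{equation*}
Summing this bound over $t = 1, \dots, T$ and rearranging gives
\begin{equation*}
T \cdot u(\vxstar) - \sum_{t=1}^T u(\vx^{(t)}) \;\leq\; \sum_{t=1}^T \langle \nabla u(\vx^{(t)}), \vxstar - \vx^{(t)} \rangle.
\end{equation*}

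Next I would identify the right-hand side with a single comparator's contribution to the regret. By hypothesis the algorithm observes $\vu^{(t)} = \nabla u(\vx^{(t)})$, so plugging $\vx' = \vxstar$ into the definition~\eqref{eq:reg} and using that the max over comparators dominates any single comparator gives
\begin{equation*}
\sum_{t=1}^T \langle \nabla u(\vx^{(t)}), \vxstar - \vx^{(t)} \rangle \;\leq\; \max_{\vx' \in \cX} \sum_{t=1}^T \langle \vx' - \vx^{(t)}, \vu^{(t)} \rangle \;=\; \reg^{(T)}.
\end{equation*}

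Finally I would combine the two displays and divide by $T$ to obtain
\begin{equation*}
\frac{1}{T}\sum_{t=1}^T u(\vx^{(t)}) \;\geq\; u(\vxstar) - \frac{1}{T} \reg^{(T)} \;=\; \max_{\vx \in \cX} u(\vx) - \frac{1}{T}\reg^{(T)},
\end{equation*}
which is precisely the claimed bound. There is essentially no obstacle here: the only things to watch are that the observed utility vector is indeed the gradient at the played point (given by assumption), and that the comparator $\vxstar$ lies in $\cX$ so it is admissible in the regret definition. The smoothness hypothesis in the statement is unused; it is presumably listed only to align with the standing assumptions of the paper.
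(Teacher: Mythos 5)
Your proof is correct, and it is the standard argument: the first-order concavity inequality $u(\vxstar) \leq u(\vx^{(t)}) + \langle \nabla u(\vx^{(t)}), \vxstar - \vx^{(t)} \rangle$, summed over $t$ and compared against the comparator $\vxstar$ in the regret definition~\eqref{eq:reg}. The paper gives no explicit proof of this proposition and merely attributes it to Jensen's inequality; your gradient-inequality route proves the stated bound on $\frac{1}{T}\sum_{t} u(\vx^{(t)})$ directly (Jensen would only be needed for the further step of transferring the guarantee to the \emph{average iterate}), and you are right that smoothness plays no role beyond ensuring $\nabla u(\vx^{(t)})$ is well defined.
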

Thus, if the algorithm has vanishing average regret, $u(\vx^{(t)})$ converges to $\max_{\vx \in \cX} u(\vx)$ (in density). But beyond concave problems, no-regret algorithms do not necessarily guarantee convergence even to a KKT point, as we point out below.

\begin{proposition}
    \label{prop:4cycle}
    For any $T \in \N$ with $T = 0 \mod 4$, there exists a polynomial function $u$ in $[0,1]$ and a sequence of points $(x^{(t)})_{t=1}^T$ such that
    \begin{itemize}[noitemsep]
        \item the regret of the sequence with respect to $(\nabla u(x^{(t)}) )_{t=1}^T$ is zero, while
        \item every point in the sequence has an $\Omega(1)$ KKT gap with respect to the function $u$.
    \end{itemize}
\end{proposition}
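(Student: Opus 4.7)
My plan is to give an explicit two-point counterexample. I would take $u(x) \defeq x^2 - x$, a (convex, hence non-concave) polynomial on $[0,1]$ whose derivative $u'(x) = 2x - 1$ satisfies $|u'(x)| \leq 1$ on $[0,1]$ (so the normalization of \Cref{sec:potential-prel} holds). The sequence will visit $x = 1/4$ in $T/4$ of its rounds and $x = 3/4$ in the remaining $3T/4$ rounds (the order is immaterial for either quantity of interest); concretely, one can repeat the $4$-block $(1/4,\, 3/4,\, 3/4,\, 3/4)$ exactly $T/4$ times, which is integer-valued by the hypothesis $T \equiv 0 \pmod 4$---this divisibility is the only place that hypothesis is used.

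For the KKT bullet, at $x = 1/4$ the gradient $u'(1/4) = -1/2$ is negative, so the KKT gap $\max_{x' \in [0,1]}(x' - x) u'(x)$ is attained at $x' = 0$ and equals $\tfrac{1}{4}\cdot\tfrac{1}{2} = \tfrac{1}{8}$. Symmetrically, at $x = 3/4$ the positive gradient $u'(3/4) = +1/2$ yields KKT gap $\tfrac{1}{4}\cdot\tfrac{1}{2} = \tfrac{1}{8}$, attained at $x' = 1$. Every iterate thus has KKT gap exactly $\tfrac{1}{8} = \Omega(1)$.

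For the regret bullet, I use that the comparator objective $x' \mapsto \sum_{t=1}^T (x' - x^{(t)}) u'(x^{(t)}) = x' S - P$ is affine in $x'$, where $S \defeq \sum_t u'(x^{(t)}) = (T/4)(-\tfrac{1}{2}) + (3T/4)(+\tfrac{1}{2}) = T/4$ and $P \defeq \sum_t x^{(t)} u'(x^{(t)}) = (T/4)(1/4)(-\tfrac{1}{2}) + (3T/4)(3/4)(+\tfrac{1}{2}) = T/4$. Since $S > 0$, the maximum over $x' \in [0,1]$ is attained at $x' = 1$ and equals $S - P = 0$, so the regret against $(\nabla u(x^{(t)}))_{t=1}^T$ is exactly zero.

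I do not expect a substantive technical obstacle. The one piece of creativity is to balance the two-point mixture so that the affine comparator $x' S - P$ has maximum $0$ on $[0,1]$, i.e., $S \geq 0$ together with $S = P$; combined with the support choice $\{1/4, 3/4\}$ (fixed so that the KKT gap at each point is $\Omega(1)$ relative to the nearer boundary), this is a single scalar equation that forces the $1:3$ visit ratio between the two points.
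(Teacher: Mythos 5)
Your construction is correct: with $u(x)=x^2-x$ the KKT gaps at $1/4$ and $3/4$ are each exactly $1/8$, and the comparator objective $x'\mapsto x'S-P$ with $S=P=T/4$ indeed has maximum $0$ on $[0,1]$ (attained at $x'=1$), so the regret is exactly zero. The underlying insight is the same as the paper's---regret depends only on the two aggregate scalars $S=\sum_t u'(x^{(t)})$ and $P=\sum_t x^{(t)}u'(x^{(t)})$, and these say nothing about the pointwise gradients---but the mechanism differs. The paper uses a $4$-cycle $0.6\to0.7\to0.4\to0.3$ with prescribed gradients $(2,-1,-2,1)$ forcing $S=P=0$, so the comparator objective is \emph{identically} zero (zero regret against every fixed comparator), and then fits a quartic to those four gradient values. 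You instead fix a clean quadratic up front and tune only the visit frequencies so that $S=P>0$, making the affine comparator objective vanish at its maximizer $x'=1$ while being negative elsewhere. Your route is more elementary (two points, no polynomial interpolation step) and, unlike the paper's example whose gradients reach magnitude $2$, it respects the normalization $|\langle x-x',\nabla u(x)\rangle|\le 1$ assumed in \Cref{sec:potential-prel}; the paper's version gives the marginally stronger conclusion that the cumulative utility difference is zero against \emph{every} comparator, not just nonpositive with equality at one of them. Both establish \Cref{prop:4cycle} as stated.
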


This is based on the $4$-cycle $0.6 \rightarrow 0.7 \rightarrow 0.4 \rightarrow 0.3 \rightarrow 0.6$. If the gradients observed at those points are $0.6 \mapsto 2, 0.7 \mapsto -1, 0.4 \mapsto -2, 0.3 \mapsto 1$, it follows that i) $\sum_{t=1}^T \nabla u(x^{(t)}) = 0$ and ii) $\sum_{t=1}^T x^{(t)} \nabla u(x^{(t)}) = 0$, which in turn implies that this sequence incurs zero regret. But, by construction, the gradients at those interior points have a large magnitude, which in turn implies that the KKT gap is large. (That the average is a local minimum is coincidental.) A polynomial consistent with the above gradients is $90x-298.\bar{3} x^{2}+416.\bar{6}x^{3}-208.\bar{3}x^{4}$, leading to~\Cref{prop:4cycle}; we note that the above sequence of iterates is not realizable through an algorithm such as gradient descent.

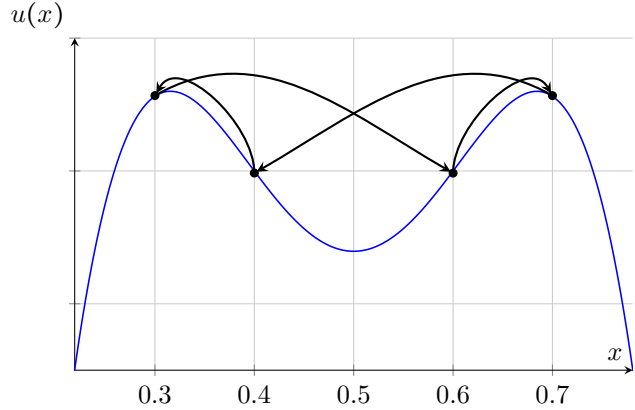
\begin{figure}
    \centering
    \begin{tikzpicture}
  \begin{axis}[
    width=9cm, height=6cm,
    domain=0.2:0.8, samples=1000,
    axis lines=middle,
    xlabel={$x$}, 
    xlabel style={font=\small},   % << smaller x-label
    xticklabel style={font=\small},  % smaller x-tick labels
    ylabel style={font=\small},
    ylabel={$u(x)$},
    ylabel style={at={(axis description cs:0,1)}, anchor=south east},
    grid=both, minor grid style={gray!20}, major grid style={gray!40},
    ymin=9.3, ymax=9.8,
    yticklabels={},
    xtick={0.2,0.3,0.4,0.5,0.6,0.7,0.8}
  ]

    % polynomial
    \addplot[semithick, blue, no markers] 
      {90*x - 298.334*x^2 + 416.667*x^3 - 208.333*x^4};

    % define the four points explicitly
    \pgfmathsetmacro{\yone}{90*0.3 - 298.334*0.3^2 + 416.667*0.3^3 - 208.333*0.3^4}
    \pgfmathsetmacro{\ytwo}{90*0.4 - 298.334*0.4^2 + 416.667*0.4^3 - 208.333*0.4^4}
    \pgfmathsetmacro{\ythree}{90*0.6 - 298.334*0.6^2 + 416.667*0.6^3 - 208.333*0.6^4}
    \pgfmathsetmacro{\yfour}{90*0.7 - 298.334*0.7^2 + 416.667*0.7^3 - 208.333*0.7^4}

    % plot them and save coordinates
    \addplot[only marks, mark size=1.5pt, mark=*] coordinates {(0.3,\yone)} node[coordinate] (P1) {};
    \addplot[only marks, mark size=1.5pt, mark=*] coordinates {(0.4,\ytwo)} node[coordinate] (P2) {};
    \addplot[only marks, mark size=1.5pt, mark=*] coordinates {(0.6,\ythree)} node[coordinate] (P3) {};
    \addplot[only marks, mark size=1.5pt, mark=*] coordinates {(0.7,\yfour)} node[coordinate] (P4) {};

    % arrows between points
    \draw[->, thick, >=stealth, shorten >=1pt] (P3) to[out=90,in=120] (P4);
    \draw[->, thick, >=stealth, shorten >=1pt] (P4) to[out=150,in=30] (P2);
    \draw[->, thick, >=stealth, shorten >=1pt] (P2) to[out=90,in=60] (P1);
    \draw[->, thick, >=stealth, shorten >=1pt] (P1) to[out=30,in=150] (P3);

  \end{axis}
\end{tikzpicture}
    \caption{The example corresponding to~\Cref{prop:4cycle}, demonstrating that having zero regret, let alone sublinear, has no implications concerning convergence in terms of KKT gap.}
    \label{fig:4cycle}
\end{figure}

\subsection{Potential games}
\label{sec:potential}

We first analyze convergence in potential games. A key property, which paves the way for~\Cref{theorem:altern-RM+}, is that, for a fixed utility vector, $\RMplus$ has a one-step improvement property; the lemma below takes the perspective of a single, arbitrary player in the game.

\begin{restatable}[One-step improvement for $\RMplus$]{lemma}{onestepRMplus}
    \label{lemma:onestepRM+}
    For any $\vR \in \R^\cA_{\geq 0}$ and $\vu \in \R^\cA$, we define $\vx \defeq \nicefrac{\vR}{\| \vR \|_1}$; if $\vR = \vec{0}$, $\vx \in \Delta(\cA)$ can be arbitrary. If $\vR' \defeq [ \vR + \vu - \langle \vx, \vu \rangle \vec{1}]^+ \neq \vec{0}$ and $\vx' \defeq \nicefrac{\vR'}{\| \vR' \|_1}$, then
    \begin{equation}
        \label{eq:improve-RM+}
        \langle \vx' - \vx, \vu \rangle \geq \frac{1}{\| \vR' \|_1} \left( \max_{a \in \cA} \vu[a] - \langle \vx, \vu \rangle \right)^2 = \frac{1}{ \|\vr'\|_1}  \BRgap(\vx, \vu)^2.
    \end{equation}
    If $\vR' = \vec{0}$, then $\langle \vx, \vu \rangle = \langle \vx', \vu \rangle \geq \max_{a \in \cA} \vu[a]$.
\end{restatable}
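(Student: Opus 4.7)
The plan is to reduce the inequality to a purely algebraic statement about the inner product $\langle \vR', \vu \rangle$. Setting $c \defeq \langle \vx, \vu \rangle$ and $\tilde{\vu} \defeq \vu - c\vec{1}$, note that $\vx$ has zero expectation under $\tilde{\vu}$, so $\BRgap(\vx, \vu) = \max_{a \in \cA} \tilde{\vu}[a]$. Because $\vx' = \vR'/\|\vR'\|_1$, the improvement can be rewritten as
\begin{equation*}
    \langle \vx' - \vx, \vu \rangle = \frac{\langle \vR', \vu \rangle}{\|\vR'\|_1} - c = \frac{\langle \vR', \tilde{\vu} \rangle}{\|\vR'\|_1},
\end{equation*}
so the target inequality becomes $\langle \vR', \tilde{\vu} \rangle \geq (\max_a \tilde{\vu}[a])^2$. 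First I would dispose of the trivial cases: if $\vR = \vec{0}$ then $\vR'[a] = [\tilde{\vu}[a]]^+$ and the argument collapses to the same estimate; if $\vR' = \vec{0}$ then $\vR[a] + \tilde{\vu}[a] \leq 0$ for every $a$, which combined with $\vR \geq 0$ forces $\max_a \vu[a] \leq c$, making $\BRgap(\vx, \vu) = 0$ so the stated equality trivially holds.

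For the main case, I partition the actions into the support $S \defeq \{a : \vR[a] + \tilde{\vu}[a] \geq 0\}$ of $\vR'$ and its complement. On $S$, the truncation is inactive, so $\vR'[a]\tilde{\vu}[a] = \vR[a]\tilde{\vu}[a] + \tilde{\vu}[a]^2$; outside $S$, $\vR'[a]\tilde{\vu}[a] = 0$. Summing,
\begin{equation*}
    \langle \vR', \tilde{\vu} \rangle = \sum_{a \in S} \vR[a] \tilde{\vu}[a] + \sum_{a \in S} \tilde{\vu}[a]^2.
\end{equation*}
Now I invoke the zero-expectation identity $\langle \vR, \tilde{\vu}\rangle = \|\vR\|_1 \langle \vx, \tilde{\vu}\rangle = 0$ to rewrite $\sum_{a \in S} \vR[a]\tilde{\vu}[a] = -\sum_{a \notin S} \vR[a] \tilde{\vu}[a]$. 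For every $a \notin S$ one has $\tilde{\vu}[a] < -\vR[a] \leq 0$, so $\vR[a]\tilde{\vu}[a] \leq 0$ and the relocated sum is nonnegative. Hence $\langle \vR', \tilde{\vu}\rangle \geq \sum_{a \in S} \tilde{\vu}[a]^2$.

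The final step is to observe that the action $a^{\star} \in \argmax_a \vu[a]$ lies in $S$: since $\vx$ is a distribution, $\tilde{\vu}[a^{\star}] = \max_a \vu[a] - c \geq 0$, so $\vR[a^{\star}] + \tilde{\vu}[a^{\star}] \geq 0$ automatically. Therefore $\sum_{a \in S} \tilde{\vu}[a]^2 \geq \tilde{\vu}[a^{\star}]^2 = \BRgap(\vx, \vu)^2$, which yields the claimed bound after dividing by $\|\vR'\|_1$. I expect the main conceptual obstacle to be recognizing that the (a priori harmful) truncation outside $S$ is in fact helpful here, because the orthogonality $\langle \vR, \tilde{\vu}\rangle = 0$ converts the ``discarded mass'' into a nonnegative contribution rather than a loss; everything else is straightforward bookkeeping.
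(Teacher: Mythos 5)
Your proof is correct and follows essentially the same route as the paper's: both arguments reduce the claim to a coordinate-wise sign analysis of the truncated update, using the orthogonality $\langle \vR, \vu - \langle \vx, \vu\rangle \vec{1}\rangle = 0$ and the fact that the best-response action is never truncated (so it contributes exactly $\BRgap(\vx,\vu)^2$) while every other coordinate's contribution is nonnegative. The only cosmetic difference is that you work directly with $\langle \vR', \tilde{\vu}\rangle$ and a support/complement split, whereas the paper clears denominators and analyzes $\vdelta \defeq \vR' - \vR$ term by term; the two are algebraically identical.
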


The left-hand side of~\eqref{eq:improve-RM+} reflects the improvement in utility obtained by updating $\vx$ to $\vx'$; in particular, $\max_{a \in \cA} \vu[a] - \langle \vx, \vu \rangle$ is the best-response gap of $\vx$ with respect to $\vu$. \Cref{lemma:onestepRM+} implies that the utility is monotonically increasing---unless the current strategy is already a best response to $\vu$. Furthermore, so long as the regret vector is \emph{small enough}, the improvement is bound to be substantial, being proportional to the squared best-response gap. It is worth noting that~\Cref{lemma:onestepRM+} holds no matter the initial regret vector $\vR$, subject to $\vR \in \R_{\geq 0}$; this invariance always holds for $\RMplus$ (by definition of the algorithm in~\Cref{line:regupdate+}), but that is not so for $\RM$ (\emph{cf.}~\Cref{lemma:onestepRM}). 

The proof of~\Cref{lemma:onestepRM+} proceeds by expressing~\eqref{eq:improve-RM+} in terms of the regret vectors and performing simple algebraic manipulations; it appears in~\Cref{appendix:proofs-RMplus}. Furthermore, as we point out in~\Cref{lemma:lowerRM+}, \Cref{lemma:onestepRM+} is in a certain sense tight.

\paragraph{Convergence in potential games} We now employ \Cref{lemma:onestepRM+} to show that alternating $\RMplus$ quickly converges to approximate Nash equilibria in potential games. Using the fact that the game admits a potential (per~\Cref{def:potential}), we have that for any round $t \in [T]$, $\Phi(\vx_1^{(t+1)}, \dots, \vx^{(t+1)}_{n} ) - \Phi(\vx_1^{(t)}, \dots, \vx^{(t)}_{n}) \geq \sum_{i=1}^n \frac{1}{\| \vR^{(t)}_i \|_1} \BRgap_i(\vx_i^{(t)}, \vu_i^{(t)})^2 \mathbbm{1} \{ \BRgap_i(\vx_i^{(t)}, \vu_i^{(t)}) > \epsilon \}$, where we used~\Cref{lemma:onestepRM+} together with the assumption that only players with more than $\epsilon$ best-response gap update their strategies. The telescopic summation over $t = 1, \dots, T$ yields
\begin{equation}
    \label{eq:telescopic}
    \Phimax \geq \sum_{t=1}^T \sum_{i=1}^n \frac{1}{ \| \vr_i^{(t)} \|_1 } \BRgap_i(\vx_i^{(t)}, \vu_i^{(t)})^2 \mathbbm{1} \{ \BRgap_i(\vx_i^{(t)}, \vu_i^{(t)} ) > \epsilon \},
\end{equation}
where $\Phimax$ denotes the range of the potential function. If in every round $t \in [T]$ there is a player $i \in [n]$ such that $\BRgap_i(\vx_i^{(t)}, \vu_i^{(t)}) > \epsilon$, we have $\Phimax \geq \sum_{t=1}^T \frac{1}{m \sqrt{t}} \epsilon^2 \geq \frac{1}{m} \epsilon^2 \sqrt{T}$, where we used that $\|\vR_i^{(t)} \|_1 \leq \sqrt{m} \|\vR_i^{(t)} \|_2 \leq m \sqrt{T}$ (\Cref{prop:noregret-RM}). We thus arrive at the following result.

\begin{theorem}
    \label{theorem:altern-RM+}
    In any potential game, $\epsilon$-lazy alternating $\RMplus$ requires at most $1 + \frac{ ( m \Phimax )^2 }{\epsilon^4}$ rounds to converge to an $\epsilon$-Nash equilibrium. More broadly, if $\| \vR_i^{(t)} \|_1 \leq C(n, m) t^{\alpha}$ for all $i \in [n]$ and some $\alpha \in [0, \nicefrac{1}{2}]$, it requires $1 + \frac{ ( C(n, m) \Phimax )^\beta }{\epsilon^{2\beta}}$ rounds, where $\beta \defeq \nicefrac{1}{1 - \alpha}$.
\end{theorem}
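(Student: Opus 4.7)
The plan is to put together the three ingredients already isolated in the preceding discussion: the one-step improvement lemma (\Cref{lemma:onestepRM+}), the potential-game identity (\Cref{def:potential}), and the $\ell_2$-regret bound (\Cref{prop:noregret-RM}). Concretely, I would fix a round $t$ and walk through the round-robin updates $i = 1, \dots, n$. For each player $i$ that does update (which, in the $\epsilon$-lazy scheme, is exactly those whose best-response gap on the current sub-iterate exceeds $\epsilon$), \Cref{lemma:onestepRM+} applied with $\vR = \vR_i^{(t-1)}$, $\vu = \vu_i^{(t)}$ and $\vR' = \vR_i^{(t)}$ yields $\langle \vx_i^{(t+1)} - \vx_i^{(t)}, \vu_i^{(t)} \rangle \geq \BRgap_i(\vx_i^{(t)}, \vu_i^{(t)})^2 / \|\vR_i^{(t)}\|_1$. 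Since $\vu_i^{(t)} = \nabla_{\vx_i}\Phi$ at the appropriate interleaved iterate, the potential-game property turns each such utility gain into an equal gain in $\Phi$. Summing over the players inside round $t$ and then telescoping across $t = 1, \dots, T$ is exactly the displayed bound
\[
\Phimax \;\geq\; \sum_{t=1}^T \sum_{i=1}^n \frac{\BRgap_i(\vx_i^{(t)}, \vu_i^{(t)})^2}{\|\vR_i^{(t)}\|_1} \, \mathbbm{1}\!\left\{\BRgap_i(\vx_i^{(t)},\vu_i^{(t)}) > \epsilon\right\}.
\]

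For the first claim, I would argue by contradiction: suppose that after $T$ rounds no $\epsilon$-NE has been reached. Then in each round there is at least one player with $\BRgap_i > \epsilon$, so each round contributes at least $\epsilon^2 / \max_i \|\vR_i^{(t)}\|_1$ to the right-hand side. The bound $\|\vR_i^{(t)}\|_1 \leq \sqrt{m}\, \|[\vR_i^{(t)}]^+\|_2 \leq m\sqrt{t}$ (from \Cref{prop:noregret-RM}, since the regrets of $\RMplus$ are already non-negative) yields $\Phimax \geq \sum_{t=1}^T \epsilon^2 / (m\sqrt{t}) \geq \epsilon^2 \sqrt{T}/m$, and solving for $T$ gives $T \leq (m\Phimax)^2/\epsilon^4$. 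The ``$+1$'' in the bound accounts for the round on which the $\epsilon$-NE is actually certified.

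For the refined statement, I would substitute $\|\vR_i^{(t)}\|_1 \leq C(n,m)\, t^\alpha$ into the same telescoping inequality, getting $\Phimax \geq (\epsilon^2 / C(n,m)) \sum_{t=1}^T t^{-\alpha}$. Lower-bounding the partial sum by $\int_1^{T} x^{-\alpha}\, dx = (T^{1-\alpha}-1)/(1-\alpha)$ and solving for $T$ gives the claimed $(C(n,m)\Phimax/\epsilon^2)^{\beta}$ rate with $\beta = 1/(1-\alpha)$, once the $(1-\alpha)$ factor is absorbed into constants (it is at most $1$).

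The step I expect to require the most care is the interleaving inside a single round: in the round-robin update, the utility vector $\vu_i^{(t)}$ that player $i$ sees is evaluated at $(\vx_{i'<i}^{(t+1)}, \vx_{i'\geq i}^{(t)})$, so in applying \Cref{lemma:onestepRM+} to player $i$ the ``current'' strategies of players $i'<i$ have already moved. What makes the telescoping go through unchanged is that the potential property still equates each player's unilateral utility gain with the corresponding increment in $\Phi$, so the per-player improvements stack inside round $t$ into a single net change $\Phi(\vx^{(t+1)}) - \Phi(\vx^{(t)})$. Everything else is bookkeeping.
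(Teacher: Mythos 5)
Your proposal is correct and follows essentially the same route as the paper: apply \Cref{lemma:onestepRM+} to each updating player within a round, convert the utility gains to potential gains via \Cref{def:potential}, telescope to obtain $\Phimax \geq \sum_{t}\sum_i \BRgap_i(\vx_i^{(t)},\vu_i^{(t)})^2/\|\vR_i^{(t)}\|_1$ over the active players, and then invoke the regret bound $\|\vR_i^{(t)}\|_1 \leq m\sqrt{t}$ (or $C(n,m)t^\alpha$) to solve for $T$. Your handling of the interleaved sub-iterates and the integral bound for the $\alpha$-parameterized case match the paper's argument up to immaterial constants.
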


This provides a convergence rate of $T^{-1/4}$ to Nash equilibria. Notwithstanding~\Cref{prop:4cycle}, an intriguing aspect of~\Cref{theorem:altern-RM+} is that it \emph{connects convergence to Nash equilibria to the regret} of $\RMplus$. In particular, if $\RMplus$ did not have the no-regret property, meaning that $\|\vr_i^{(t)}\|_1 = \Omega(t)$, one could only prove an exponential bound since $\sum_{t=1}^T \nicefrac{1}{t} = \Theta(\log T)$. At the other end of the spectrum, when each player accumulates constant regret, \Cref{theorem:altern-RM+} implies an improved convergence rate of $T^{-1/2}$. It is an open question whether $\RMplus$, with or without alternation, can experience $\Omega(\sqrt{T})$ regret in potential games.

One caveat of lazy alternating $\RMplus$ prescribed by~\Cref{theorem:altern-RM+} is that the desired precision should be known in advance in order to execute the algorithm. We address this limitation in~\Cref{theorem:nonlazy}, where we show that the usual, non-lazy version also converges after $O_\epsilon(1/\epsilon^4)$ rounds, albeit at the cost of introducing an additional dependence in the total number of iterations.

\paragraph{Faster rates using discounting} Next, we refine~\Cref{theorem:altern-RM+} through the use of \emph{discounted} $\RMplus$, which means that the regret vector is multiplied by a discount factor $\alpha^{(t)} \in (0, 1]$ in each round; we spell out $\DRMplus$ in~\Cref{alg:regret_matching_plus-discounting}. This class of algorithms was introduced by~\citet{Brown19:Solving}, who showed that discounting drastically improves empirical performance in zero-sum games. Our next result shows that $\DRMplus$ with geometric discounting, that is, $\alpha^{(t)} = 1 - \gamma$ for some time-invariant $\gamma \in (0, 1)$, attains a rate of $T^{-1/2}$ to Nash equilibria in potential games; this is considerably faster than the $T^{-1/4}$ rate for alternating $\RMplus$ guaranteed by~\Cref{theorem:altern-RM+}. The basic reason is that $\DRMplus$---with geometric discounting---maintains the norm of the regret vector bounded by $\sqrt{ \nicefrac{m}{\gamma}}$ (\Cref{lemma:DRM} and~\Cref{cor:boundedreg}), while still enjoying the one-step improvement property of~\Cref{lemma:onestepRM+}.

\begin{restatable}{corollary}{DRMcor}
    \label{cor:discount-RM+}
    In any potential game, $\epsilon$-lazy alternating $\DRMplus$ with discount factor $1 - \gamma \in (0, 1)$ requires at most $1 + \frac{ m \Phimax }{\epsilon^2 \sqrt{\gamma} }$ rounds to converge to an $\epsilon$-Nash equilibrium.
\end{restatable}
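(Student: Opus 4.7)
The plan is to replicate the telescoping argument from~\Cref{theorem:altern-RM+}, the only structural change being that the $\ell_1$ norm of each player's regret vector is now \emph{uniformly} bounded throughout the trajectory rather than allowed to grow as $\sqrt{T}$. This places us in the $\alpha = 0$ regime of~\Cref{theorem:altern-RM+} and upgrades the $T^{-1/4}$ rate to the advertised $T^{-1/2}$ rate.

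I would first verify that~\Cref{lemma:onestepRM+} applies \emph{verbatim} to $\DRMplus$. Within a single round of $\DRMplus$, the discounted regret $\tilde{\vR}^{(t-1)} \defeq (1-\gamma)\vR^{(t-1)}$ still lies in $\R^{\cA}_{\geq 0}$, and the strategy played is $\vx^{(t)} = \tilde{\vR}^{(t-1)}/\|\tilde{\vR}^{(t-1)}\|_1$ (the geometric rescaling cancels under normalization); meanwhile, the post-update regret has exactly the form $\vR^{(t)} = [\tilde{\vR}^{(t-1)} + \vu^{(t)} - \langle \vx^{(t)}, \vu^{(t)}\rangle \vec{1}]^+$ required by the lemma. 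Instantiating~\Cref{lemma:onestepRM+} with ``initial regret'' $\tilde{\vR}^{(t-1)}$ then gives, for each player $i$,
\begin{equation*}
    \langle \vx_i^{(t+1)} - \vx_i^{(t)}, \vu_i^{(t)} \rangle \;\geq\; \frac{1}{\| \vR_i^{(t)} \|_1}\, \BRgap_i(\vx_i^{(t)}, \vu_i^{(t)})^2 .
\end{equation*}
Combined with the potential property of $\Phi$ and lazy alternation, each sub-step either leaves the iterate unchanged (because $\BRgap_i \leq \epsilon$) or increases $\Phi$ by at least this amount.

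To close the argument, I would appeal to the uniform regret bound $\|\vR_i^{(t)}\|_2 \leq \sqrt{m/\gamma}$ supplied by~\Cref{lemma:DRM} and~\Cref{cor:boundedreg}, which yields $\|\vR_i^{(t)}\|_1 \leq \sqrt{m}\,\|\vR_i^{(t)}\|_2 \leq m/\sqrt{\gamma}$. Substituting into the analogue of~\eqref{eq:telescopic}, every round in which some player's best-response gap exceeds $\epsilon$ must increase $\Phi$ by at least $\epsilon^2 \sqrt{\gamma}/m$; summing against $\Phimax$ therefore bounds the number of such rounds by $m\Phimax/(\epsilon^2 \sqrt{\gamma})$, with the ``$+1$'' accounting for the subsequent round at which every player's best-response gap is at most $\epsilon$, i.e., an $\epsilon$-Nash equilibrium. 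The only nontrivial ingredient is the uniform regret bound of~\Cref{lemma:DRM} itself---intuitively, the contraction factor $(1-\gamma)^2$ on $\|\vR^{(t-1)}\|_2^2$ balances the at most $m$ added per step by the Blackwell-like non-positive cross term---once that is in hand, the rest is a direct specialization of the telescoping argument of~\Cref{theorem:altern-RM+} to the case of constant regret norm.
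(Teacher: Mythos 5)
Your proof is correct and follows the paper's own argument essentially step for step: adapt the one-step improvement of \Cref{lemma:onestepRM+} to $\DRMplus$ (rescaling the regret vector leaves the next strategy unchanged), invoke the uniform bound $\| \vr_i^{(t)} \|_2 \leq \sqrt{\nicefrac{m}{\gamma}}$ from \Cref{cor:boundedreg} so that $\| \vr_i^{(t)} \|_1 \leq \nicefrac{m}{\sqrt{\gamma}}$, and telescope the per-round improvements against $\Phimax$. One small imprecision: \Cref{alg:regret_matching_plus-discounting} applies the discount \emph{outside} the positive part, i.e., $\vR^{(t)} = (1-\gamma)\,[\vR^{(t-1)} + \vg^{(t)}]^+$, rather than to $\vR^{(t-1)}$ alone inside it as you write; the two variants play identical strategies (their internal states differ by the constant factor $1-\gamma$), so your instantiation of the lemma goes through, up to a $\nicefrac{1}{(1-\gamma)}$ factor in the denominator that the paper's own proof also elides.
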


To our knowledge, this is the first time that discounting yields a provable, worst-case improvement over the bound obtained for non-discounted $\RMplus$.

\paragraph{Regret matching} Before we switch gears to the more general constrained optimization setting, it is instructive to examine the behavior of $\RM$. It turns out that one can adjust~\Cref{lemma:onestepRM+}, but with a crucial caveat: the one-step improvement property is now only \emph{conditional}, as specified below.

\begin{restatable}{lemma}{onestepRM}
    \label{lemma:onestepRM}
    For any $\vR \in \R^\cA_{\geq 0}$ and $\vu \in \R^\cA$, we define $\vx \defeq \nicefrac{\vtheta}{\| \vtheta \|_1}$, where $\vtheta \defeq \max(\vR, \vec{0})$; if $\vtheta = \vec{0}$, $\vx \in \Delta(\cA)$ can be arbitrary. If $\vR' \defeq \vR + \vu - \langle \vx, \vu \rangle \vec{1}$ and $\vx' \defeq \nicefrac{\vtheta'}{\| \vtheta' \|_1}$, where $\vtheta' = \max(\vR', \vec{0}) \neq \vec{0}$, then%$\langle \vx' - \vx, \vu \rangle \geq \frac{1}{\| \vtheta' \|_1} \|\vtheta' - \vtheta \|_2^2 \geq \frac{1}{\| \vtheta' \|_1} \left( \max_{a \in \cA} \vu[a] - \langle \vx, \vu \rangle \right)^2 \mathbbm{1} \left\{ \vR[a] \geq 0 \right\}$,
    \begin{equation}
        \label{eq:improve-RM}
        \langle \vx' - \vx, \vu \rangle \geq \frac{1}{\| \vtheta' \|_1} \|\vtheta' - \vtheta \|_2^2 \geq \frac{1}{\| \vtheta' \|_1} \left( \vu[a] - \langle \vx, \vu \rangle \right)^2 \mathbbm{1} \left\{ \vR[a] \geq 0 \right\},
    \end{equation}
    where $a \in \argmax_{a' \in \cA} \vu[a']$. If $\vtheta' = \vec{0}$, then $\langle \vx, \vu \rangle = \langle \vx', \vu \rangle \geq \vu[a]$.
\end{restatable}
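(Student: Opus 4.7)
The plan is to follow the same strategy as for \Cref{lemma:onestepRM+}, adapted to account for the fact that in $\RM$ the cumulative regret $\vR'$ is kept in $\R^\cA$ rather than being truncated componentwise. The first step is to rewrite the improvement $\langle \vx' - \vx, \vu\rangle$ in a form that decouples it from the normalization. Setting $\vd \defeq \vu - \langle \vx, \vu\rangle \vec{1}$, the fact that $\vx \in \Delta(\cA)$ gives $\langle \vx, \vd\rangle = 0$, hence $\langle \vtheta, \vd\rangle = \|\vtheta\|_1 \langle \vx, \vd\rangle = 0$. Multiplying through by $\|\vtheta'\|_1 > 0$, one obtains $\|\vtheta'\|_1 \langle \vx' - \vx, \vu\rangle = \langle \vtheta', \vd\rangle = \langle \vtheta' - \vtheta, \vd\rangle$, so the first inequality in \eqref{eq:improve-RM} reduces to the coordinate-wise claim that $\langle \vtheta' - \vtheta, \vd\rangle \geq \|\vtheta' - \vtheta\|_2^2$.

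The workhorse is then a case analysis on the sign of $\vR'[a] = \vR[a] + \vd[a]$ for each action $a$ (recalling that $\vR[a] \geq 0$). When $\vR'[a] \geq 0$, we have $\vtheta'[a] - \vtheta[a] = \vd[a]$, so that coordinate contributes exactly $(\vtheta'[a] - \vtheta[a])^2$ to $\langle \vtheta' - \vtheta, \vd\rangle$. When $\vR'[a] < 0$, instead $\vtheta'[a] - \vtheta[a] = -\vR[a]$ and $\vd[a] = \vR'[a] - \vR[a] < -\vR[a] \leq 0$, so the contribution is $-\vR[a]\, \vd[a] \geq \vR[a]^2 = (\vtheta'[a] - \vtheta[a])^2$. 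Summing these bounds over $a$ yields the first inequality; the reasoning essentially matches the one sketched for \Cref{lemma:onestepRM+}, except interpreted through $\vtheta = \max(\vR, \vec{0})$ rather than $\vtheta = \vR$, which is what introduces the extra case.

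For the second inequality, fix $a \in \argmax_{a'} \vu[a']$, for which $\vd[a] = \vu[a] - \langle \vx, \vu\rangle \geq 0$. Conditional on $\vR[a] \geq 0$---which is where the indicator $\mathbbm{1}\{\vR[a] \geq 0\}$ becomes genuinely important, in view of the eventual use of this lemma in settings where $\vR$ may have negative coordinates---we get $\vR'[a] \geq 0$, placing us in the first case above so that $\vtheta'[a] - \vtheta[a] = \vd[a]$; the trivial bound $\|\vtheta' - \vtheta\|_2^2 \geq (\vtheta'[a] - \vtheta[a])^2$ finishes the argument. For the degenerate case $\vtheta' = \vec{0}$, the inequalities $\vR[a'] + \vd[a'] \leq 0$ for every $a' \in \cA$, combined with $\vR \geq \vec{0}$, force $\vu[a'] \leq \langle \vx, \vu\rangle$; specialising to $a \in \argmax_{a'} \vu[a']$ gives $\langle \vx, \vu\rangle \geq \vu[a]$, and the algorithmic convention $\vx' = \vx$ whenever $\vtheta' = \vec{0}$ supplies the claimed equality. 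The only genuinely tricky conceptual point is recognising why, unlike for $\RMplus$, the second inequality in \eqref{eq:improve-RM} must be conditioned on $\vR[a] \geq 0$: not truncating means that a large best-response gap at $a$ need not translate into improvement once $\vR[a]$ has been driven negative, and this is precisely the mechanism the authors exploit in their exponential lower bound.
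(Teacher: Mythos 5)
Your argument is correct for the statement as literally written, and it follows essentially the same route as the paper's: reduce the improvement to a coordinate-wise inequality on the (truncated) regret vectors via the orthogonality $\langle \vtheta, \vd \rangle = \|\vtheta\|_1 \langle \vx, \vd \rangle = 0$, then case-split on signs. If anything your write-up is slightly stronger than the paper's, since you establish the intermediate bound $\langle \vtheta' - \vtheta, \vd \rangle \geq \|\vtheta' - \vtheta\|_2^2$ explicitly, whereas the paper's proof jumps directly to the final inequality of~\eqref{eq:improve-RM}.

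One caveat. You take the hypothesis $\vR \in \R^\cA_{\geq 0}$ at face value and therefore only verify the two cases with $\vR[a] \geq 0$. That hypothesis is almost surely a typo inherited from~\Cref{lemma:onestepRM+}: under it the indicator $\mathbbm{1}\{\vR[a] \geq 0\}$ is identically one and the conditionality of the lemma---which, as you yourself observe, is its entire point and the mechanism behind the lower bound---becomes vacuous; moreover, the paper's own proof explicitly treats $\vR[a] < 0$. To prove the lemma as intended (and as used, since $\RM$'s cumulative regrets do go negative) you need two further cases, which go through with exactly your machinery: if $\vR[a] < 0$ and $\vR'[a] \geq 0$, then $\vtheta[a] = 0$ and $\vtheta'[a] - \vtheta[a] = \vR[a] + \vd[a]$ with $0 \leq \vR[a] + \vd[a] \leq \vd[a]$, so the contribution $(\vR[a] + \vd[a])\vd[a] \geq (\vR[a] + \vd[a])^2$; if $\vR[a] < 0$ and $\vR'[a] < 0$, the coordinate contributes zero on both sides. (For genuinely negative $\vR$ the degenerate claim for $\vtheta' = \vec{0}$ would also need the sign condition on the maximizing coordinate, but the main display is unaffected.)
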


We see that $\RM$'s one-step improvement is conditional on the regret accumulated thus far by a best-response action to be nonnegative. This is not an artifact of our analysis; it alludes to a fundamental discrepancy between $\RM$ and $\RMplus$ that will be formally established later on (\Cref{theorem:slowRM}). The main issue with $\RM$ can be seen as follows. If we consider a utility vector $\vu = (1, 0)$ and the initial regret vector is, say, $(-R, R)$, it will take $\RM$ many iterations---proportionally to the magnitude of $R > 0$---to finally change strategies, although this will eventually happen with a stationary utility.

\subsection{Constrained optimization and simultaneous updates}
\label{sec:nonlinear}

We now treat the more general setting where we are maximizing an $L$-smooth function $u$.

\paragraph{Single simplex} We begin with the special case of a single probability simplex, $\cX = \Delta(\cA)$. Our first goal is to adapt~\Cref{lemma:onestepRM+}. The key challenge is that $\RMplus$ does \emph{not} have a one-step improvement, unlike algorithms such as gradient descent (for a small enough learning rate), even if one initializes $\RMplus$ close to a KKT point. But we observe that if the norm of the regret vector is \emph{large enough}---having small regrets is an obstacle here, in contrast to~\Cref{sec:potential}---we are guaranteed a one-step improvement in terms of the value of the function (\Cref{lemma:R-L}).

To do so, we will use the basic quadratic bound, which yields $u(\vx') \geq u(\vx) + \langle \nabla u(\vx), \vx' - \vx \rangle - \frac{L}{2} \|\vx - \vx'\|_2^2$; we think of $\vx'$ as the updated strategy starting from $\vx$. Using a slight refinement of~\Cref{lemma:onestepRM+}, we first have the lower bound $\langle \vx' - \vx, \nabla u(\vx) \rangle \geq \frac{1}{\| \vR' \|_1} \|\vR - \vR' \|_2^2$ (\Cref{lemma:refinement}). 

Also, we observe that $\|\vx - \vx' \|_1 \leq \|\vR - \vR'\|_1 \left( \frac{1}{ \|\vR\|_1} + \frac{1}{\|\vR' \|_1} \right)$ (\Cref{lemma:closeness}). We are now ready to establish a \emph{conditional} one-step improvement when the regret vector has a \emph{sufficiently large} norm.

\begin{restatable}{lemma}{condiRMplus}
    \label{lemma:R-L}
    Let $u$ be an $L$-smooth function over $\Delta(\cA)$. For any $\vR \in \R^\cA_{\geq 0}$ with $\vr \neq \vec{0}$, we define $\vx \defeq \nicefrac{\vR}{\|\vR\|_1}$. Further, let $\vR' \defeq [ \vR + \nabla u(\vx) - \langle \vx, \nabla u(\vx) \rangle \vec{1} ]^+ \neq \vec{0}$ and $\vx' \defeq \nicefrac{\vR'}{ \|\vR'\|_1}$. If $\|\vR'\|_2 \geq \max\{ 2 m, 9 m L \}$, then
    \begin{equation*}
        u(\vx') - u(\vx) \geq \frac{1}{2 \|\vR'\|_1} \left( \max_{\vxstar \in \Delta(\cA)} \langle \vxstar - \vx, \nabla u(\vx)  \rangle  \right)^2.
    \end{equation*}
\end{restatable}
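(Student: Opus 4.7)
The plan is to combine $L$-smoothness of $u$ with the refined one-step improvement of~\Cref{lemma:refinement} and the displacement estimate of~\Cref{lemma:closeness}, then exploit the hypothesis on $\|\vR'\|_2$ to show that the smoothness penalty is dominated by the first-order improvement. Concretely, $L$-smoothness gives
\begin{equation*}
    u(\vx') - u(\vx) \;\geq\; \langle \nabla u(\vx),\, \vx' - \vx \rangle - \frac{L}{2}\,\|\vx - \vx'\|_2^2,
\end{equation*}
and~\Cref{lemma:refinement} lower-bounds the linear term by $\frac{1}{\|\vR'\|_1}\|\vR' - \vR\|_2^2$. I would show that, under the hypothesis, the quadratic term is at most $\frac{1}{2\|\vR'\|_1}\|\vR' - \vR\|_2^2$, so that half of the linear improvement survives.

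For the quadratic estimate, chaining $\|\vx-\vx'\|_2 \leq \|\vx-\vx'\|_1$, \Cref{lemma:closeness}, and $\|\vR-\vR'\|_1 \leq \sqrt{m}\,\|\vR-\vR'\|_2$ yields
\begin{equation*}
    \|\vx - \vx'\|_2^2 \;\leq\; m\,\|\vR - \vR'\|_2^2 \left( \frac{1}{\|\vR\|_1} + \frac{1}{\|\vR'\|_1} \right)^2.
\end{equation*}
The normalization $|\langle \vx - \vx', \nabla u(\vx)\rangle| \leq 1$ implies that each coordinate of the instantaneous regret lies in $[-1,1]$; checking both the truncated and non-truncated cases of the $\RMplus$ update, this forces $\|\vR'-\vR\|_\infty \leq 1$, so $\|\vR'-\vR\|_1 \leq m$ and hence $\|\vR\|_1 \geq \|\vR'\|_1 - m$. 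The hypothesis $\|\vR'\|_2 \geq 2m$ then gives $\|\vR'\|_1 \geq 2m$ and therefore $\|\vR\|_1 \geq \|\vR'\|_1/2$, which delivers $\frac{1}{\|\vR\|_1}+\frac{1}{\|\vR'\|_1} \leq \frac{3}{\|\vR'\|_1}$. Substituting back, the smoothness penalty is at most $\frac{9Lm}{2\|\vR'\|_1^2}\|\vR-\vR'\|_2^2$, and the second part of the hypothesis, $\|\vR'\|_2 \geq 9mL$, forces $\|\vR'\|_1 \geq 9mL$ and hence the desired bound $\frac{1}{2\|\vR'\|_1}\|\vR-\vR'\|_2^2$.

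Combining the two halves yields $u(\vx') - u(\vx) \geq \frac{1}{2\|\vR'\|_1}\|\vR' - \vR\|_2^2$, and it remains to lower-bound $\|\vR'-\vR\|_2^2$ by $\BRgap(\vx, \nabla u(\vx))^2$. Choosing $a^\star \in \argmax_a \nabla u(\vx)[a]$, the coordinate update $\nabla u(\vx)[a^\star] - \langle \vx, \nabla u(\vx) \rangle$ is nonnegative by optimality of $a^\star$; combined with $\vR[a^\star] \geq 0$, no truncation occurs at $a^\star$, so $\vR'[a^\star] - \vR[a^\star]$ equals the best-response gap, giving $\|\vR'-\vR\|_2 \geq \BRgap(\vx, \nabla u(\vx))$. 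The main obstacle is precisely the control of the smoothness penalty: without a lower bound on $\|\vR'\|_2$, the $\RMplus$ step $\|\vx-\vx'\|$ can be too large and the quadratic error swamps the linear improvement; the hypothesis is exactly what forces the parameter-free dynamics into a small-step regime in which the one-step improvement property is recovered.
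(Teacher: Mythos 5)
Your proposal is correct and follows essentially the same route as the paper's proof: the quadratic bound from $L$-smoothness, \Cref{lemma:refinement} for the linear term, \Cref{lemma:closeness} combined with $\|\vR-\vR'\|_1 \leq \sqrt{m}\|\vR-\vR'\|_2$ and $\|\vR\|_1 \geq \|\vR'\|_1 - m \geq \tfrac{1}{2}\|\vR'\|_1$ to absorb the smoothness penalty, and finally $\|\vR'-\vR\|_2 \geq \BRgap(\vx,\nabla u(\vx))$ via the untruncated best-response coordinate. You even make explicit the last step that the paper leaves as a remark following \Cref{lemma:refinement}.
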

\Cref{lemma:R-L} only shows a one-step improvement so long as the norm of the regret vector is large enough. But how can we guarantee that? It would seem possible that $\RMplus$ ends up cycling in perpetuity under a regret vector with small norm. The following lemma shows that cannot happen; it turns out that maintaining this monotonicity property for the norm of the regret vector is crucial for designing an optimal variant of $\RMplus$ in zero-sum games~\citep{Zhang25:Scale}.

\begin{restatable}{lemma}{nondecreasingregret}
\label{prop:increasing-regret}
    For any $t$, $\RMplus$ guarantees $\| \vr^{(t)} \|_2^2 \ge \| \vr^{(t-1)} \|_2^2 + \| [\vg^{(t)}]^+ \|_2^2$, where $\vg^{(t)} \defeq \nabla u(\vx^{(t)}) - \langle \nabla u(\vx^{(t)}), \vx^{(t)} \rangle \vec{1}$ is the instantaneous regret vector at round $t$.
\end{restatable}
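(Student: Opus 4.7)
The plan is to expand $\|\vr^{(t)}\|_2^2$ coordinate-wise, use a Pythagorean identity coming from an orthogonality between $\vr^{(t-1)}$ and $\vg^{(t)}$ that is built into $\RMplus$, and then show that the truncation step loses at most the squared $\ell_2$ mass of the negative part of $\vg^{(t)}$.

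First I would dispose of the trivial case $\vr^{(t-1)} = \vec{0}$, where the update rule gives $\vr^{(t)} = [\vg^{(t)}]^+$ and the claim reduces to an equality. Assuming $\vr^{(t-1)} \neq \vec{0}$, the key structural observation is that $\vr^{(t-1)} \perp \vg^{(t)}$ in $\ell_2$. This is because $\RMplus$ plays $\vx^{(t)} = \vr^{(t-1)}/\|\vr^{(t-1)}\|_1$, so
\begin{equation*}
\langle \vr^{(t-1)}, \vg^{(t)} \rangle = \|\vr^{(t-1)}\|_1 \Bigl( \langle \vx^{(t)}, \nabla u(\vx^{(t)}) \rangle - \langle \nabla u(\vx^{(t)}), \vx^{(t)} \rangle \langle \vx^{(t)}, \vec{1}\rangle \Bigr) = 0,
\end{equation*}
since $\vx^{(t)} \in \Delta(\cA)$. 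This orthogonality is special to $\RMplus$ and is really what powers the monotonicity: for $\RM$, $\vr^{(t-1)}$ need not equal $\|\vtheta^{(t-1)}\|_1 \vx^{(t)}$, so the cross term does not vanish.

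Next I would write
\begin{equation*}
\|\vr^{(t)}\|_2^2 \;=\; \bigl\|[\vr^{(t-1)}+\vg^{(t)}]^+\bigr\|_2^2 \;=\; \|\vr^{(t-1)}+\vg^{(t)}\|_2^2 \;-\; \sum_{a \in S} \bigl(\vr^{(t-1)}[a]+\vg^{(t)}[a]\bigr)^2,
\end{equation*}
where $S \defeq \{a : \vr^{(t-1)}[a]+\vg^{(t)}[a] < 0\}$. By the orthogonality just established, the first term on the right equals $\|\vr^{(t-1)}\|_2^2 + \|\vg^{(t)}\|_2^2$, and the latter splits as $\|[\vg^{(t)}]^+\|_2^2 + \|[\vg^{(t)}]^-\|_2^2$. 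So it remains to show that the truncation correction is at most $\|[\vg^{(t)}]^-\|_2^2$.

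For the final step, fix $a \in S$. Since $\vr^{(t-1)}[a] \geq 0$ (entries of $\vr^{(t-1)}$ are nonnegative by the $[\cdot]^+$ in the $\RMplus$ update), the inequality $\vr^{(t-1)}[a]+\vg^{(t)}[a] < 0$ forces $\vg^{(t)}[a] < 0$, so $a$ also lies in the support of $[\vg^{(t)}]^-$. Moreover $|\vr^{(t-1)}[a]+\vg^{(t)}[a]| = -\vr^{(t-1)}[a]-\vg^{(t)}[a] \leq -\vg^{(t)}[a] = |\vg^{(t)}[a]|$, which gives the coordinate-wise bound $(\vr^{(t-1)}[a]+\vg^{(t)}[a])^2 \leq \vg^{(t)}[a]^2$. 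Summing over $a \in S$ yields $\sum_{a\in S} (\vr^{(t-1)}[a]+\vg^{(t)}[a])^2 \leq \|[\vg^{(t)}]^-\|_2^2$, which combined with the decomposition above gives $\|\vr^{(t)}\|_2^2 \geq \|\vr^{(t-1)}\|_2^2 + \|[\vg^{(t)}]^+\|_2^2$, as claimed.

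There is no real obstacle beyond spotting the orthogonality $\vr^{(t-1)} \perp \vg^{(t)}$; once that is in hand, the rest is elementary and reduces to a per-coordinate comparison that exploits the nonnegativity of $\vr^{(t-1)}$. It is worth flagging that this proof uses two features that fail for $\RM$: nonnegativity of the carried regret vector and the exact proportionality between $\vx^{(t)}$ and $\vr^{(t-1)}$, consistent with the separation developed later in the paper.
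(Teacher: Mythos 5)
Your proof is correct and rests on the same two ingredients as the paper's own proof: the orthogonality $\langle \vr^{(t-1)}, \vg^{(t)}\rangle = 0$ (coming from $\vx^{(t)} \propto \vr^{(t-1)}$) and a coordinate-wise control of the truncation step that exploits $\vr^{(t-1)} \geq \vec{0}$. The only difference is bookkeeping: the paper expands around the increment $\vr^{(t)} - \vr^{(t-1)} = \max(\vg^{(t)}, -\vr^{(t-1)})$ and lower-bounds $\|\vr^{(t)}-\vr^{(t-1)}\|_2^2$ by $\|[\vg^{(t)}]^+\|_2^2$ and the cross term by zero, whereas you start from $\|\vr^{(t-1)}+\vg^{(t)}\|_2^2$ and upper-bound the mass removed by the truncation by $\|[\vg^{(t)}]^-\|_2^2$ --- the two computations are equivalent.
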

In particular,
\begin{equation*}
\| \vr^{(t)} \|_2^2 \ge \| \vr^{(t-1)} \|_2^2 + \| [\vg^{(t)}]_+ \|_2^2 \geq \| \vr^{(t-1)} \|_2^2 +  \KKTgap(\vx^{(t)})^2
\end{equation*}
since $\| [\vg^{(t)}]^+ \|_2^2 \geq \KKTgap(\vx^{(t)})^2$. That is, not only is the $\ell_2$ norm of the regret vector nondecreasing, but the increase is at least $\KKTgap(\vx^{(t)})^2$ at each round $t \in [T]$. Combining with~\Cref{lemma:R-L} yields the following.

\begin{restatable}[Single simplex]{theorem}{singlesimplex}
    \label{theorem:singlesimplex}
    Let $u $ be an $L$-smooth function in $\Delta(\cA) \subset \R^m$ with range $\urange$ and $R \defeq \max \{2 m, 9 m L \}$. $\RMplus$ requires at most $1 + \frac{ ( m ( 2 \urange + R^2 ) )^2 }{\epsilon^4} $ rounds to reach an $\epsilon$-KKT point.
\end{restatable}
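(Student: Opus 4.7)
The plan is to combine Lemma~\ref{lemma:R-L} and Proposition~\ref{prop:increasing-regret} with the standard no-regret bound from Proposition~\ref{prop:noregret-RM} to argue that $\RMplus$ eventually enters a regime in which every iteration makes tangible progress on the objective $u$. Concretely, I suppose for contradiction that every iterate $\vx^{(1)}, \dots, \vx^{(T)}$ produced by $\RMplus$ has $\KKTgap(\vx^{(t)}) > \epsilon$, and decompose the run into two phases based on whether the $\ell_2$ norm of the cumulative regret vector has crossed the threshold $R = \max\{2m,\, 9mL\}$ required by Lemma~\ref{lemma:R-L}.

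For the first phase, Proposition~\ref{prop:increasing-regret} guarantees that each round in which $\KKTgap(\vx^{(t)}) > \epsilon$ increases $\|\vr^{(t)}\|_2^2$ by at least $\epsilon^2$, and the norm never decreases afterwards. Therefore there is a first index $T_0 \leq R^2/\epsilon^2$ such that $\|\vr^{(t)}\|_2 \geq R$ holds for every subsequent $t$. In the second phase, Lemma~\ref{lemma:R-L} now applies at every round $t \geq T_0$, yielding a one-step improvement $u(\vx^{(t+1)}) - u(\vx^{(t)}) \geq \epsilon^2 / (2 \|\vr^{(t+1)}\|_1)$. Using $\|\vr^{(t+1)}\|_1 \leq \sqrt{m}\,\|\vr^{(t+1)}\|_2 \leq m\sqrt{t+1}$, which follows from the Cauchy--Schwarz inequality and Proposition~\ref{prop:noregret-RM}, I telescope the improvements from $t = T_0$ to $T-1$ via an integral approximation of $\sum 1/\sqrt{t+1}$, producing a lower bound of order $\tfrac{\epsilon^2}{m}\bigl(\sqrt{T} - \sqrt{T_0}\bigr)$ on the total gain $u(\vx^{(T)}) - u(\vx^{(T_0)})$. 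Since that gain is at most $\urange$, I obtain $\sqrt{T} = O(m\urange/\epsilon^2) + \sqrt{T_0}$, and squaring together with $T_0 \leq R^2/\epsilon^2$ and $R \geq 1$ yields a bound of the claimed form $1 + (m(2\urange + R^2))^2/\epsilon^4$.

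The main obstacle I anticipate is the tension between the two key ingredients: Lemma~\ref{lemma:R-L} only provides progress when the regret vector is \emph{large enough}, while Proposition~\ref{prop:increasing-regret} only drives the regret vector to grow when the KKT gap is large. The argument hinges on the monotonicity embedded in Proposition~\ref{prop:increasing-regret}: once the regret norm reaches $R$ it is locked above that threshold, so the two regimes chain into a single one-way transition rather than alternating. Beyond this structural point, the remaining work is routine bookkeeping to match the constants in the stated bound.
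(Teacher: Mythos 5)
Your proposal is correct and follows essentially the same route as the paper's proof: split the run at the first time $\|\vr^{(t)}\|_2$ crosses $R$, bound the length of the first phase via \Cref{prop:increasing-regret} (giving $T_0 \leq R^2/\epsilon^2$ under the assumption that all KKT gaps exceed $\epsilon$), and telescope the one-step improvements of \Cref{lemma:R-L} in the second phase using $\|\vr^{(t)}\|_1 \leq m\sqrt{t}$. The only differences are cosmetic (a contradiction framing and a harmless off-by-one in which round's regret norm appears in the denominator); the paper merges the two phases into the single inequality $\sum_{t=1}^T \frac{1}{m\sqrt{t}}\KKTgap(\vx^{(t)})^2 \leq 2\urange + R^2$, but the bookkeeping is equivalent.
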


\paragraph{Simultaneous updates in symmetric potential games}

We now use~\Cref{theorem:singlesimplex} to prove convergence of \emph{simultaneous} $\RMplus$ in \emph{symmetric} potential games; our earlier result in~\Cref{theorem:altern-RM+} shows convergence for arbitrary potential games but for the alternating version. The symmetry assumption here means that $\cA_1 = \cA_1 = \dots = \cA_n = \cA$ and $\vu_1(\vx_{-1}) = \vu_2(\vx_{-2}) = \dots = \vu_n( \vx_{-n} )$ when $\vx_1 = \vx_2 = \dots = \vx_n$. It is further assumed that all players initialize from the same strategy, so that the previous property implies that, inductively, it will be the case that $\vx_1^{(t)} = \vx_2^{(t)} = \dots = \vx_n^{(t)}$ for all $t$ under simultaneous updates because players observe exactly the same utility vector. A simple example of this is a two-player game with a common, symmetric payoff matrix $\mat{A} = \mat{A}^\top$. Then $\vu_1(\vx_2) = \mat{A} \vx_2$ and $\vu_2(\vx_1) = \mat{A} \vx_1$, so the previous assumption is satisfied.

\begin{restatable}{corollary}{sympot}
    \label{cor:sympot}
    In any symmetric potential game, simultaneous $\RMplus$ converges to an $\epsilon$-Nash equilibrium after $O_\epsilon(1/\epsilon^4)$ rounds. In particular, if convergence to the set of CCE happens at a rate of $T^{-(1 - \alpha)}$, for some $\alpha \in [0, \nicefrac{1}{2}]$, the rate of convergence to Nash equilibria is no worse than $T^{ - \frac{1-\alpha}{2}}$.
\end{restatable}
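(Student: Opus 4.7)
The plan is to leverage symmetry to reduce simultaneous $\RMplus$ in an $n$-player symmetric potential game to a single-simplex execution of $\RMplus$, and then invoke \Cref{theorem:singlesimplex}. First, I would show by induction on $t$ that, under a symmetric strategy initialization $\vx_1^{(1)} = \dots = \vx_n^{(1)}$ and the canonical zero initialization of the regret vectors, the iterates of simultaneous $\RMplus$ satisfy $\vx_1^{(t)} = \dots = \vx_n^{(t)} \equiv \vx^{(t)}$ and $\vR_1^{(t)} = \dots = \vR_n^{(t)}$ throughout: the symmetry hypothesis ensures that at any symmetric profile every player observes the identical utility vector $\vu^{(t)} \defeq \vu_1(\vx^{(t)}, \dots, \vx^{(t)})$, and the $\RMplus$ update is a deterministic function of $(\vR_i^{(t-1)}, \vu_i^{(t)})$.

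Setting $\phi(\vx) \defeq \Phi(\vx, \dots, \vx)$ and $u \defeq \phi/n$, the chain rule combined with the symmetry of $\Phi$ on the diagonal gives $\nabla u(\vx) = \vu_1(\vx, \dots, \vx)$, so the utility vector feeding the update coincides with $\nabla u(\vx^{(t)})$. Therefore the synchronized trajectory $(\vx^{(t)})$ is exactly the one produced by \Cref{alg:regret_matching_plus} applied to $u$ on $\Delta(\cA)$. Because $\Phi$ is multilinear with bounded coefficients, $u$ is differentiable on an open neighborhood of $\Delta(\cA)$ and $L$-smooth with $L = \mathrm{poly}(n,m)$. Crucially, the KKT gap of $u$ at $\vx$,
\begin{equation*}
\max_{\vx^\star \in \Delta(\cA)} \langle \vx^\star - \vx, \vu_1(\vx, \dots, \vx) \rangle,
\end{equation*}
is by symmetry exactly the best-response gap of every player at the symmetric profile $(\vx, \dots, \vx)$, i.e., the Nash gap. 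Consequently an $\epsilon$-KKT point of $u$ is an $\epsilon$-Nash equilibrium of the game, and \Cref{theorem:singlesimplex} delivers the $O_\epsilon(1/\epsilon^4)$ bound.

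For the refined statement, I would revisit the proof of \Cref{theorem:singlesimplex} and replace the crude worst-case bound on $\|\vR^{(t)}\|_1$ with a sharper one derived from the assumed CCE rate, mirroring how \Cref{theorem:altern-RM+} was refined via \Cref{theorem:main-refinedRM+}. A CCE rate of $T^{-(1-\alpha)}$ for simultaneous $\RMplus$ translates, via the folklore regret-to-CCE reduction, into a per-player regret bound of $O(T^\alpha)$; by the synchronization above, the reduced single-simplex $\RMplus$ inherits the same $O(m T^\alpha)$ bound on $\|\vR^{(t)}\|_1$. Plugging this into the telescoping inequality that drives \Cref{theorem:singlesimplex} replaces $\sum_t 1/\|\vR^{(t)}\|_1 = \Theta(\sqrt{T})$ by $\Theta(T^{1-\alpha})$ and rearranges to the claimed Nash rate $T^{-(1-\alpha)/2}$. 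The main obstacle will be ensuring that the burn-in phase during which $\|\vR^{(t)}\|_2$ rises above the threshold $R$ of \Cref{lemma:R-L} does not dominate: this is handled by \Cref{prop:increasing-regret}, which forces $\|\vR^{(t)}\|_2^2$ to grow by at least $\KKTgap(\vx^{(t)})^2 \geq \epsilon^2$ per round before $\epsilon$-Nash is reached, so the burn-in contributes only an $O(R^2/\epsilon^2)$ additive term that is absorbed into the overall bound for every $\alpha \in [0, \nicefrac{1}{2}]$.
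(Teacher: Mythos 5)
Your proposal is correct and follows essentially the same route as the paper: synchronize the players by symmetry so that simultaneous $\RMplus$ collapses to single-simplex $\RMplus$ on the diagonal restriction $\vx \mapsto \Phi(\vx,\dots,\vx)$, identify the observed utility with the gradient of that restriction (you normalize by $n$ where the paper instead invokes scale invariance of $\RMplus$, and you use the chain rule where the paper verifies the gradient identity monomial by monomial---both immaterial differences), and then invoke \Cref{theorem:singlesimplex}; your parameterized second part likewise mirrors how \Cref{theorem:altern-RM+} is refined. The only caveat, shared with the paper's own phrasing, is that the hypothesis should really be read as a regret bound of $O(T^\alpha)$ rather than a CCE rate, since the latter does not formally imply the former.
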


Indeed, running simultaneous $\RMplus$ in a symmetric game is equivalent to running $\RMplus$ on the function $\Delta(\cA) \ni \vx \mapsto \Phi(\vx, \dots, \vx)$, as we formalize in~\Cref{appendix:proofs} (\emph{cf.}~\citealp{Tewolde25:Computing}); \Cref{cor:sympot} thereby follows from~\Cref{theorem:singlesimplex} pertaining to~$\RMplus$ executed on a single simplex.

\paragraph{Multiple simplices} We now have the necessary tools to analyze the general case where we maximize $u$ over a product of simplices. Similarly to~\Cref{theorem:altern-RM+}, we run alternating $\RMplus$, thinking of every individual simplex as being controlled by a single player; this is akin to coordinate descent.

\begin{restatable}{corollary}{easycor}
    \label{cor:easycor}
    If $u$ is an $L$-smooth function in $\Delta(\cA_1) \times \dots \times \Delta(\cA_n)$ with range $\urange$, $\epsilon$-lazy alternating $\RMplus$ initialized at $\vr^{(0)}_i = \max\{ 2\sqrt{m_i}, 9 \sqrt{m_i} L \} \vec{1} $ for each player $i \in [n]$ requires at most $1 + \frac{ 4 n^4 m^2 \urange^2 }{\epsilon^4}$ rounds to reach an $\epsilon$-KKT point of $u$.
\end{restatable}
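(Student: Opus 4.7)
My plan is to combine the conditional one-step improvement of \Cref{lemma:R-L} with the monotonicity of regret-vector norms from \Cref{prop:increasing-regret}, applied independently to each player's simplex. The prescribed initialization yields
$\|\vr_i^{(0)}\|_2 = \max\{2\sqrt{m_i},9\sqrt{m_i}L\}\cdot\sqrt{m_i} = \max\{2m_i,9m_iL\}$,
which is exactly the threshold that \Cref{lemma:R-L} requires to certify a one-step improvement on $\Delta(\cA_i)$. Since \Cref{prop:increasing-regret} is a statement about the $\RMplus$ update rule under \emph{any} observed utility sequence, it applies to each player separately in the alternating setting; hence each $\|\vr_i^{(t)}\|_2$ is nondecreasing, and the threshold is preserved at every subsequent update of every player.

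Next, at a given update of player $i$, freeze $\vx_{-i}$: the restriction $\vx_i \mapsto u(\vx_i,\vx_{-i})$ is still $L$-smooth on $\Delta(\cA_i)$, with gradient $\vu_i^{(t)}$, so \Cref{lemma:R-L} applies verbatim to this single-simplex subproblem. With lazy threshold $\delta \defeq \epsilon/n$, whenever player $i$ is updated we have $\BRgap_i(\vx_i^{(t)},\vu_i^{(t)}) > \delta$, and therefore
\[
u(\vx_{i'\leq i}^{(t+1)},\vx_{i'>i}^{(t)}) - u(\vx_{i'<i}^{(t+1)},\vx_{i'\geq i}^{(t)}) \;\geq\; \frac{\delta^2}{2\|\vr_i^{(t+1)}\|_1}.
\]
So $u$ serves as a global potential that is nondecreasing across updates of \emph{any} player. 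Termination follows from the decomposition $\KKTgap(\vx) = \sum_i \BRgap_i(\vx_i,\nabla_{\vx_i}u(\vx))$ in \eqref{eq:KKT}: a full sweep of $n$ consecutive no-ops certifies that every player's best-response gap is at most $\delta$, and hence that the joint strategy is an $\epsilon$-KKT point.

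For the rate, let $T^*$ be the total number of performed updates and $\tau_i$ the number belonging to player $i$, so $T^* = \sum_i \tau_i$. A routine extension of \Cref{prop:noregret-RM} to nonzero initialization---telescope the Blackwell inequality $\|\vr_i^{(t)}\|_2^2 \leq \|\vr_i^{(t-1)}\|_2^2 + \|\vg_i^{(t)}\|_2^2$ across updates of player $i$---gives $\|\vr_i^{(t)}\|_1 \leq \sqrt{m_i}\sqrt{\|\vr_i^{(0)}\|_2^2 + m_i\tau_i(t)} = O(m\sqrt{\tau_i(t)} + m^2 L)$. Telescoping the per-update improvements over all of $T^*$ and using the integral estimate $\sum_{k=1}^{\tau_i} 1/(m\sqrt{k}) = \Omega(\sqrt{\tau_i}/m)$ together with the elementary bound $\sum_i \sqrt{\tau_i} \geq \sqrt{T^*}$ (worst case: all updates concentrated on one player), we obtain
\[
\urange \;\geq\; \frac{\delta^2}{2}\sum \frac{1}{\|\vr_i^{(t+1)}\|_1} \;\geq\; \Omega\!\left(\frac{\delta^2 \sqrt{T^*}}{m}\right).
\]
Rearranging with $\delta = \epsilon/n$ yields $T^* = O(n^4 m^2 \urange^2/\epsilon^4)$, matching the stated bound up to constants.

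The main obstacle is bridging the single-simplex statements of \Cref{lemma:R-L} and \Cref{prop:increasing-regret}---each of which implicitly considers a \emph{fixed} smooth function---with the alternating multi-player dynamics, in which the effective utility faced by each player drifts between its own successive updates. The key enabler is that \Cref{lemma:R-L} only needs the regret-norm threshold \emph{at the moment of the update} (a per-player, local condition preserved by per-player monotonicity of $\|\vr_i\|_2$), while $u$, being a global function of all players, serves as a valid shared potential whose telescoping absorbs every player update uniformly. A minor technical point is the adaptation of the no-regret bound to nonzero initialization, which follows immediately from the same Blackwell potential inequality that underlies \Cref{prop:noregret-RM}.
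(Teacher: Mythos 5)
Your proposal is correct and follows essentially the same route as the paper's proof: compute that the prescribed initialization puts $\|\vr_i^{(0)}\|_2$ exactly at the threshold of \Cref{lemma:R-L}, invoke \Cref{prop:increasing-regret} per player to preserve that threshold, telescope the resulting one-step improvements of $u$ across all players' updates, and convert the $\sqrt{T}$-type bound on $\|\vr_i^{(t)}\|_1$ into the stated rate, with the $n^4$ arising from taking the per-player gap threshold to be $\epsilon/n$. The only (immaterial) differences are that you count performed updates via $\sum_i\sqrt{\tau_i}\ge\sqrt{T^*}$ where the paper sums $1/\sqrt{t}$ over rounds, and you track the initialization's contribution to the regret-norm bound slightly more carefully than the paper does.
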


The proof follows directly from~\Cref{lemma:R-L} together with a telescopic summation. The non-lazy version of $\RMplus$ admits a qualitatively similar bound, following~\Cref{theorem:nonlazy}. Furthermore, a similar bound holds even under simultaneous $\RMplus$ (\Cref{cor:easycor-sim}), which follows by extending~\Cref{lemma:R-L} to multiple simplices (\Cref{lemma:R-L-multi}).

One caveat of those results is that the regret vector of each player needs to be initialized at a specific threshold. Our next result addresses this limitation by analyzing the usual parameter-free and scale-invariant version of $\RMplus$, at the cost of introducing a worse dependence on $1/\epsilon$.

\begin{restatable}{theorem}{mainth}
    \label{theorem:badrate}
    If $u$ is an $L$-smooth function in $\Delta(\cA_1) \times \dots \times \Delta(\cA_n)$ with range $\urange$, $\epsilon$-lazy alternating (or simultaneous) $\RMplus$ requires at most $O_\epsilon(1/\epsilon^8)$ rounds to reach an $\epsilon$-KKT point of $u$.
\end{restatable}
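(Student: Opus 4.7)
}
The plan is to classify each single-player update (player $i$ in round $t$) as \emph{productive} if $\|\vR_i^{(t+1)}\|_2 \geq R \defeq \max\{2m, 9mL\}$ (the threshold of~\Cref{lemma:R-L}), and \emph{unproductive} otherwise. Because of $\epsilon$-laziness, every update considered satisfies $\BRgap_i(\vx_i^{(t)}, \nabla_{\vx_i}u(\vx^{(t)})) > \epsilon$. A productive update then gains value $u(\vx^{(t+1)}) - u(\vx^{(t)}) \geq \epsilon^2 / (2\|\vR_i^{(t+1)}\|_1)$ by~\Cref{lemma:R-L}. An unproductive update could \emph{decrease} $u$, but only by a controlled amount: combining the $L$-smooth quadratic upper bound with the fact that the linear term is nonnegative (\Cref{lemma:onestepRM+} applied with $\vu = \nabla_{\vx_i} u(\vx^{(t)})$) yields
\begin{equation*}
    u(\vx^{(t+1)}) - u(\vx^{(t)}) \;\geq\; -\tfrac{L}{2}\,\|\vx_i^{(t+1)} - \vx_i^{(t)}\|_2^2 \;\geq\; -2L,
\end{equation*}
since $\|\vx_i^{(t+1)} - \vx_i^{(t)}\|_2 \leq \|\vx_i^{(t+1)} - \vx_i^{(t)}\|_1 \leq 2$.

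The next step is to cap the number of unproductive updates using~\Cref{prop:increasing-regret}: each such update (being a proper update, with $\BRgap_i > \epsilon$) grows $\|\vR_i\|_2^2$ by at least $\epsilon^2$. Thus a given player can incur at most $R^2/\epsilon^2 = O(m^2 L^2/\epsilon^2)$ unproductive updates before permanently crossing the threshold (since $\|\vR_i\|_2$ is monotonically nondecreasing by~\Cref{prop:increasing-regret}), so the aggregate count satisfies $U_{\text{unprod}} = O(n m^2 L^2 / \epsilon^2)$. Accordingly, the total $u$-decrease charged to unproductive updates is at most $O(n m^2 L^3 / \epsilon^2)$, and since the overall change of $u$ is at most $\urange$, the cumulative productive gain is bounded by $\urange + O(n m^2 L^3 / \epsilon^2) = O_\epsilon(1/\epsilon^2)$.

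To close the argument, I use~\Cref{prop:noregret-RM} to get $\|\vR_i^{(t+1)}\|_1 \leq \sqrt{m}\, \|\vR_i^{(t+1)}\|_2 \leq m\sqrt{T}$, so every productive update contributes at least $\epsilon^2 / (2m\sqrt{T})$. Summing yields
\begin{equation*}
    U_{\text{prod}} \cdot \frac{\epsilon^2}{2m\sqrt{T}} \;\leq\; O_\epsilon(1/\epsilon^2) \quad\Longrightarrow\quad U_{\text{prod}} \;\leq\; O_\epsilon(\sqrt{T}/\epsilon^4).
\end{equation*}
Since each non-converged round produces at least one update, the number of rounds $T$ obeys $T \leq U_{\text{prod}} + U_{\text{unprod}} \leq O_\epsilon(\sqrt{T}/\epsilon^4) + O_\epsilon(1/\epsilon^2)$, and solving this quadratic-in-$\sqrt{T}$ inequality gives $T \leq O_\epsilon(1/\epsilon^8)$.

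The main obstacle is exactly the warm-up phase: without the careful initialization of~\Cref{cor:easycor}, the regrets start at $\vec{0}$, well below $R$, so early updates may violate the one-step improvement guarantee of~\Cref{lemma:R-L} and genuinely decrease $u$. The crux is that~\Cref{prop:increasing-regret} still forces each player's squared regret to grow by at least $\epsilon^2$ whenever it is updated, so the warm-up terminates after $O_\epsilon(1/\epsilon^2)$ updates per player; this inflates the ``effective range'' of $u$ that the productive phase must overcome by $O_\epsilon(1/\epsilon^2)$, which is precisely what degrades the rate from the $1/\epsilon^4$ of~\Cref{cor:easycor} and~\Cref{theorem:singlesimplex} to the $1/\epsilon^8$ claimed here.
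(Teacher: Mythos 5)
Your proof is correct and rests on the same three pillars as the paper's: the monotone growth of $\|\vR_i\|_2^2$ by at least $\epsilon^2$ per lazy update (\Cref{prop:increasing-regret}) caps the number of below-threshold updates at $O_\epsilon(1/\epsilon^2)$; above-threshold updates each gain at least $\epsilon^2/(2m\sqrt{T})$ by \Cref{lemma:R-L} and \Cref{prop:noregret-RM}; and the resulting self-referential inequality $T \lesssim \sqrt{T}/\epsilon^4 + 1/\epsilon^2$ is solved for $T$. The bookkeeping differs in one respect: the paper partitions the horizon into at most $O_\epsilon(1/\epsilon^2)$ intervals separated by the below-threshold rounds and bounds each interval's length by $1 + 2\urange m \sqrt{T}/\epsilon^2$ (within an interval the objective only increases, and by at most $\urange$, so the paper never needs to control what happens at the bad rounds themselves), whereas you amortize globally, which forces you to also bound how much $u$ can \emph{decrease} at an unproductive update. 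Your bound of $-2L$ per such update---via the $L$-smooth quadratic bound together with the nonnegativity of the linear term from \Cref{lemma:onestepRM+}---correctly fills that extra gap, so both routes land on the same $O_\epsilon(1/\epsilon^8)$.
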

The proof again leverages~\Cref{prop:increasing-regret} to show that the number of rounds in which active players have a regret vector with norm smaller than what prescribed by~\Cref{lemma:R-L} is bounded by $O_\epsilon(1/\epsilon^2)$. It then appropriately bounds the number of rounds it takes in between such rounds to arrive at the bound of~\Cref{theorem:badrate}.

\section{Exponential lower bounds for regret matching}
\label{sec:RM}

In stark contrast, we show that $\RM$, with or without alternation, can take exponentially many rounds to reach an approximate Nash equilibrium even in two-player identical-interest games. The underlying class of games is based on the one considered by~\citet{Panageas23:Exponential}, who treated fictitious play. Specifically, for $m = 4, 6, \dots$ and $k \in \N$ we define the matrix $\mat{A}_{m, k}$ per the recursion
\begin{equation*}
    \R^{m \times m} \ni \mat{A}_{m, k} \defeq 
    \begin{bmatrix}
        k+1 & 0 & \cdots & 0 & 0\\
        0 & & & & k+4 \\
        \vdots & & \mat{A}_{m-2, k+4} & & \vdots \\
        0 & & & & 0\\
        k+2 & 0 & \cdots & 0 & k+3
    \end{bmatrix},
    \text{ where }
    \mat{A}_{2, k} \defeq 
    \begin{bmatrix}
        k+1 & 0 \\
        k+2 & k+3
    \end{bmatrix}.    
\end{equation*}
(An illustrative example appears in~\Cref{appendix:lowerbound}.) For any even dimension $m$, we define $\mat{A} \defeq \mat{A}_{m, 0}$, with maximum entry $2m - 1$. Further, we define, for $1 \leq a_1 \leq m+1$ and $1 \leq a_2 \leq m+1$,
\begin{equation}
    \label{eq:gamelower}
    \R^{(m+1) \times (m+1)} \ni \mat{B}[a_1, a_2] \defeq 
    \begin{cases}
        \mat{A}[a_1, a_2] & \text{if } a_1 \leq m \text{ and } a_2 \leq m;\\
        1/2 & \text{if } (a_1 = m+1 \text{ and } a_2 = 1) \text{ or } (a_1 = 1 \text{ and } a_2 = m+1); \\
        0 & \text{otherwise}.
    \end{cases}
\end{equation}
The action sets of the two players are $\cA_1 = [m+1] = \cA_2$. We assume that $\RM$ is initialized at the pure strategy $(m+1, m+1)$; \Cref{appendix:lowerbound} shows how to adapt the lower bound when $\RM$ is initialized at the uniform random strategy (\Cref{cor:uniform-init}), which is more common. We recall that one round includes one update from each player, which for now is assumed to be made in a simultaneous fashion. For a payoff $k \in \N$, we denote by $a_1(k), a_2(k) \in [m]$ the row and column index, respectively, corresponding to $k$ in the matrix $\mat{A}$.

We begin by stating a basic invariance concerning the behavior of $\RM$ when executed on the game~\eqref{eq:gamelower}.

\begin{restatable}{property}{invariance}
    \label{lemma:resting-transition}
    After the first round both players play the first action. Thereupon, either the players play with probability $1$ $(a_1(k), a_2(k))$, or, when $k$ is odd, only Player 1 (respectively, Player 2 when $k$ is even) mixes between $a_1(k)$ and $a_1(k+1)$ (respectively, $a_2(k)$ and $a_2(k+1)$). If a row or a column stops being played, it will never be played henceforth. An action profile $(a_1(k+1), a_2(k+1))$ is played with positive probability only if $(a_1(k), a_2(k))$ was played at some previous round.
\end{restatable}

We prove this property inductively in~\Cref{appendix:lowerbound}. We take it for granted in what follows.

In accordance with~\Cref{lemma:resting-transition}, for $k \geq 2$, we define $\tlow{k}$ to be the first round in which the action profile corresponding to payoff $k$ is played with positive probability and $\thigh{k}$ the last round before the action profile corresponding to payoff $k+1$ is played with positive probability. We then define $T_k \defeq \thigh{k} - \tlow{k} + 1$ to be the number of rounds corresponding to the period $[\tlow{k}, \thigh{k}]$.

We also define $\cA_1(k) \defeq \{ a_1(k') : 2m - 1 \geq k' \geq k \}$ and $\cA_2(k) \defeq \{ a_2 (k') : 2m-1 \geq k' \geq k \}$. These are the rows and columns, respectively, that will be played after the action profile corresponding to $k$ starts being played. The next crucial lemma shows that before an action becomes desirable, it will have accumulated very negative regret in the previous rounds.

\begin{lemma}
    \label{lemma:negreg}
    For any even $k \geq 4$, let $\vr_1^{(\thigh{k-2})}[a_1]$ be the regret of Player 1 with respect to any action $a_1 \in \cA_1(k)$. Then $\vr_{1}^{(\thigh{k-2} )}[a_1] \leq - \sum_{l=2}^{k-2} (l-1) T_l$. Similarly, for any odd $k \geq 5$, if $\vr_2^{(\thigh{k-2})}[a_2]$ is the regret of Player 2 with respect to any action $a_2 \in \cA_2(k)$, $\vr_{2}^{(\thigh{k-2} )}[a_2] \leq - \sum_{l=2}^{k-2} (l-1) T_l$.
\end{lemma}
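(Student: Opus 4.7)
The plan is to bound the cumulative regret of Player 1 period-by-period. I will show that during each period $l \in \{2, \dots, k-2\}$, every action $a_1 \in \cA_1(k)$ incurs instantaneous regret at most $-(l-1)$ per round; summed over the $T_l$ rounds of the period and over $l$, this yields the claimed $-\sum_{l=2}^{k-2}(l-1)T_l$. The contributions from round~$1$ and from period~$1$ are handled separately and shown to be non-positive: for $k \geq 4$, $\cA_1(k)$ excludes both row $1$ and row $m$, so row $a_1$ gets payoff $0$ against the initial column $m+1$ (by the definition of $\mat{B}$) and against column $a_2(1) = 1$ (since the first column of $\mat{A}$ has nonzero entries only at rows $1$ and $m$).

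The first key ingredient is a structural claim about $\mat{A}$: unrolling the recursion defining $\mat{A}_{m,0}$ by induction on $m$, its nonzero entries are exactly $\mat{A}[a_1(j), a_2(j)] = j$ for $j = 1, \dots, 2m-1$, with the repeat pattern $a_1(2i) = a_1(2i+1)$ and $a_2(2i-1) = a_2(2i)$. From this I claim that $\mat{A}[a_1, a_2(j)] = 0$ whenever $a_1 \in \cA_1(k)$ and $j \leq k - 2$. Indeed, for the entry to be nonzero there must exist $j^\star$ with $a_1(j^\star) = a_1 = a_1(k')$ for some $k' \geq k$ and $a_2(j^\star) = a_2(j)$. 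The first condition places $j^\star$ in the $a_1$-repeat class $\{2\lfloor k'/2\rfloor,\, 2\lfloor k'/2\rfloor + 1\}$; since $k$ is even and $k' \geq k$, this forces $j^\star \geq k$. The second condition places $j^\star$ in the $a_2$-repeat class $\{2\lceil j/2\rceil - 1,\, 2\lceil j/2\rceil\}$; since $j \leq k-2$ and $k$ is even, this forces $j^\star \leq k - 2$. The two ranges are disjoint, a contradiction. The parity assumption on $k$ is essential: for $k$ odd, $j^\star = k - 1$ would simultaneously satisfy both containments.

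The second ingredient invokes Property~\ref{lemma:resting-transition} to describe the dynamics within period $l$. At any round of period $l$, the game state is either (i) both players pure at $(a_1(l), a_2(l))$, or (ii) an in-progress transition from period $l-1$ in which, depending on the parity of $l-1$, Player 1 mixes between $a_1(l-1), a_1(l)$ or Player 2 mixes between $a_2(l-1), a_2(l)$. In either case, Player 2's strategy is supported on $\{a_2(l-1), a_2(l)\}$ and Player 1's on $\{a_1(l-1), a_1(l)\}$, and every action profile played with positive probability has payoff $l-1$ or $l$, so Player 1's expected utility is at least $l-1$. Combining with the first ingredient and using $l-1, l \leq k-2$, the payoff to row $a_1 \in \cA_1(k)$ in any such round is $0$, and its instantaneous regret is therefore at most $0 - (l-1) = -(l-1)$. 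Summing over the $T_l$ rounds of period $l$ for $l = 2, \dots, k-2$ yields the stated bound.

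The case of odd $k \geq 5$ for Player 2 is entirely symmetric, with the roles of the two players (and of rows and columns) interchanged; there, the parity $k$ odd is what makes the analogous disjointness argument go through, since swapping row- and column-repeat patterns swaps which parity of $k$ forces the relevant classes into disjoint ranges. The main obstacle is this disjointness step: the row- and column-repeat classes touch at the ``wrong'' parity of $k$, and the parity restriction in the lemma is exactly what rules out the boundary coincidence that would otherwise produce a spurious nonzero payoff.
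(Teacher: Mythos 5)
Your proof is correct and follows essentially the same route as the paper's: it bounds the instantaneous regret of each $a_1 \in \cA_1(k)$ by $-(l-1)$ in every round of period $l$, using exactly the two facts the paper uses---that the columns played during periods $2,\dots,k-2$ lie in $\{a_2(1),\dots,a_2(k-2)\}$ (where $\mat{A}[a_1,\cdot]=0$ by the spiral structure) and that the realized utility is at least $l-1$ by Property~\ref{lemma:resting-transition}. Your write-up is in fact more careful than the paper's, spelling out the parity/repeat-class disjointness behind the zero-pattern claim and explicitly checking that round $1$ and period $1$ contribute non-positively, both of which the paper leaves implicit.
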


At the same time, when an action has very negative regret, it will take a long time before that action gets played with positive probability, as formalized below.

\begin{lemma}
    \label{lemma:overcomingnegregret}
    For any even $k \geq 4$, $T_k \geq - \frac{1}{2} \vr_{2}^{(\thigh{k-1})}[a_2(k+1)]$. Similarly, for every odd $k \geq 5$, $T_k \geq - \frac{1}{2} \vr_{1}^{(\thigh{k-1})}[a_1(k+1)]$.
\end{lemma}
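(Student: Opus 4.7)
The plan is to exploit that, during the window $[\tlow{k}, \thigh{k}]$, Player~2's cumulative regret on the ``next'' action $a_2(k+1)$ can grow by at most $1$ per round. Since this regret must climb from a very negative value all the way up to zero before $a_2(k+1)$ can be played with positive probability, $T_k$ is forced to be large.

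Concretely, I would first translate the target inequality into a regret-growth statement. By the definition of $\tlow{k+1}$ together with \Cref{lemma:resting-transition}, the action $a_2(k+1)$ first enters Player~2's support at round $\tlow{k+1} = \thigh{k}+1$; by the $\RM$ update rule this requires $\vr_2^{(\thigh{k})}[a_2(k+1)] > 0$. Telescoping across $[\tlow{k}, \thigh{k}]$,
\begin{equation*}
    -\vr_2^{(\thigh{k-1})}[a_2(k+1)] \;<\; \sum_{t=\tlow{k}}^{\thigh{k}} \Bigl( \vu_2^{(t)}[a_2(k+1)] - \langle \vx_2^{(t)}, \vu_2^{(t)} \rangle \Bigr).
\end{equation*}

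Next I would bound each summand by $1$ using \Cref{lemma:resting-transition}. For even $k$, Player~1 plays $a_1(k)$ deterministically throughout the window, so $\vu_2^{(t)} = \mat{A}[a_1(k), \cdot]$, and Player~2's support is contained in $\{a_2(k), a_2(k+1)\}$. Because $k$ is even, the transition from payoff $k$ to $k+1$ is driven by Player~2 alone, i.e., $a_1(k+1) = a_1(k)$; reading off $\mat{A}$ then gives $\vu_2^{(t)}[a_2(k+1)] = \mat{A}[a_1(k), a_2(k+1)] = k+1$ and $\vu_2^{(t)}[a_2(k)] = k$. Consequently $\langle \vx_2^{(t)}, \vu_2^{(t)} \rangle$ is a convex combination of $k$ and $k+1$, so every summand lies in $[0, 1]$.

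Combining, $T_k \geq -\vr_2^{(\thigh{k-1})}[a_2(k+1)] \geq -\tfrac{1}{2}\vr_2^{(\thigh{k-1})}[a_2(k+1)]$, which is in fact stronger than the stated bound. The odd-$k$ case is the exact mirror: Player~2 now sits on $a_2(k)$ while Player~1 mixes on $\{a_1(k), a_1(k+1)\}$, so the same calculation applied to Player~1 bounds the per-round increase of $\vr_1^{(t)}[a_1(k+1)]$ by $1$. The only subtlety is the careful invocation of \Cref{lemma:resting-transition} to ensure that the support structure (opposing player deterministic, acting player restricted to two consecutive actions) holds \emph{uniformly} across the entire window $[\tlow{k}, \thigh{k}]$, together with reading off the matrix entries from the recursive definition of $\mat{A}$; once these are in place, the rest is bookkeeping.
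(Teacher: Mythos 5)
Your overall strategy is the same as the paper's: the regret of the upcoming action $a_2(k+1)$ must climb from its value at $\thigh{k-1}$ to a positive value by round $\thigh{k}$ (so that $\RM$ can first place mass on it at $\tlow{k+1}=\thigh{k}+1$), and a per-round cap on its increase over the window $[\tlow{k},\thigh{k}]$ then forces $T_k$ to be large. The telescoping setup and the reduction to a per-round bound are exactly right.

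However, there is a genuine error in the middle step. You assert that for even $k$, ``Player 1 plays $a_1(k)$ deterministically throughout the window,'' so that $\vu_2^{(t)}=\mat{A}[a_1(k),\cdot]$ and every summand lies in $[0,1]$. This contradicts \Cref{lemma:resting-transition} and the definition of $\tlow{k}$: the window \emph{opens} at the first round in which the profile for payoff $k$ has positive probability, which for even $k$ is the first round Player 1 puts positive---but not full---probability on $a_1(k)$. Hence for an initial stretch of $[\tlow{k},\thigh{k}]$ Player 1 is still mixing between $a_1(k-1)=a_1(k-2)$ and $a_1(k)$ (this is precisely the phase analyzed in \Cref{lemma:reg-ub}). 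During that stretch, with $p=\vx_1^{(t)}[a_1(k)]$, one computes $\vu_2^{(t)}[a_2(k+1)]=p(k+1)$ and $\langle \vx_2^{(t)},\vu_2^{(t)}\rangle=(1-p)(k-1)+pk=k-1+p$, so the summand equals $k(p-1)+1$, which can be as negative as about $1-k$; your claim that the summands lie in $[0,1]$ is false. The upper bound you actually need does survive---the exact computation gives increment at most $1$, and the cruder bound $\vu_2^{(t)}[a_2(k+1)]\le k+1$ together with received utility at least $k-1$ gives increment at most $2$, which is exactly where the paper's factor of $\tfrac12$ comes from---but as written your justification rests on a false description of the dynamics, and the stronger constant $1$ you claim is not established by your argument. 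The fix is to handle the initial mixing phase explicitly (or fall back to the paper's bound of $2$ per round).
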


By~\Cref{lemma:negreg,lemma:overcomingnegregret}, it follows that $T_k \geq \sum_{l=2}^{k-1} \frac{l - 1}{2} T_l$ for any $k \geq 4$. By the inductive basis, we know that $T_3 \geq 1$. As a result, $T_k \geq \frac{k-2}{2} T_{k-1} \geq \frac{k-2}{2} \frac{k-3}{2} \dots \frac{2}{2} T_3 \geq \frac{(k-2)!}{2^{k-3}}$ for all $k \geq 4$.
%\begin{equation}
%    T_k \geq \frac{k-2}{2} T_{k-1} \geq \frac{k-2}{2} \frac{k-3}{2} \dots \frac{2}{2} T_3 \geq \frac{(k-2)!}{2^{k-3}} \quad k \geq 4.\label{eq:factorial}
%\end{equation}

Moreover, it takes as least $T_{2m - 2}$ rounds to converge to an NE with approximation gap at most $\nicefrac{1}{2m + 2}$ (\Cref{lemma:mixedNE}). We thus arrive at the following exponential lower bound.

\begin{theorem}
    \label{theorem:slowRM}
    Simultaneous $\RM$ requires $m^{\Omega(m)}$ rounds to converge to a $\frac{1}{2m}$-Nash equilibrium in two-player $m \times m$ identical-interest games.
\end{theorem}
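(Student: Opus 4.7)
The plan is to combine the factorial lower bound on $T_k$ derived in the paragraph preceding the theorem with the mixedNE lemma. All the dynamical analysis of $\RM$ on the constructed game has been carried out above, so what remains is a Stirling estimate plus a brief reindexing between the construction's parameter and the theorem's game dimension.

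Adopting the theorem's convention that the game is $m \times m$, I would apply the construction with parameter $m-1$, so that the matrix $\mat{B}$ has dimension $(m-1)+1 = m$. Under this reindexing, I would first instantiate the already-derived factorial bound $T_k \geq (k-2)!/2^{k-3}$ (valid for $k \geq 4$, obtained by iterating $T_k \geq \tfrac{k-2}{2} T_{k-1}$ against the base case $T_3 \geq 1$) at index $k = 2(m-1)-2 = 2m-4$, to obtain $T_{2m-4} \geq (2m-6)!/2^{2m-7}$. By Stirling's approximation the right-hand side grows as $((m-3)/e)^{\Theta(m)}$, which is $m^{\Omega(m)}$ for sufficiently large $m$. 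Second, the mixedNE lemma, instantiated under the same reindexing, asserts that no iterate produced by simultaneous $\RM$ before round $T_{2m-4}$ is a $\frac{1}{2(m-1)+2} = \frac{1}{2m}$-Nash equilibrium of the $m \times m$ game $\mat{B}$. Combining these two bounds proves the theorem.

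The main obstacle, beyond the already-handled recursive blow-up argument, is purely bookkeeping: the reindexing between the construction's parameter and the theorem's dimension, together with the observation that $m^{\Omega(m)}$ is stable under constant shifts. For odd values of $m$ in the theorem, one simply applies the construction at the nearest lower even dimension, losing only a constant factor that is absorbed into the $m^{\Omega(m)}$ asymptotic. Since the substantive work---the factorial growth of $T_k$ via Lemmas~\ref{lemma:negreg} and~\ref{lemma:overcomingnegregret}, and the approximation-gap characterization of mixedNE---has already been completed before the theorem statement, there is no further genuine analytical content to supply.
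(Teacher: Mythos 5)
Your proposal is correct and follows essentially the same route as the paper: combining the factorial lower bound $T_k \geq (k-2)!/2^{k-3}$ with \Cref{lemma:mixedNE} (which, via \Cref{lemma:resting-transition}, covers all iterates before the top payoff is reached), then handling the reindexing between the construction's parameter and the theorem's dimension. The only nit is that your parity bookkeeping is inverted---the construction parameter $m-1$ is even exactly when the theorem's $m$ is odd, so it is the even values of $m$ that require dropping to a nearby admissible dimension---but this is absorbed into the $m^{\Omega(m)}$ asymptotics as you note.
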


The same reasoning directly applies to alternating $\RM$.

\begin{corollary}
    Alternating $\RM$ requires $m^{\Omega(m)}$ rounds to converge to a $\frac{1}{2m}$-Nash equilibrium in two-player $m \times m$ identical-interest games.
\end{corollary}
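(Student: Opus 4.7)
The plan is to verify that the entire chain of arguments behind~\Cref{theorem:slowRM} carries over to alternating $\RM$ with only cosmetic modifications. First I would re-establish~\Cref{lemma:resting-transition} for alternating updates. The structural reason this invariance holds is the sparsity pattern of $\mat{B}$: at any moment the nonzero payoff entries relevant to play are confined to an ``active'' $2 \times 2$ block determined by the recursive structure of $\mat{A}_{m,k}$. Because $\RM$ never places positive probability on an action whose cumulative regret is nonpositive, and because every entry outside the currently active submatrix contributes $0$ to the instantaneous regret increment of both players, the inductive claim---that play is supported on two adjacent pure profiles with at most one player mixing between them---remains true regardless of whether players update simultaneously or in round-robin. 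The only subtlety is that, in the alternating schedule, Player~1's round-$t$ utility depends on Player~2's round-$t$ strategy rather than the round-$(t-1)$ one; but since both players are supported inside the same active row and column, this sparsity argument is insensitive to the time shift.

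Second, I would re-derive~\Cref{lemma:negreg} and~\Cref{lemma:overcomingnegregret} in the alternating setting. Each player still receives exactly one regret update per round, and during the window $[\tlow{l}, \thigh{l}]$ in which payoff $l$ is being realized, any not-yet-active coordinate in $\cA_1(k)$ or $\cA_2(k)$ (for $k > l$) loses regret at rate $l - 1$ per round relative to the played action, matching the simultaneous case up to an absolute constant. Summing over $l = 2, \dots, k-2$ reproduces the bound $\vr_1^{(\thigh{k-2})}[a_1] \leq -\sum_{l=2}^{k-2}(l-1)\, T_l$. The overcoming-negative-regret bound is likewise unchanged, since the per-round regret increment of any coordinate is at most the payoff range, regardless of the update schedule.

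Combining these two ingredients yields the same recursion $T_k \geq c \sum_{l=2}^{k-1}(l-1)\, T_l$ for some absolute constant $c > 0$, hence $T_k \geq (k-2)!/2^{O(k)}$. Taking $k = 2m-2$ produces $m^{\Omega(m)}$, and~\Cref{lemma:mixedNE}---whose statement about the NE-approximation gap depends only on $\mat{B}$ and not the update schedule---then converts this into the claimed lower bound for reaching a $\tfrac{1}{2m}$-approximate Nash equilibrium.

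The main obstacle is the bookkeeping in re-verifying~\Cref{lemma:resting-transition} under alternation: one must confirm that the slight time shift between Player~1's and Player~2's updates never causes a dormant row or column to acquire positive regret. I expect this to be clean because the sparsity of $\mat{B}$ guarantees that any entry outside the current active block is zero, so the instantaneous regret of an inactive coordinate equals $-\langle \vx^{(t)}, \vu^{(t)} \rangle \leq 0$, keeping such coordinates nonpositive throughout---exactly as in the simultaneous analysis.
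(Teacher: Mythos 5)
Your proposal follows exactly the route the paper intends: the paper's entire proof of this corollary is the single sentence ``The same reasoning directly applies to alternating $\RM$,'' and your plan---re-verifying \Cref{lemma:resting-transition}, \Cref{lemma:negreg}, and \Cref{lemma:overcomingnegregret} under the alternating schedule via the sparsity of $\mat{B}$, then reusing the recursion and \Cref{lemma:mixedNE}---is precisely the chain of arguments being reused, with your observation about inactive coordinates receiving nonpositive instantaneous regret being the correct reason the time shift is harmless. Your write-up is in fact more explicit than the paper's own justification, but it is the same approach.
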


\section{Future research}

Our paper sheds new light on the convergence properties of regret matching($^+$) in constrained optimization problems in general, and potential games in particular. We showed that $\RMplus$ is a sound and fast first-order optimizer; on the flip side, $\RM$ can be exponentially slow even in two-player identical-interest games.

Several interesting questions remain open. It would be interesting to understand whether $\RMplus$, with or without alternation, can experience $\Omega(\sqrt{T})$ regret in potential games; this is known to be the case in zero-sum games~\citep{Farina23:Regret}, but remains unclear for the class of potential games. In light of our results, any improvement over the $\sqrt{T}$ barrier would automatically translate into a faster convergence rate to Nash equilibria. Moreover, we have worked exclusively in the full feedback setting; extending our results under stochastic or bandit feedback would be a natural next step. Finally, does $\RM$ asymptotically converge even under alternating updates?

\section*{Acknowledgments}

Emanuel Tewolde and Vincent Conitzer thank the Cooperative Al Foundation, Macroscopic Ventures and Jaan Tallinn's donor-advised fund at Founders Pledge for financial support.
Emanuel Tewolde is also supported in part by the Cooperative AI PhD Fellowship. Ioannis Panageas is supported by NSF grant CCF-2454115. Tuomas Sandholm is supported by the Vannevar Bush Faculty Fellowship ONR N00014-23-1-2876, National Science Foundation grants RI-2312342 and RI-1901403, ARO award W911NF2210266, and NIH award A240108S001.

\bibliography{main}

\clearpage

\appendix

\section{Further related work}
\label{sec:related}

Much of the existing research on regret matching revolves around zero-sum games. Many variants have been proposed over the years to speed up its convergence~\citep{Xu24:Minimizing,Cai25:Last,Chakrabarti24:Extensive,Meng25:Asynchronous,Farina21:Faster,Tammelin14:Solving,Brown19:Solving}. Some notable variations that have considerably improved performance are \emph{predictive} $\RM$ and $\RMplus$~\citep{Farina21:Faster}, which rely on predicting the next utility, and \emph{discounted} $\RM$ and $\RMplus$~\citep{Brown19:Solving,Zhang24:Faster,Xu24:Dynamic}, where one dynamically discounts the accumulated regret; in a similar vein, our work shows that a discounted variant of $\RMplus$ achieves a better convergence upper bound than $\RMplus$ in our setting (\Cref{cor:discount-RM+}). It must be stressed that the focus of all that prior work was on zero-sum games. Constrained optimization is a fundamentally different problem. For one, in zero-sum games, it is only the average strategy of $\RM$ and $\RMplus$ that converges, not the last iterate~\citep{Farina23:Regret}.

The recent paper of~\citet{Tewolde25:Decision} demonstrated that the regret matching family is a formidable first-order optimizer in constrained optimization problems. In particular, their focus was on (single-player) imperfect-recall problems, which are tantamount to general polynomial optimization problems over a product of simplices. Interestingly, many of the trends observed in zero-sum games are actually reversed in constrained optimization. For example, the predictive versions of $\RM$ and $\RMplus$ generally performed worse than their non-predictive counterparts. One trend that did persist was the superiority of $\RMplus$ over $\RM$. It is also worth mentioning an earlier work by~\citet{Ma14:Distributed} that also reported fast empirical convergence in a certain class of congestion games. Yet, there was hitherto no theoretical understanding of those algorithms in this setting. The main precursors of our work are the paper of \citet{Hart03:Regret}, which established asymptotic convergence in discrete time but for a somewhat artificial variant of regret matching, and the paper of~\citet{Marden07:Regret}, which analyzed asymptotically a certain variant of regret matching that aggressively discounts the regrets (\emph{cf.}~\Cref{cor:discount-RM+}).

An interesting result that sheds light on $\RM$ and $\RMplus$ is by~\citet{Farina21:Faster}, who showed that $\RM$ can be obtained by running \emph{follow the regularized leader (\ftrl)} in a certain lifted space, whereas $\RMplus$ can be obtained through \emph{mirror descent (\md)} in the same space; this is despite the fact that, unlike $\ftrl$ and $\md$, $\RM$ and $\RMplus$ are both parameter free. On a related note, \citet{Cai24:Fast} showed that only forgetful algorithms---closer to $\md$ than to $\ftrl$---can attain fast last-iterate convergence. Our exponential separation of $\RM$ and $\RMplus$ echoes their finding, although in a different setting and class of algorithms.

Zooming out of the $\RM$ family, understanding the convergence of no-regret dynamics in potential games has been a popular research topic~\citep{Kleinberg09:Multiplicative,Heliou17:Learning,Palaiopanos17:Multiplicative,Panageas23:Semi,Cui22:Learning,Blum06:Routing}. Our research also relates to parameter-free optimization; for example, we refer to~\citet{Ivgi23:Parameter,Orabona16:Coin,Defazio23:Learning} and references therein.
\section{Further background}
\label{appendix:prels}

\paragraph{Coarse correlated equilibria} For completeness, we provide the definition of a coarse correlated equilibrium~\citep{Moulin78:Strategically}, which is a relaxation of correlated equilibria~\citep{Aumann74:Subjectivity}. The key connection that relates to our results is that if all players in a normal-form game have sublinear regret, the average correlated distribution of play converges to the set of coarse correlated equilibria. In particular, the rate of convergence is driven by the maximum of the players' regrets (\Cref{prop:CCE}).

\begin{definition}[Coarse correlated equilibrium]
    \label{def:CCE}
    Consider an $n$-player game in normal form. A correlated distribution $\mu \in \Delta(\cA_1 \times \dots \times \cA_n)$ is an \emph{$\epsilon$-coarse correlated equilibrium (CCE)} if for any player $i \in [n]$ and deviation $a_i' \in \cA_i$,
    \begin{equation*}
        \E_{(a_1, \dots, a_n) \sim \mu} u_i(a_1, \dots, a_n) \geq \E_{(a_1, \dots, a_n) \sim \mu } u_i(a_i', a_{-i}) - \epsilon.
    \end{equation*}
\end{definition}

\begin{proposition}
    \label{prop:CCE}
    If each player $i \in [n]$ observes the sequence of utilities $(\vu_i(\vx^{(t)}_{-i}))_{t=1}^T$, the average correlated distribution of play is an $\epsilon$-CCE with $\epsilon \leq \frac{1}{T} \max_{1 \leq i \leq n} \reg_i^{(T)}$, where $\reg_i^{(T)}$ is the regret of the $i$th player.
\end{proposition}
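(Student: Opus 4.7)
The plan is to verify the CCE inequality directly from the definition of regret by expanding the expectations under the empirical product distributions. Let $\mu^{(t)} \defeq \vx_1^{(t)} \otimes \dots \otimes \vx_n^{(t)}$ be the product distribution of play at round $t$, and let $\bar{\mu} \defeq \frac{1}{T} \sum_{t=1}^T \mu^{(t)}$ denote the average correlated distribution. I would fix an arbitrary player $i \in [n]$ and deviation $a_i' \in \cA_i$, and show that the CCE gap per~\Cref{def:CCE} is bounded by $\frac{1}{T} \reg_i^{(T)}$.

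The main step is to rewrite each side of the CCE inequality using linearity of expectation and the multilinearity of $u_i$. On the one hand,
\begin{equation*}
    \E_{(a_1, \dots, a_n) \sim \bar{\mu}} u_i(a_1, \dots, a_n) = \frac{1}{T} \sum_{t=1}^T u_i(\vx^{(t)}) = \frac{1}{T} \sum_{t=1}^T \langle \vx_i^{(t)}, \vu_i(\vx_{-i}^{(t)}) \rangle,
\end{equation*}
where I use the fact that $u_i(\vx^{(t)}) = \langle \vx_i^{(t)}, \vu_i(\vx_{-i}^{(t)}) \rangle$ (as stated in~\Cref{sec:potential-prel}). On the other hand, since $\bar{\mu}$ is an average of product distributions and the deviation fixes the $i$th coordinate to $a_i'$,
\begin{equation*}
    \E_{(a_1, \dots, a_n) \sim \bar{\mu}} u_i(a_i', a_{-i}) = \frac{1}{T} \sum_{t=1}^T u_i(a_i', \vx_{-i}^{(t)}) = \frac{1}{T} \sum_{t=1}^T \vu_i(\vx_{-i}^{(t)})[a_i'] = \frac{1}{T} \sum_{t=1}^T \langle \vec{e}_{a_i'}, \vu_i(\vx_{-i}^{(t)}) \rangle,
\end{equation*}
where $\vec{e}_{a_i'}$ is the $a_i'$-th standard basis vector in $\R^{\cA_i}$.

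Subtracting gives a CCE gap of $\frac{1}{T} \sum_{t=1}^T \langle \vec{e}_{a_i'} - \vx_i^{(t)}, \vu_i(\vx_{-i}^{(t)}) \rangle$. By definition~\eqref{eq:reg} of the regret of player $i$ against the observed utilities $\vu_i^{(t)} = \vu_i(\vx_{-i}^{(t)})$, and since the maximum in~\eqref{eq:reg} over $\vx_i' \in \Delta(\cA_i)$ is attained at a vertex $\vec{e}_{a_i'}$ by linearity, this sum is at most $\reg_i^{(T)}$. Taking the maximum over $a_i' \in \cA_i$ and then over $i \in [n]$ yields $\epsilon \leq \frac{1}{T} \max_{1 \leq i \leq n} \reg_i^{(T)}$, as claimed. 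There is no real obstacle here; the only thing to be careful about is the equivalence between maximizing a linear functional over $\Delta(\cA_i)$ and over its vertices, which is what allows us to bound the per-action CCE deviation by the (mixed-strategy) regret.
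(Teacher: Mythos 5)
Your proof is correct. The paper states \Cref{prop:CCE} without proof, treating it as the folk connection between no-regret learning and coarse correlated equilibria; your argument---expanding both sides of the \Cref{def:CCE} inequality under the averaged product distributions via multilinearity, and observing that the pure deviation $\vec{e}_{a_i'}$ is a feasible comparator in the maximum defining $\reg_i^{(T)}$ in~\eqref{eq:reg}---is exactly the standard derivation this folk result rests on, so there is nothing to compare against and nothing missing.
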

This connection holds for simultaneous updates. It is unclear if and how it can be extended under alternating updates. For the special case of potential games and $\RMplus$, which is our main focus here, we are indeed able to establish convergence to the set of CCEs even under alternating updates by bounding the path length of the players' strategies (\Cref{remark:alternating-CCE}).

\paragraph{Other notions of regret} \Cref{sec:onlinelearning} introduced the usual notion of regret used in online linear optimization. For a constrained optimization problem with respect to a differentiable function $u$, we have $\reg^{(T)} = \max_{\vx' \in \cX} \sum_{t=1}^T \langle \vx' - \vx^{(t)}, \nabla_\vx u(\vx^{(t)}) \rangle$. This is a linearized version of $\max_{\vx' \in \cX} \sum_{t=1}^T ( u(\vx') - u(\vx^{(t)}))$. Minimizing the latter notion is computationally intractable unless one places restrictive assumptions on $u$. \citet{Hazan17:Efficient} (\emph{cf.}~\citealp{Angelopoulos25:Gradient}) introduced the notion of ``local regret,'' and showed that having sublinear local regret implies that a randomly selected iterate will be an approximate stationary point; this is in stark contrast to regret as defined in~\Cref{sec:onlinelearning} (\Cref{prop:4cycle}). 

\paragraph{Discounting} Next, we spell out regret matching$^+$ with discounting ($\DRMplus$; \Cref{alg:regret_matching_plus-discounting}). The only difference from $\RMplus$ is that the regret vector is multiplied by a discounting coefficient $\alpha^{(t)} \in (0, 1]$ in every round $t \in [T]$ (\Cref{line:discount}); the special case where $\alpha^{(t)} = 1$ for all $t \in [T]$ is $\RMplus$.

\begin{algorithm}[H]
\caption{Regret matching$^+$ with discounting ($\DRMplus$)}
\label{alg:regret_matching_plus-discounting}
\textbf{Input}: discounting coefficients $(\alpha^{(1)}, \dots, \alpha^{(T)}) \in (0, 1]^T $\;
\SetKwInOut{Input}{Input}
\SetKwInOut{Output}{Output}
Initialize cumulative regrets $\vR^{(0)} \defeq \vec{0}$\;
Initialize strategy $\vx^{(1)} \in \Delta(\cA)$\;
\For{$t = 1, \dots, T$}{
    Set $\vtheta^{(t)} \leftarrow \vR^{(t-1)}$\;
    \If{$ \vtheta^{(t)} \neq \vec{0}$} {
    Compute $\vx^{(t)} \leftarrow \nicefrac{\vtheta^{(t)}}{\| \vtheta^{(t)} \|_1 } $\;
    }
    \Else{
        $\vx^{(t)} \leftarrow \vx^{(t-1)}$\;
    }
    Output strategy $\vx^{(t)} \in \Delta(\cA)$ \;
    Observe utility $\vu^{(t)} \in \R^{\cA}$\;
    $\vR^{(t)} \leftarrow \alpha^{(t)} [\vR^{(t-1)} + \vu^{(t)} - \langle \vx^{(t)}, \vu^{(t)} \rangle \vec{1}]^+$\;\label{line:discount}
}
\end{algorithm}
\section{Omitted proofs}
\label{appendix:proofs}

This section provides the proofs missing from the main body. We begin by stating a simple lemma that bounds the regret of $\RMplus$, implying~\Cref{prop:noregret-RM}; we will then adapt it to account for discounting per~\Cref{alg:regret_matching_plus-discounting}.

\begin{lemma}[Regret vector upper bound]\label{prop:regret bound}
    For any time $t \in [T]$, $\RMplus$ guarantees $\| \vr^{(t)} \|_2^2 \le  \| \vr^{(t-1)} \|_2^2 + \| \vg^{(t)} \|_2^2 $, where $\vg^{(t)} \defeq \vu^{(t)} - \langle \vx^{(t)}, \vu^{(t)} \rangle$ is the instantaneous regret at time $t$.
\end{lemma}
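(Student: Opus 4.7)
The plan is to use the standard two-step argument for such truncated updates: first discard the truncation using a monotonicity in the nonnegative part, and then show that the cross term in the expansion of the squared norm vanishes due to the defining proportionality of $\RMplus$.

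More concretely, starting from the update in~\Cref{line:regupdate+} we have $\vr^{(t)} = [\vr^{(t-1)} + \vg^{(t)}]^+$. Since $([z]^+)^2 \le z^2$ coordinate-wise, taking squared $\ell_2$ norms gives $\|\vr^{(t)}\|_2^2 \le \|\vr^{(t-1)} + \vg^{(t)}\|_2^2$. Expanding the right-hand side yields
\begin{equation*}
\|\vr^{(t)}\|_2^2 \;\le\; \|\vr^{(t-1)}\|_2^2 + 2\langle \vr^{(t-1)}, \vg^{(t)}\rangle + \|\vg^{(t)}\|_2^2,
\end{equation*}
so it remains to argue that the inner product $\langle \vr^{(t-1)}, \vg^{(t)}\rangle$ is nonpositive; in fact, I will show it equals zero.

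For the case $\vr^{(t-1)} = \vec{0}$, the inner product is trivially zero. Otherwise, by the $\RMplus$ invariant the regret vector is coordinate-wise nonnegative, so $\langle \vr^{(t-1)}, \vec{1}\rangle = \|\vr^{(t-1)}\|_1$, and $\vx^{(t)} = \vr^{(t-1)}/\|\vr^{(t-1)}\|_1$ by construction. Substituting the definition $\vg^{(t)} = \vu^{(t)} - \langle \vx^{(t)}, \vu^{(t)}\rangle \vec{1}$ gives
\begin{equation*}
\langle \vr^{(t-1)}, \vg^{(t)}\rangle = \langle \vr^{(t-1)}, \vu^{(t)}\rangle - \langle \vx^{(t)}, \vu^{(t)}\rangle \cdot \|\vr^{(t-1)}\|_1 = \langle \vr^{(t-1)}, \vu^{(t)}\rangle - \langle \vr^{(t-1)}, \vu^{(t)}\rangle = 0,
\end{equation*}
which closes the argument.

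There is no real obstacle here: the proof is essentially a direct calculation. The only subtlety worth flagging is the invariant that $\vr^{(t-1)} \geq \vec{0}$ (so that $\|\vr^{(t-1)}\|_1 = \langle \vr^{(t-1)}, \vec{1}\rangle$), which is built into $\RMplus$ by the explicit truncation in~\Cref{line:regupdate+}, and the degenerate case $\vr^{(t-1)} = \vec{0}$ where the choice of $\vx^{(t)}$ is arbitrary but does not affect the bound since the cross term vanishes trivially. Notably, the same calculation shows that without the truncation we would have exact equality; the truncation is exactly what allows the inequality to hold in the stated form.
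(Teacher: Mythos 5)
Your proof is correct and follows essentially the same route as the paper's: drop the truncation via $\|[\cdot]^+\|_2 \le \|\cdot\|_2$, expand the square, and kill the cross term using $\langle \vr^{(t-1)}, \vg^{(t)}\rangle = \|\vr^{(t-1)}\|_1 \langle \vx^{(t)}, \vg^{(t)}\rangle = 0$ since $\vx^{(t)} \propto \vr^{(t-1)}$. Your explicit handling of the degenerate case $\vr^{(t-1)} = \vec{0}$ is a small point of added care that the paper leaves implicit.
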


\begin{proof}
    By definition of $\RMplus$,  $\langle \vr^{(t-1)}, \vg^{(t)} \rangle = \langle \vx^{(t)}, \vg^{(t)} \rangle  = 0$ since $\vx^{(t)} \propto \vr^{(t-1)}$. Thus, $$\| \vr^{(t)} \|_2^2 = \| [\vr^{(t-1)} + \vg^{(t)}]^+ \|_2^2 \le \|\vr^{(t-1)} + \vg^{(t)} \|_2^2 = \| \vr^{(t-1)} \|_2^2 + \| \vg^{(t)} \|_2^2,$$
    by orthogonality. 
\end{proof}
As a result, the telescopic summation yields $\| \vr^{(T)} \|_2^2 \leq \sum_{t=1}^T \| \vg^{(t)} \|_2^2 \leq m T$ since $\| \vg^{(t)} \|_\infty \leq 1$ (by the assumption that the range of the utilities is bounded by $1$). It is worth noting that a similar proof works for $\RM$. We now adapt~\Cref{prop:regret bound} for $\DRMplus$.

\begin{lemma}
    \label{lemma:DRM}
    For any time $t \in [T]$, $\DRMplus$ guarantees $\| \vr^{(t)} \|_2^2 \le (\alpha^{(t)} )^2 ( \| \vr^{(t-1)} \|_2^2 + \| \vg^{(t)} \|_2^2 )$.
\end{lemma}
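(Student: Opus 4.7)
The plan is to follow the same template as the proof of~\Cref{prop:regret bound}, simply isolating the discounting coefficient $\alpha^{(t)}$ at the front. First I would expand the $\DRMplus$ update rule in~\Cref{line:discount} to write $\vr^{(t)} = \alpha^{(t)} [\vr^{(t-1)} + \vg^{(t)}]^+$, where $\vg^{(t)} = \vu^{(t)} - \langle \vx^{(t)}, \vu^{(t)} \rangle \vec{1}$ is the instantaneous regret. Squaring $\ell_2$ norms, I pull out the factor $(\alpha^{(t)})^2$ and then use the elementary inequality $\| [\vv]^+ \|_2^2 \leq \| \vv \|_2^2$ (which holds coordinatewise) to drop the nonnegative truncation, arriving at
\begin{equation*}
    \| \vr^{(t)} \|_2^2 \leq (\alpha^{(t)})^2 \| \vr^{(t-1)} + \vg^{(t)} \|_2^2.
\end{equation*}

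Next, I would expand the squared norm on the right-hand side and argue, exactly as in the proof of~\Cref{prop:regret bound}, that the cross term $\langle \vr^{(t-1)}, \vg^{(t)} \rangle$ vanishes. This is because $\vx^{(t)}$ is proportional to $\vr^{(t-1)}$ by construction (when $\vr^{(t-1)} \neq \vec{0}$; the case $\vr^{(t-1)} = \vec{0}$ is trivial), so $\langle \vr^{(t-1)}, \vg^{(t)} \rangle$ equals a nonnegative scalar multiple of $\langle \vx^{(t)}, \vg^{(t)} \rangle$, and the latter is identically zero since $\langle \vx^{(t)}, \vu^{(t)} \rangle - \langle \vx^{(t)}, \vu^{(t)} \rangle \langle \vx^{(t)}, \vec{1} \rangle = 0$ using $\langle \vx^{(t)}, \vec{1} \rangle = 1$. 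This yields $\| \vr^{(t-1)} + \vg^{(t)} \|_2^2 = \| \vr^{(t-1)} \|_2^2 + \| \vg^{(t)} \|_2^2$, which, combined with the previous display, gives the claimed bound.

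There is no real obstacle here: the orthogonality is the only nontrivial step and it is already worked out verbatim for $\RMplus$ in~\Cref{prop:regret bound}; introducing $\alpha^{(t)}$ does not interact with it because the discount is applied \emph{after} the $[\cdot]^+$ truncation, so the previous iterate $\vr^{(t-1)}$ still lies in $\R^{\cA}_{\geq 0}$ and remains proportional to $\vx^{(t)}$. The only minor care needed is to note that the $(\alpha^{(t)})^2$ factor should multiply both terms on the right, which is why I pull it out \emph{before} bounding by the untruncated norm and expanding.
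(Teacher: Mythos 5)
Your proposal is correct and follows essentially the same route as the paper's proof: pull out $(\alpha^{(t)})^2$, drop the truncation via $\|[\cdot]^+\|_2 \leq \|\cdot\|_2$, and use the orthogonality $\langle \vr^{(t-1)}, \vg^{(t)} \rangle = 0$ coming from $\vx^{(t)} \propto \vr^{(t-1)}$. Your added remarks (handling $\vr^{(t-1)} = \vec{0}$, and noting that the discount is applied after the truncation so $\vr^{(t-1)} \geq \vec{0}$) are accurate but not needed beyond what the paper already does.
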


\begin{proof}
    As before, $\langle \vr^{(t-1)}, \vg^{(t)} \rangle = \langle \vx^{(t)}, \vg^{(t)} \rangle = 0$ since $\vx^{(t)} \propto \vr^{(t-1)}$. Thus,
    \begin{equation*}
        \| \vr^{(t)} \|_2^2 = ( \alpha^{(t)})^2 \| [\vr^{(t-1)} + \vg^{(t)}]^+ \|_2^2 \le (\alpha^{(t)})^2 \|\vr^{(t-1)} + \vg^{(t)} \|_2^2 \leq (\alpha^{(t)} )^2 ( \| \vr^{(t-1)} \|_2^2 + \| \vg^{(t)} \|_2^2 ).
    \end{equation*}
\end{proof}
A direct consequence is the following bound on the regret vector.

\begin{corollary}
    \label{cor:boundedreg}
    For any time $t \in [T]$, $\DRMplus$ guarantees
\begin{equation}
    \label{eq:unfold}
    \| \vr^{(t)} \|^2_2 \leq ( \alpha^{(t)} )^2 \|\vg^{(t)} \|_2^2 + ( \alpha^{(t)} \alpha^{(t-1)} )^2 \|\vg^{(t-1)} \|_2^2 + \dots + \left( \prod_{\tau=1}^t \alpha^{(\tau)} \right)^2 \| \vg^{(1)} \|_2^2.
\end{equation}
In particular, if $\alpha^{(t)} = 1 - \gamma$ for some constant $\gamma \in (0, 1)$, it follows that $\| \vr^{(T)} \|_2 \leq \sqrt{\nicefrac{m}{\gamma} }$.
\end{corollary}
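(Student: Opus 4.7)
The plan is to prove the first inequality by iterating the one-step recursion from Lemma B.2 down to the base case, and then specialize to the geometric case by summing a geometric series.

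First I would apply Lemma B.2 at time $t$ to obtain $\|\vr^{(t)}\|_2^2 \leq (\alpha^{(t)})^2 \|\vr^{(t-1)}\|_2^2 + (\alpha^{(t)})^2 \|\vg^{(t)}\|_2^2$. Substituting the analogous bound for $\|\vr^{(t-1)}\|_2^2$ introduces an extra multiplicative factor of $(\alpha^{(t-1)})^2$ in front of the residual regret term and of $\|\vg^{(t-1)}\|_2^2$. Continuing this unrolling down to $\vr^{(0)} = \vec 0$ yields
\begin{equation*}
\|\vr^{(t)}\|_2^2 \;\leq\; \sum_{\tau=1}^{t} \left(\prod_{s=\tau}^{t} \alpha^{(s)}\right)^{2} \|\vg^{(\tau)}\|_2^2,
\end{equation*}
which is precisely the telescoped bound claimed in \eqref{eq:unfold}. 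This step is essentially a mechanical induction on $t$, with the only bookkeeping being to track which of the $\alpha^{(s)}$ factors multiply each $\|\vg^{(\tau)}\|_2^2$ term.

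Second, for the special case $\alpha^{(t)} = 1-\gamma$, the product collapses to $(1-\gamma)^{t-\tau+1}$, giving
\begin{equation*}
\|\vr^{(t)}\|_2^2 \;\leq\; \sum_{\tau=1}^{t} (1-\gamma)^{2(t-\tau+1)} \|\vg^{(\tau)}\|_2^2.
\end{equation*}
I would then use the fact that the utility has range bounded by $1$, which implies $\|\vg^{(\tau)}\|_\infty \leq 1$ and hence $\|\vg^{(\tau)}\|_2^2 \leq m$ (since $\vg^{(\tau)} \in \R^m$). Re-indexing $k = t-\tau+1$, the bound becomes $m \sum_{k=1}^{t}(1-\gamma)^{2k}$, which is dominated by the infinite geometric series
\begin{equation*}
m \sum_{k=1}^{\infty} (1-\gamma)^{2k} \;=\; m\,\frac{(1-\gamma)^{2}}{1-(1-\gamma)^{2}} \;=\; m\,\frac{(1-\gamma)^{2}}{\gamma(2-\gamma)} \;\leq\; \frac{m}{\gamma},
\end{equation*}
from which $\|\vr^{(T)}\|_2 \leq \sqrt{m/\gamma}$ follows by taking square roots.

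There is no real obstacle here: the proof is essentially an unwinding of the recursion followed by a geometric-series estimate. The only mild subtlety is verifying the per-round bound $\|\vg^{(\tau)}\|_2^2 \leq m$, which is immediate from the assumption that each utility function $\vu^{(\tau)}$ has range at most $1$ so that the centered vector $\vg^{(\tau)} = \vu^{(\tau)} - \langle \vx^{(\tau)}, \vu^{(\tau)}\rangle \vec 1$ satisfies $\|\vg^{(\tau)}\|_\infty \leq 1$.
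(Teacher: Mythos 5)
Your proposal is correct and follows essentially the same route as the paper: unroll the one-step recursion of \Cref{lemma:DRM} to obtain~\eqref{eq:unfold}, then bound each $\|\vg^{(\tau)}\|_2^2$ by $m$ and sum the geometric series $m\frac{(1-\gamma)^2}{1-(1-\gamma)^2} \leq \frac{m}{\gamma}$. The extra detail you supply (justifying $\|\vg^{(\tau)}\|_2^2 \leq m$ via the range-$1$ assumption) is exactly what the paper uses implicitly.
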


\begin{proof}
    The first part of the claim follows by unfolding the bound of~\Cref{lemma:DRM}. For the second part, using~\eqref{eq:unfold} and the fact that $\| \vg^{(t)} \|_2^2 \leq m$ for any $t$, we have
    \begin{equation*}
        \|\vr^{(t)} \|_2^2 \leq m \left( (1-\gamma)^2 + (1 - \gamma)^4 + \dots + (1-\gamma)^{2t} \right) \leq m \frac{(1-\gamma)^2}{1 - (1 - \gamma)^2} \leq m \frac{1}{\gamma}.
    \end{equation*}
\end{proof}

\subsection{Proofs from Section~\ref{sec:RMplus}}
\label{appendix:proofs-RMplus}

We continue with the proofs from~\Cref{sec:RMplus}. We first establish that $\RMplus$ enjoys a one-step improvement property when the utility is linear.

\onestepRMplus*

\begin{proof}
First, if $\vR' = \vec{0}$, it follows that $\vR + \vu - \langle \vx, \vu \rangle \vec{1} \leq \vec{0}$, where the inequality is to be taken coordinate-wise. Since $\vR \geq \vec{0}$, we have $\langle \vx, \vu \rangle \geq \vu[a]$ for all $a \in \cA$, as claimed.

We now assume $\vR' \neq \vec{0}$. If $\vR = \vec{0}$, we have $\vR' = [ \vu - \langle \vx, \vu \rangle \vec{1}]^+$. \eqref{eq:improve-RM+} can then be equivalently expressed as
\begin{equation*}
    \sum_{a \in \cA} \vR'[a] (\vu[a] - \langle \vx, \vu \rangle) \geq \left( \max_{a \in \cA} \vu[a] - \langle \vx, \vu \rangle \right)^2,
\end{equation*}
which holds since $\vR' = [ \vu - \langle \vx, \vu \rangle \vec{1}]^+$. So we can assume $\vR \neq \vec{0}$. We define $\vdelta \defeq \vR' - \vR$. \eqref{eq:improve-RM+} can be expressed as
\begin{equation*}
    \frac{ \sum_{a \in \cA} ( \vR[a] + \vdelta[a]) \vu[a]}{ \sum_{a' \in \cA} ( \vR[a'] + \vdelta[a'] ) } \geq \frac{ \sum_{a \in \cA} \vR[a] \vu[a]}{ \sum_{a' \in \cA} \vR[a'] } + \frac{(\max_{a \in \cA} \vu[a] - \langle \vx, \vu \rangle)^2}{ \sum_{a' \in \cA} ( \vR[a'] + \vdelta[a']) }.
\end{equation*}
Equivalently,
\begin{align*}
    \sum_{a' \in \cA} \vR[a'] \sum_{a \in \cA} ( \vR[a] + \vdelta[a]) \vu[a] &\geq \sum_{a \in \cA} \vR[a] \sum_{a' \in \cA} ( \vR[a'] + \vdelta[a']) \vu[a] \\
    &+ \sum_{a' \in \cA} \vR[a'] \left( \max_{a \in \cA} \vu[a] - \langle \vx, \vu \rangle \right)^2.
\end{align*}
This in turn is equivalent to
\begin{align*}
    \sum_{a' \in \cA} \vR[a'] \sum_{a \in \cA} \vdelta[a] \vu[a] &\geq \sum_{a \in \cA} \vR[a] \sum_{a' \in \cA} \vdelta[a'] \vu[a] + \sum_{a' \in \cA} \vR[a'] \left( \max_{a \in \cA} \vu[a] - \langle \vx, \vu \rangle \right)^2 \\
    &= \sum_{a' \in \cA} \vdelta[a'] \sum_{a \in \cA} \vR[a] \langle \vx, \vu \rangle + \sum_{a' \in \cA} \vR[a'] \left( \max_{a \in \cA} \vu[a] - \langle \vx, \vu \rangle \right)^2.
\end{align*}
Rearranging,
\begin{equation*}
    \sum_{a' \in \cA} \vR[a'] \sum_{a \in \cA} \vdelta[a] ( \vu[a] - \langle \vx, \vu \rangle ) \geq \sum_{a' \in \cA} \vR[a'] \left( \max_{a \in \cA} \vu[a] - \langle \vx, \vu \rangle \right)^2.
\end{equation*}
Now, for any $a \in \cA$ such that $\vu[a] - \langle \vx, \vu \rangle \geq 0$, it follows that $\vdelta[a] = \vu[a] - \langle \vx, \vu \rangle \geq 0 $; on the other hand, for $a \in \cA$ such that $\vu[a] - \langle \vx, \vu \rangle < 0$, we have $\vdelta[a] \leq 0$. That is, $\vdelta[a] (\vu[a] - \langle \vx, \vu \rangle) \geq 0$, and the claim follows.
\end{proof}

We will now show that~\Cref{lemma:onestepRM+} is, in a certain sense, tight. We consider a simple linear maximization over the simplex. If the regret vector of $\RMplus$ can be initialized arbitrarily, as is the premise in~\Cref{lemma:onestepRM+}, we make the following observation.

\begin{lemma}
    \label{lemma:lowerRM+}
    Consider a utility vector $\vec{u} \in \R^\cA$ and some initial regret vector $\R^\cA_{\geq 0} \ni \vr^{(1)} \neq \vec{0}$. If $\vx^{(1)} = \nicefrac{\vr^{(1)}}{\| \vr^{(1)} \|_1} $ and $\epsilon = \max_{a \in \cA} \vu[a] - \langle \vx^{(1)}, \vu \rangle $ is the initial best-response gap, it takes at least $\nicefrac{\|\vr^{(1)} \|_1}{2  \epsilon }$ iterations for $\RMplus$ to reach a point $\vx^{(t)}$ with best-response gap at most $\epsilon/2$.
\end{lemma}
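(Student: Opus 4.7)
The plan is to prove the lower bound by showing that, as long as $\BRgap(\vx^{(t)},\vu)\in(\epsilon/2,\epsilon]$, each iteration of $\RMplus$ can increase $\langle \vx^{(t)},\vu\rangle$ by at most $O(\epsilon^2/\|\vr^{(1)}\|_1)$. Since reaching $\BRgap(\vx^{(t)},\vu)\le \epsilon/2$ is equivalent to increasing $\langle \vx^{(t)},\vu\rangle$ by at least $\epsilon/2$ from its initial value, integrating this per-step cap forces the iteration count to be at least $\|\vr^{(1)}\|_1/(2\epsilon)$, matching the claim. Crucially, this upper bound must hold uniformly in $\vu$ and $\vr^{(1)}$, which is what distinguishes the universal lower bound from a mere construction of a hard instance.

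The analysis starts from an exact one-step identity obtained by direct algebra from $\vx^{(t)}=\vr^{(t)}/\|\vr^{(t)}\|_1$ together with the orthogonality $\langle \vx^{(t)},\vg^{(t)}\rangle=0$, where $\vg^{(t)}\defeq\vu-\langle\vx^{(t)},\vu\rangle\vec{1}$ is the instantaneous regret:
\[
\langle \vx^{(t+1)}-\vx^{(t)},\,\vu\rangle \;=\; \frac{\langle \vr^{(t+1)}-\vr^{(t)},\,\vg^{(t)}\rangle}{\|\vr^{(t+1)}\|_1}.
\]
A coordinate-wise case analysis on truncation (mirroring the proof of \Cref{lemma:onestepRM+}) bounds the numerator: each non-truncated coordinate contributes exactly $(\vg^{(t)}[a])^2$, whereas each truncated coordinate contributes $-\vr^{(t)}[a]\vg^{(t)}[a]$, which is strictly less than $(\vg^{(t)}[a])^2$ because the truncation condition forces $|\vr^{(t)}[a]|<|\vg^{(t)}[a]|$. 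This yields $\langle \vr^{(t+1)}-\vr^{(t)},\vg^{(t)}\rangle \le \|\vg^{(t)}\|_2^2$. In parallel, I lower bound the denominator by showing that $\|\vr^{(t)}\|_1$ does not drift far from $\|\vr^{(1)}\|_1$ over the relevant window, via $\bigl|\|\vr^{(t+1)}\|_1-\|\vr^{(t)}\|_1\bigr|\le \|\vg^{(t)}\|_1$ and a summation argument that controls cumulative drift.

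The hard part will be the structural upper bound on $\|\vg^{(t)}\|_2^2$ in terms of $\BRgap(\vx^{(t)},\vu)^2$, rather than in terms of the full range of $\vu$. A naive bound such as $\|\vg^{(t)}\|_2^2\le m\cdot(\max_a|\vg^{(t)}[a]|)^2$ loses the quadratic dependence on $\epsilon$ and only delivers a weaker $\|\vr^{(1)}\|_1\cdot\epsilon$-type bound. To recover the claimed scaling uniformly in $\vu$ and $\vr^{(1)}$, I plan to exploit the orthogonality $\langle \vx^{(t)},\vg^{(t)}\rangle=0$ together with the fact that the support weights $\vx^{(t)}[a]$ are exactly proportional to $\vr^{(t)}[a]$: the large-magnitude coordinates of $\vg^{(t)}$ must sit on low-weight actions, which in turn are precisely the ones vulnerable to truncation and therefore contribute less than $(\vg^{(t)}[a])^2$ to the numerator. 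The crux is to convert this heuristic into an amortized bound, likely via a potential function such as $\Phi^{(t)}=\|\vr^{(t)}\|_1\cdot\BRgap(\vx^{(t)},\vu)$, whose telescoping along the trajectory transfers slack between $\|\vg^{(t)}\|_2^2$ and $\BRgap(\vx^{(t)},\vu)^2$ and matches the tight lower bound of \Cref{lemma:onestepRM+}.
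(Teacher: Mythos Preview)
You have misread the lemma as a universal statement over all $\vu$ and $\vr^{(1)}$, whereas the paper treats it as an \emph{existential} lower bound showing that \Cref{lemma:onestepRM+} is tight. The paper's entire proof is a two-dimensional construction: take $\vu=(1-\epsilon,1)$ and $\vr^{(1)}=(\|\vr^{(1)}\|_1,0)$; then getting $\BRgap\le\epsilon/2$ requires the second coordinate of the regret vector to overtake the first, and since each coordinate moves by at most $\epsilon$ per round, this forces at least $\|\vr^{(1)}\|_1/(2\epsilon)$ iterations. Your explicit dismissal of ``a mere construction of a hard instance'' is precisely what the paper does.

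More importantly, the universal statement you are trying to prove is \emph{false}. Take $m$ actions with $\vu=(0,1,1,\dots,1)$ and $\vr^{(1)}=(R,0,\dots,0)$, so $\epsilon=1$ and $\|\vr^{(1)}\|_1=R$. After a single step $\vr^{(2)}=(R,1,\dots,1)$ and $\BRgap(\vx^{(2)},\vu)=R/(R+m-1)$, which is below $\epsilon/2$ whenever $m-1>R$. Thus one iteration suffices even though your purported bound would demand $R/2$. This is exactly the obstruction you flagged as ``the hard part'': in this example $\|\vg^{(1)}\|_2^2=m-1$ while $\BRgap(\vx^{(1)},\vu)^2=1$, so there is no uniform control of $\|\vg^{(t)}\|_2^2$ by $\BRgap(\vx^{(t)},\vu)^2$, and no potential-function trick can manufacture one. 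Your one-step identity and the bound $\langle \vr^{(t+1)}-\vr^{(t)},\vg^{(t)}\rangle\le\|\vg^{(t)}\|_2^2$ are correct, but the program built on them cannot close.
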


Indeed, we consider the two-dimensional problem in which $\vu = (1 - \epsilon, 1)$ and $\vr^{(1)} = (\|\vr^{(1)} \|_1, 0)$. To incur a best-response gap of at most $\epsilon/2$, the player needs to allot a probability mass of at least $1/2$ to the second action. In the meantime, the decrement of the first coordinate of $\vr^{(t)}$ will be at most $\epsilon$ while the increment of the second coordinate of $\vr^{(t)}$ will be at most $\epsilon$. But, for the algorithm to terminate, it must be the case that the second coordinate of $\vr^{(t)}$ is at least as large as the first coordinate of $\vr^{(t)}$, leading to~\Cref{lemma:lowerRM+}. 

Now, given that $\langle \vx^{(t)} - \vx^{(1)}, \vu \rangle \leq \epsilon$, \Cref{lemma:lowerRM+} matches the bound obtained for this problem through~\Cref{lemma:onestepRM+} in the regime where $\| \vr^{(1)} \|_1$ is at least as large as $1/\epsilon$ (so that the norm of $\vr^{(t)}$ is within a constant factor of that of $\vr^{(1)}$, by~\Cref{prop:regret bound}). 

Taking this argument a step further, if we have a regret bound of the form $\| \vr^{(t)} \|_1 \leq \|\vr\|_1 = O_t(1)$, which holds when the utility is fixed, \Cref{lemma:onestepRM+} implies that $ 2 \|\vr \|_1 + 4 \|\vr \|_1 + \dots + \|\vr \|_1/\epsilon = O_\epsilon(  1/\epsilon)$ iterations suffice for $\RMplus$ to have a best-response gap at most $\epsilon$ when facing a fixed utility; this follows by applying~\Cref{lemma:onestepRM+} first for all iterations in which the best-response gap is at least $1/2$, then for all iterations in which it is at least $1/4$, and so forth.

\paragraph{Analysis of non-lazy alternating $\RMplus$} In the main body, we used~\Cref{lemma:onestepRM+} to argue that \emph{lazy} alternating $\RMplus$ converges in potential games (\Cref{theorem:altern-RM+}). One caveat of the lazy version of alternation is that it requires knowing the desired precision $\epsilon$ ahead of time. We will now extend the analysis to encompass the usual version of alternation, albeit at the cost of introducing a further dependence on the iteration bound.

To begin with, we first state a direct refinement of~\Cref{lemma:onestepRM+} that we rely on; this refinement will also be useful in the more general setting of constrained optimization.

\begin{lemma}[Refinement of~\Cref{lemma:onestepRM+}]
    \label{lemma:refinement}
    Under the preconditions of~\Cref{lemma:onestepRM+},
    \begin{equation}
        \label{eq:refined}
        \langle \vx' - \vx, \vu \rangle \geq \frac{1}{\| \vR' \|_1} \|\vR - \vR' \|_2^2.
    \end{equation}
\end{lemma}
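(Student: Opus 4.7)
The plan is to show the refined inequality by reducing it to a clean coordinate-wise comparison between the instantaneous regret $\vg \defeq \vu - \langle \vx, \vu \rangle \vec{1}$ and the update increment $\vdelta \defeq \vR' - \vR$. First I would dispense with the edge case $\vR = \vec{0}$ directly: there $\vR' = [\vg]^+$ and $\langle \vx, \vg \rangle = 0$ by construction of $\vg$, so $\langle \vx' - \vx, \vu \rangle = \langle \vx', \vg \rangle = \frac{1}{\|\vR'\|_1} \langle [\vg]^+, \vg \rangle = \frac{\|\vR'\|_2^2}{\|\vR'\|_1}$, which matches the claimed bound since in this case $\vR - \vR' = -\vR'$.

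For the main case $\vR \neq \vec{0}$, the key observation is that $\langle \vR, \vg \rangle = \|\vR\|_1 \langle \vx, \vg \rangle = 0$, so the cumulative regret vector $\vR$ is orthogonal to the instantaneous regret $\vg$. Using additionally that $\langle \vx' - \vx, \vec{1} \rangle = 0$, I can replace $\vu$ by $\vg$ on the left, which collapses to
$\langle \vx' - \vx, \vu \rangle = \langle \vx', \vg \rangle = \frac{1}{\|\vR'\|_1} \langle \vR', \vg \rangle = \frac{1}{\|\vR'\|_1} \langle \vdelta, \vg \rangle$, where in the last step I subtracted the vanishing quantity $\langle \vR, \vg \rangle$. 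Hence it suffices to establish the purely algebraic inequality $\langle \vdelta, \vg \rangle \geq \|\vdelta\|_2^2$.

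This coordinate-wise inequality is the crux, though it is short. Fixing $a \in \cA$, I would split on whether the positive-part threshold activates. If $\vR[a] + \vg[a] \geq 0$, the threshold does nothing and $\vdelta[a] = \vg[a]$, giving equality $\vdelta[a]\vg[a] = \vdelta[a]^2$. Otherwise $\vR'[a] = 0$, so $\vdelta[a] = -\vR[a] \leq 0$ and $\vg[a] < -\vR[a] = \vdelta[a] \leq 0$; both are nonpositive with $|\vg[a]| \geq |\vdelta[a]|$, which yields $\vdelta[a]\vg[a] \geq \vdelta[a]^2$. Summing over $a \in \cA$ closes the argument.

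I do not expect a real obstacle here; the one conceptual move worth flagging is recognizing that centering $\vu$ into $\vg$ makes $\vg$ orthogonal to both $\vx$ and $\vR$, which is the same orthogonality ingredient used to prove the standard $\RMplus$ regret bound earlier in this appendix. Equivalently, the thresholding inequality $\langle \vdelta, \vg \rangle \geq \|\vdelta\|_2^2$ can be viewed as nonexpansiveness of Euclidean projection onto $\R^\cA_{\geq 0}$ applied to $\vR + \vg$ and $\vR$, which gives an alternative one-line derivation if preferred.
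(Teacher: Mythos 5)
Your proof is correct and follows essentially the same route as the paper, which derives \Cref{lemma:refinement} by repeating the algebra of \Cref{lemma:onestepRM+}: reduce the claim to the per-coordinate inequality $\vdelta[a]\,\vg[a] \geq \vdelta[a]^2$ via the orthogonality $\langle \vR, \vg\rangle = \langle \vx, \vg\rangle = 0$, and verify it by splitting on whether the truncation $[\cdot]^+$ activates. Your presentation via the instantaneous regret $\vg$ (and the firm-nonexpansiveness reading of the thresholding step) is a slightly cleaner packaging of the same argument, not a different one.
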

In particular, \eqref{eq:refined} implies~\eqref{eq:improve-RM+} since $\|\vr - \vr' \|^2_2 \geq \| \vr - \vr' \|_\infty^2 \geq ( \max_{a \in \cA} \vu[a] - \langle \vx, \vu \rangle )^2$, by definition of $\vR'$. The proof of~\Cref{lemma:refinement} is identical to that of~\Cref{lemma:onestepRM+}.

The next elementary lemma shows that, so long as the norm of the regret vector is not too small, closeness in regrets implies closeness in strategies.

\begin{lemma}
    \label{lemma:closeness}
    For $\R^\cA_{\geq 0} \ni \vr, \vr' \neq \vec{0}$, let $\vx \defeq \nicefrac{\vr}{ \|\vr\|_1}$ and $\vx' \defeq \nicefrac{\vr'}{ \|\vr'\|_1}$. Then
    \begin{equation*}
    \|\vx - \vx' \|_1 \leq \|\vR - \vR'\|_1 \left( \frac{1}{ \|\vR\|_1} + \frac{1}{\|\vR' \|_1} \right).
\end{equation*}
\end{lemma}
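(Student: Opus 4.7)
The plan is to reduce the statement to bounding a common-denominator quantity and then invoke the triangle inequality in an artfully symmetric way. Write $s \defeq \|\vr\|_1$ and $s' \defeq \|\vr'\|_1$; both are positive by hypothesis. A direct computation gives
$$\vx - \vx' \;=\; \frac{\vr}{s} - \frac{\vr'}{s'} \;=\; \frac{s'\vr - s\vr'}{s s'},$$
so it suffices to prove $\|s'\vr - s\vr'\|_1 \leq (s+s')\,\|\vr - \vr'\|_1$, since dividing by $ss'$ would then yield exactly the desired bound $(1/s + 1/s')\|\vr - \vr'\|_1$.

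To obtain the clean symmetric $1/s + 1/s'$ form rather than the easier but one-sided estimate $2/s$ or $2/s'$ that comes from either of the two ``add-and-subtract'' splittings, I would use the symmetric identity
$$s'\vr - s\vr' \;=\; \frac{s+s'}{2}\,(\vr - \vr') \;+\; \frac{s' - s}{2}\,(\vr + \vr'),$$
which is immediate by expanding the right-hand side. Applying the triangle inequality and using $\|\vr + \vr'\|_1 = s + s'$ (valid because all entries are nonnegative) gives
$$\|s'\vr - s\vr'\|_1 \;\leq\; \frac{s+s'}{2}\,\|\vr - \vr'\|_1 \;+\; \frac{|s' - s|}{2}\,(s + s').$$

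The final step is to absorb the residual $|s'-s|$ factor via the reverse triangle inequality in $\ell_1$: $|s - s'| = \bigl|\|\vr\|_1 - \|\vr'\|_1\bigr| \leq \|\vr - \vr'\|_1$. Substituting yields $\|s'\vr - s\vr'\|_1 \leq (s+s')\,\|\vr - \vr'\|_1$, and dividing by $ss'$ gives the stated bound. The proof is essentially routine linear algebra; I do not anticipate a serious obstacle. The only mild subtlety is choosing the symmetric decomposition above so that the final estimate comes out as $\tfrac{1}{s} + \tfrac{1}{s'}$ rather than the asymmetric $\tfrac{2}{\min(s,s')}$, both of which are of course consistent with each other (the symmetric version being the weaker and more convenient bound to apply downstream).
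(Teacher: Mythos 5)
Your proof is correct, but it takes a genuinely different route from the paper's. The paper works coordinate-wise: it writes $\vx[a]-\vx'[a]$ over the common denominator $\|\vR\|_1\|\vR'\|_1$, decomposes the cross term as $\vR[a]\vR'[a']-\vR'[a]\vR[a'] = \vR[a](\vR'[a']-\vR[a'])+\vR[a'](\vR[a]-\vR'[a])$, and then applies the triangle inequality and sums over $a$. Taken literally, that single add-and-subtract split yields the one-sided bound $2\|\vR-\vR'\|_1/\|\vR'\|_1$ (or, with the mirror split, $2\|\vR-\vR'\|_1/\|\vR\|_1$), so to land exactly on the stated symmetric constant one still needs the small observation that $2/\max\{\|\vR\|_1,\|\vR'\|_1\}\le 1/\|\vR\|_1+1/\|\vR'\|_1$, i.e.\ one should take the better of the two splits --- a step the paper elides with ``the claim follows.'' Your argument instead uses the global vector identity $\|\vR'\|_1\vR-\|\vR\|_1\vR' = \tfrac{\|\vR\|_1+\|\vR'\|_1}{2}(\vR-\vR')+\tfrac{\|\vR'\|_1-\|\vR\|_1}{2}(\vR+\vR')$, together with $\|\vR+\vR'\|_1=\|\vR\|_1+\|\vR'\|_1$ (this is where nonnegativity enters) and the reverse triangle inequality $\bigl|\|\vR\|_1-\|\vR'\|_1\bigr|\le\|\vR-\vR'\|_1$; this produces the symmetric constant $1/\|\vR\|_1+1/\|\vR'\|_1$ in one shot with no case distinction. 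Both proofs are elementary and of comparable length; yours has the advantage of being self-contained and exactly matching the stated constant, while the paper's coordinate-wise computation is the more ``hands-on'' manipulation. No gap in your argument.
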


\begin{proof}
    The term $\vx[a] - \vx'[a]$ can be expressed, for any $a \in \cA$, as 
\begin{align*}
    \frac{ \vR[a] }{ \sum_{a' \in \cA} \vR[a']} - \frac{ \vR'[a] }{ \sum_{a' \in \cA} \vR'[a']}  &= \frac{ \sum_{a' \in \cA} (\vR[a] \vR'[a'] - \vR'[a] \vR[a']) }{ \|\vR\|_1 \|\vR'\|_1 } \\
    &= \frac{ \sum_{a' \in \cA} ( \vR[a] (\vR'[a'] - \vR[a']) + \vR[a'](\vR[a] - \vR'[a]) )  }{ \|\vR\|_1 \|\vR'\|_1 },
\end{align*}
and the claim follows.
\end{proof}

With those two helper lemmas in hand, we are now ready to analyze (non-lazy) alternating $\RMplus$ in potential games.

\begin{theorem}
    \label{theorem:nonlazy}
    In any potential game with utilities in $[0, 1]$, alternating $\RMplus$ requires at most $2 \lceil \frac{ 625 m^4 n^4 \Phimax^2 }{\delta^4 \epsilon^4} \rceil$ rounds to converge to an $\epsilon$-Nash equilibrium, where $\delta_i \defeq \BRgap_i(\vx_i^{(1)}, \vu_i^{(1)}) > 0$ and $\delta \defeq \min_{1 \leq i \leq n} \delta_i$.
\end{theorem}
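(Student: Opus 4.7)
The plan is to adapt the telescoping argument behind \Cref{theorem:altern-RM+} to the non-lazy alternating setting, where every player updates each round regardless of whether their best-response gap exceeds $\epsilon$. The main new subtlety is that, during a single round in which the $n$ players act sequentially, the utility vector faced by player $i$ may drift between the start of the round and the moment of $i$'s turn, because earlier players have already moved. Consequently, a player whose gap exceeded $\epsilon$ at the start of a non-equilibrium round could see their gap shrink well below $\epsilon$ by the time they actually act, so we cannot directly invoke \Cref{lemma:onestepRM+} to get a per-round improvement of $\Omega(\epsilon^2 / (m\sqrt{T}))$ as in the lazy analysis. I would handle this via a case split.

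The first step is to observe that the assumption $\delta_i > 0$ for every player, combined with the fact that the $\ell_2$-norm of $\vR_i$ is nondecreasing across $i$'s updates (\Cref{prop:regret bound}), guarantees $\|\vR_i^{(t)}\|_1 \geq \|\vR_i^{(t)}\|_\infty \geq \delta$ for every $t \geq 2$. Together with the standard upper bound $\|\vR_i^{(t)}\|_1 \leq m\sqrt{T}$ from \Cref{prop:noregret-RM}, we therefore maintain $\delta \leq \|\vR_i^{(t)}\|_1 \leq m\sqrt{T}$ uniformly.

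Now fix a round $t \geq 2$ in which $\vx^{(t)}$ is not an $\epsilon$-NE, so some player $i^\star$ has $\BRgap_{i^\star}(\vx_{i^\star}^{(t)}, \vu_{i^\star}(\vx^{(t)}_{-i^\star})) > \epsilon$. Using multilinearity of $u_{i^\star}$ in $\vx_{-i^\star}$ together with utilities in $[0,1]$, the drift of $\vu_{i^\star}$ while earlier players act is bounded by $\sum_{i' < i^\star} \|\vx_{i'}^{(t, i'+)} - \vx_{i'}^{(t, i')}\|_1$. Combining this with the $2$-Lipschitzness of $\BRgap(\vx, \cdot)$ in $\|\cdot\|_\infty$, two cases arise: either (A) $\BRgap_{i^\star}$ is still at least $\epsilon/2$ when $i^\star$ acts, in which case \Cref{lemma:onestepRM+} yields an improvement of at least $\epsilon^2/(4 m \sqrt{T})$; or (B) some earlier player $i'$ moved by at least $\epsilon/(4n)$ in $\ell_1$, in which case I would invoke \Cref{lemma:refinement} together with \Cref{lemma:closeness} applied in reverse: the $\ell_1$ strategy change $\epsilon/(4n)$ forces $\|\vR_{i'}^{(t, i'+)} - \vR_{i'}^{(t, i')}\|_1 \gtrsim \epsilon \delta / n$ (here the lower bound $\|\vR_{i'}\|_1 \geq \delta$ is essential, explaining the $\delta$-dependence), and converting to the $\ell_2$ norm and dividing by $\|\vR_{i'}^{(t, i'+)}\|_1 \leq m\sqrt{T}$ gives an improvement of order $\epsilon^2 \delta^2 / (n^2 m^2 \sqrt{T})$.

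Taking the smaller of the two case bounds as the per-round guarantee, every non-equilibrium round contributes improvement of order $\epsilon^2 \delta^2 / (n^2 m^2 \sqrt{T})$ to the potential. Summing $T$ such rounds and comparing to $\Phimax$ yields $T \lesssim n^4 m^4 \Phimax^2 / (\epsilon^4 \delta^4)$, matching the stated bound up to absolute constants; the leading factor of $2$ absorbs the warm-up round, in which the lower bound $\|\vR_i\|_1 \geq \delta$ is not yet in force. The main obstacle is the Case (B) step, namely converting a small $\ell_1$ strategy movement into an $\ell_2^2$ regret increment and hence into a potential improvement, while simultaneously handling the possibility that the regret norms could be small; this is precisely where the initial gap $\delta$ enters, in order to keep the denominators in \Cref{lemma:closeness} under control.
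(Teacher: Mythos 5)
Your proposal is correct in its essentials, but it takes a genuinely different route from the paper. The paper's proof does not argue improvement per bad round at all: it first uses \Cref{lemma:refinement} and the bound $\|\vr_i^{(t)}\|_1 \le m\sqrt{t}$ to show $\sum_{t=1}^T\sum_{i=1}^n \|\vr_i^{(t)}-\vr_i^{(t-1)}\|_2^2 \le m\Phimax\sqrt{T}$, then pigeonholes to find a single ``quiet'' round in which every player's regret displacement is at most $\epsilon$, and finally certifies that round as an approximate equilibrium: small regret displacement gives a small best-response gap with respect to the \emph{observed} utilities, while \Cref{lemma:closeness} (with the $\delta$ lower bound on the regret norms) plus \Cref{claim:smoothness} shows the observed utilities are within $O(n\sqrt{m}\epsilon/\delta)$ of the true ones. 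Your argument is instead a contrapositive, per-round one: every non-equilibrium round forces a quantifiable potential gain, via a case split on whether the within-round drift is small (invoke \Cref{lemma:onestepRM+} directly) or large (some earlier player moved a lot in $\ell_1$, which via \Cref{lemma:closeness} ``in reverse'' forces a large regret displacement and hence, by \Cref{lemma:refinement}, a large potential gain). Both routes use the same three ingredients (\Cref{lemma:refinement}, \Cref{lemma:closeness}, \Cref{claim:smoothness}) and the same mechanism by which $\delta$ enters, and both give $T = O(n^4m^4\Phimax^2/(\delta^4\epsilon^4))$; yours is closer in spirit to the lazy analysis of \Cref{theorem:altern-RM+} and makes the progress measure explicit, while the paper's version avoids the case analysis and directly exhibits the equilibrium round.

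Three minor points. First, the lower bound $\|\vR_i^{(t)}\|_1 \ge \delta$ should be routed through $\|\vR_i^{(t)}\|_1 \ge \|\vR_i^{(t)}\|_2 \ge \delta_i$, using the monotonicity of the $\ell_2$ norm from \Cref{prop:increasing-regret} (not \Cref{prop:regret bound}, which is the upper bound); the $\ell_\infty$ norm is not monotone. Second, with your thresholds ($\epsilon/2$ and $\epsilon/(4n)$) the Case (B) chain $\|\vx-\vx'\|_1 \ge \epsilon/(4n) \Rightarrow \|\vR-\vR'\|_1 \ge \epsilon\delta/(8n) \Rightarrow \|\vR-\vR'\|_2^2 \ge \epsilon^2\delta^2/(64 n^2 m)$ produces an absolute constant worse than the stated $625$; this is cosmetic but means the theorem's literal constant is not reproduced without tuning. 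Third, be explicit that the round-$1$ updates (where $\vR_i^{(0)}=\vec{0}$ and \Cref{lemma:closeness} is inapplicable) are excluded, which your factor of $2$ is meant to absorb.
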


The assumption that $\delta_i > 0$ is without any loss in the following sense. The analysis requires that $\| \vr^{(t)}_i \|_2 > 0$ for all $i \in [n]$ and any sufficiently large $t$. By~\Cref{prop:increasing-regret}, it suffices if a player incurs a positive best-response gap \emph{at some} time. In the contrary case, if a player always has zero best-response gap, it will always play the same strategy (by definition of $\RMplus$ in~\Cref{alg:regret_matching_plus}), thereby reducing to a potential game with $n-1$ players. We further remark that, in accordance with~\Cref{theorem:altern-RM+}, the bound in~\Cref{theorem:nonlazy} can be similarly parameterized in terms of the maximum regret incurred by a player. A more pedantic point about~\Cref{theorem:nonlazy} is that utilities are taken to be in $[0, 1]$, while in the rest of the paper we only assume that the range is bounded by $1$; this innocuous assumption is used in~\Cref{claim:smoothness} below.

\begin{proof}[Proof of~\Cref{theorem:nonlazy}]
    By~\Cref{lemma:refinement}, the telescopic summation after $T$ rounds yields
    \begin{equation}
        \label{eq:patlength}
        \Phimax \geq \sum_{t=1}^{T} \sum_{i=1}^n \frac{1}{ \| \vr_i^{(t)} \|_1 } \|\vr_i^{(t)} - \vr_i^{(t-1)} \|_2^2 \geq \sum_{t=1}^{T} \frac{1}{ \max_{i} \| \vr_i^{(t)} \|_1 } \sum_{i=1}^n \|\vr_i^{(t)} - \vr_i^{(t-1)} \|_2^2.
    \end{equation}
    Since $\|\vr_i^{(t)} \|_1 \leq m \sqrt{t}$ for any player $i \in [n]$, it follows that $\sum_{t=1}^T \sum_{i=1}^n \|\vr_i^{(t+1)} - \vr_i^{(t)} \|_2^2 \leq m \Phimax \sqrt{T}$. As a result, after at most $2 \lceil \nicefrac{m^2 \Phimax^2}{\epsilon^4} \rceil$ rounds, there will be a time $t$ such that $\sum_{i=1}^n \| \vr_i^{(t)} - \vr_i^{(t-1)} \|_2^2 + \sum_{i=1}^n \| \vr_i^{(t+1)} - \vr_i^{(t)} \|_2^2 \leq \epsilon^2$, which in turn implies $\| \vr_i^{(t)} - \vr_i^{(t-1)} \|_2, \| \vr_i^{(t+1)} - \vr_i^{(t)} \|_2 \leq \epsilon$ for all $i \in [n]$. Now, by assumption, we know that $\| \vr^{(1)}_i \|_2 \geq \delta_i > 0$ for all $i \in [n]$. By the monotonicity property of the regret vector (\Cref{prop:increasing-regret}, proven in the sequel), it follows that $\| \vr^{(t)}_i \|_2 \geq \delta_i > 0$ for all $t \in [T+1]$ and $i \in [n]$. Consequently, applying~\Cref{lemma:closeness}, we have $\| \vx_i^{(t+1)} - \vx_i^{(t)} \|_1 \leq 2 \| \vr_i^{(t+1)} - \vr_i^{(t)} \|_1/\delta_i \leq 2 \sqrt{m} \epsilon/\delta_i$ since $\|\cdot\|_1 \leq \sqrt{m} \|\cdot\|_2$. Next, we will make use of the following simple claim.

    \begin{claim}
        \label{claim:smoothness}
        Consider a normal-form game with utilities in $[0, 1]$. For any two joint strategies $(\vx_1, \dots, \vx_n)$ and $(\vx_1', \dots, \vx_n')$, it holds that $\| \vu_i(\vx_{-i}) - \vu_i(\vx_{-i}') \|_\infty \leq \sum_{i' \neq i} \|\vx_{i'} - \vx_{i'}'\|_1$ for any player $i \in [n]$.
    \end{claim}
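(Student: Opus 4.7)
The plan is to reduce the claim to a standard bound on the total variation distance between product distributions. First I would fix a player $i \in [n]$ and a pure action $a_i \in \cA_i$, and, unfolding the multilinear expression from the preliminaries, write
\[
    \vu_i(\vx_{-i})[a_i] - \vu_i(\vx'_{-i})[a_i] = \sum_{a_{-i}} u_i(a_i, a_{-i}) \bigl( P(a_{-i}) - Q(a_{-i}) \bigr),
\]
where $P \defeq \bigotimes_{i' \neq i} \vx_{i'}$ and $Q \defeq \bigotimes_{i' \neq i} \vx'_{i'}$ are the product distributions over the opponents' action profile. Since $u_i(a_i, \cdot) \in [0,1]$ and $\sum_{a_{-i}} P(a_{-i}) = \sum_{a_{-i}} Q(a_{-i}) = 1$, splitting the sum according to the sign of $P - Q$ and bounding the values of $u_i$ by $1$ on the positive part and by $0$ on the negative part gives
\[
    \bigl| \vu_i(\vx_{-i})[a_i] - \vu_i(\vx'_{-i})[a_i] \bigr| \leq \tfrac{1}{2} \| P - Q \|_1.
\]

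Next I would bound $\|P - Q\|_1$ by the sum of marginal $\ell_1$ differences via a standard hybrid (coupling) argument. Enumerate the opponents as $i_1, \dots, i_{n-1}$ and define the hybrids $H_j \defeq \vx'_{i_1} \otimes \dots \otimes \vx'_{i_j} \otimes \vx_{i_{j+1}} \otimes \dots \otimes \vx_{i_{n-1}}$, so that $H_0 = P$ and $H_{n-1} = Q$. Consecutive hybrids differ only in one factor, and because the $\ell_1$ norm of a tensor product of probability distributions with a signed measure equals the $\ell_1$ norm of that signed measure, we have $\|H_{j-1} - H_j\|_1 = \|\vx_{i_j} - \vx'_{i_j}\|_1$. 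Applying the triangle inequality along the telescope yields
\[
    \|P - Q\|_1 \leq \sum_{j=1}^{n-1} \|H_{j-1} - H_j\|_1 = \sum_{i' \neq i} \|\vx_{i'} - \vx'_{i'}\|_1.
\]

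Combining the two displays and taking the maximum over $a_i \in \cA_i$ yields $\|\vu_i(\vx_{-i}) - \vu_i(\vx'_{-i})\|_\infty \leq \tfrac{1}{2} \sum_{i' \neq i} \|\vx_{i'} - \vx'_{i'}\|_1$, which is stronger than the claimed bound. There is no real obstacle here: both ingredients are classical, and the only minor care needed is the tensor factorization of the $\ell_1$ norm used when evaluating $\|H_{j-1} - H_j\|_1$. I would not bother retaining the factor $\tfrac{1}{2}$ in the statement, as the downstream application in the proof of \Cref{theorem:nonlazy} only requires the looser bound.
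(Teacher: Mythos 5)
Your proof is correct and follows essentially the same route as the paper: expand the multilinear utility, bound the coefficients $u_i(\cdot)\in[0,1]$, and reduce to the subadditivity of total variation over product distributions, which the paper cites (Hoeffding) and you instead prove explicitly via the standard hybrid/telescoping argument. Your version is marginally sharper---exploiting that both product measures are normalized yields the extra factor of $\tfrac{1}{2}$, which the paper's termwise bound discards and which is indeed unnecessary downstream.
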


    \begin{proof}
        We fix a player $i \in [n]$. We have
        \begin{align}
            \| \vu_i(\vx_{-i}) - \vu_i(\vx_{-i}') \|_\infty &= \left| \sum_{\vec{a} \in \cA_1 \times \dots \times \cA_n} u_i(\cdot, \vec{a}_{-i}) \left( \prod_{i' \neq i} \vx_{i'}[a_{i'}] - \prod_{i' \neq i} \vx'_{i'}[a_{i'}] \right) \right| \notag \\
            &\leq \left| \sum_{\vec{a} \in \cA_1 \times \dots \times \cA_n} \left( \prod_{i' \neq i} \vx_{i'}[a_{i'}] - \prod_{i' \neq i} \vx'_{i'}[a_{i'}] \right) \right| \label{align:first-TV} \\
            &\leq \sum_{i' \neq i} \|\vx_{i'} - \vx_{i'}'\|_1,\label{align:second-TV}
        \end{align}
        where~\eqref{align:first-TV} uses triangle inequality together with the assumption that $|u_i(\cdot)| \leq 1$, and~\eqref{align:second-TV} uses a bound on the total variation distance of a product distribution in terms of the sum of the total variation distances of its marginals~\citep{Hoeffding58:Distinguishability}.
    \end{proof}
    Using this lemma, we now observe that $\| \vu_i^{(t)} - \vu_i(\vx_{-i}^{(t)}) \|_\infty \leq \sum_{i' < i} \|\vx^{(t+1)}_{i'} - \vx^{(t)}_{i'}\|_1 \leq \sum_{i' \neq i} \|\vx^{(t+1)}_{i'} - \vx^{(t)}_{i'}\|_1 \leq 2 n \sqrt{m} \epsilon / \delta$ for each $i \in [n]$, where $\delta = \min_{1 \leq i \leq n} \delta_i$. Furthermore, in view of the fact that $\| \vr_i^{(t)} - \vr_i^{(t-1)} \|_2 \leq \epsilon$, it follows that $\| \vu^{(t)}_i \|_\infty - \langle \vx_i^{(t)}, \vu_i^{(t)} \rangle \leq \epsilon$. Thus,
    \begin{align}
        \| \vu_i(\vx^{(t)}_{-i}) \|_\infty - \langle \vx_i^{(t)}, \vu_i(\vx^{(t)}_{-i}) \rangle &=  \| \vu^{(t)}_i \|_\infty - \langle \vx_i^{(t)}, \vu_i^{(t)} \rangle + \| \vu_i(\vx^{(t)}_{-i}) \|_\infty - \| \vu^{(t)}_i \|_\infty + \langle \vx_i^{(t)}, \vu^{(t)}_i - \vu_i(\vx^{(t)}_{-i}) \rangle \notag \\
        &\leq \| \vu^{(t)}_i \|_\infty - \langle \vx_i^{(t)}, \vu_i^{(t)} \rangle + 2 \| \vu_i^{(t)} - \vu_i(\vx_{-i}^{(t)}) \|_\infty \label{align:first-udiff} \\
        &\leq \epsilon \left(1 + \frac{4 n \sqrt{m}}{\delta}\right) \leq \epsilon \left( 5 \frac{n \sqrt{m}}{\delta} \right). \notag
    \end{align}
    where~\eqref{align:first-udiff} follows from the fact that $\langle \vx_i^{(t)}, \vu^{(t)}_i - \vu_i(\vx^{(t)}_{-i}) \rangle \leq \| \vx_i^{(t)} \|_1 \| \vu^{(t)}_i - \vu_i(\vx^{(t)}_{-i}) \|_\infty \leq \| \vu^{(t)}_i - \vu_i(\vx^{(t)}_{-i}) \|_\infty$ since $\| \vx_i^{(t)} \|_1 = 1$. We conclude that $(\vx_1^{(t)}, \dots, \vx_n^{(t)})$ is an $\epsilon ( \nicefrac{5 n \sqrt{m}}{\delta})$-Nash equilibrium; rescaling $\epsilon$ leads to the claim.
\end{proof}

\begin{remark}[Convergence to CCE under alternating updates]
    \label{remark:alternating-CCE}
    The folk connection linking no-regret learning and coarse correlated equilibria (\Cref{prop:CCE}) is predicated on the dynamics being executed in a simultaneous fashion. We observe that, in potential games, even alternating $\RMplus$ converges to the set of CCEs in the following sense. \Cref{lemma:closeness} together with~\eqref{eq:patlength} imply that $\sum_{t=1}^T \sum_{i=1}^n \| \vx_i^{(t+1)} - \vx_i^{(t)} \|^2_1 = O_T(\sqrt{T})$. The Cauchy-Schwarz inequality in turn yields 
    \begin{equation}
        \label{eq:fo-pathlength}
      \sum_{t=1}^T \sum_{i=1}^n \| \vx_i^{(t+1)} - \vx_i^{(t)} \|_1 \leq \sqrt{\sum_{t=1}^T \left( \sum_{i=1}^n \| \vx_i^{(t+1)} - \vx_i^{(t)} \|_1 \right)^2 \sum_{t=1}^T 1^2 } = O_T(T^{3/4}).
    \end{equation}
    Now, by the fact that $\RMplus$ has the no-regret property (\Cref{prop:noregret-RM}), $\max_{\vx'_i \in \Delta(\cA_i)} \sum_{t=1}^T \langle \vx_i' - \vx_i^{(t)}, \vu_i^{(t)} \rangle = O_T(\sqrt{T})$. Further, by~\Cref{claim:smoothness} and~\eqref{eq:fo-pathlength}, $\sum_{t=1}^T \|\vu_i(\vx^{(t)}_{-i}) - \vu_i^{(t)} \|_\infty \leq \sum_{t=1}^T \sum_{i' \neq i} \| \vx_{i'}^{(t+1)} - \vx_{i'}^{(t)} \|_1 = O_T(T^{3/4})$. As a result, we conclude that, for any player $i \in [n]$,
    \begin{equation*}
        \max_{\vx'_i \in \Delta(\cA_i)} \sum_{t=1}^T \langle \vx_i' - \vx_i^{(t)}, \vu_i(\vx_{-i}^{(t)}) \rangle = O_T(T^{3/4}).
    \end{equation*}
    This implies that the correlated distribution $\frac{1}{T} \sum_{t=1}^T \bigotimes_{i=1}^n \vx_i^{(t)}$ is an $\epsilon$-CCE for some $\epsilon = O_T(T^{-1/4})$ (\Cref{prop:CCE}); here, $\bigotimes_{i=1}^n \vx_i^{(t)}$ denotes the product distribution induced by $(\vx_1^{(t)}, \dots, \vx_n^{(t)})$.
\end{remark}

Moving on, a further implication of~\Cref{lemma:lowerRM+} is that having a large regret vector can slow down $\RMplus$. Employing discounting, in the form of $\DRMplus$ (\Cref{alg:regret_matching_plus-discounting}), addresses this deficiency in potential games, as we prove below. It is worth stressing that while discounting is sensible when employing alternating $\RMplus$ in potential games, this is not the case in more general constrained optimization problems; there, having a regret vector with a \emph{small} norm can impede convergence (\emph{cf.}~\Cref{lemma:R-L}).

\DRMcor*

\begin{proof}
    \Cref{lemma:onestepRM+} can be directly adjusted for $\DRMplus$: the updated strategy $\vx' = \nicefrac{\vr'}{\| \vr' \|_1 } $, where $\vR' = \alpha [ \vR + \vu - \langle \vx, \vu \rangle \vec{1}]^+ \neq \vec{0}$, remains the same since we only rescaled the regret vector by $\alpha$. That is, for any $\vu \in \R^{\cA}$,
    \begin{equation*}
        \langle \vx' - \vx, \vu \rangle \geq \frac{1}{\| \vR' \|_1} \left( \max_{a \in \cA} \vu[a] - \langle \vx, \vu \rangle \right)^2.
    \end{equation*}
    The key difference compared to~\Cref{theorem:altern-RM+} is that we now maintain the invariance $\| \vr_i^{(t)} \|_2 \leq \sqrt{\nicefrac{m}{\gamma}}$ for all $i \in [n]$ and $t \in [T]$ (\Cref{cor:boundedreg}). Consequently, 
    \begin{equation*}
        \Phimax \geq \sum_{t=1}^T \sum_{i=1}^n \frac{1}{ \| \vr_i^{(t)} \|_1 } \BRgap_i(\vx_i^{(t)}, \vu_i^{(t)})^2 \mathbbm{1} \{ \BRgap_i(\vx_i^{(t)}, \vu_i^{(t)} ) > \epsilon \}.
    \end{equation*}
    If in every round $t \in [T]$ there is a player $i \in [n]$ such that $\BRgap_i(\vx_i^{(t)}, \vu_i^{(t)}) > \epsilon$, we have $\Phimax \geq \sum_{t=1}^T (\nicefrac{\sqrt{\gamma}}{m}) \epsilon^2$, so $T \leq \frac{m \Phimax}{\epsilon^2 \sqrt{\gamma}}$. This concludes the proof.
\end{proof}

Unlike $\RMplus$ and $\DRMplus$, $\RM$ only has a \emph{conditional} one-step improvement because the regret vector can have negative coordinates.

\onestepRM*

\begin{proof}
    We define $\vdelta \defeq \vtheta' - \vtheta$. Following the proof of~\Cref{lemma:onestepRM+}, it suffices to show that
    \begin{equation}
        \label{eq:RM-goal}
    \sum_{a' \in \cA} \vtheta[a'] \sum_{a \in \cA} \vdelta[a] ( \vu[a] - \langle \vx, \vu \rangle ) \geq \sum_{a' \in \cA} \vtheta[a'] \left( \max_{a \in \cA} \vu[a] - \langle \vx, \vu \rangle \right)^2 \mathbbm{1} \left\{ \vR[a] \geq 0 \right\}.
\end{equation}
For an action $a \in \cA$, we consider the following cases.
\begin{itemize}
    \item If $\vu[a] - \langle \vx, \vu \rangle \geq 0$,
    \begin{itemize}
        \item if $\vR[a] \geq 0$, we have $\vdelta[a] = \vu[a] - \langle \vx, \vu \rangle$.
        \item If $\vR[a] < 0$, it follows that $\vdelta[a] \geq 0$; in particular, $\vdelta[a] = 0$ if $\vR'[a] \leq 0$ and $\vdelta[a] > 0$ otherwise. As a result, $\vdelta[a] ( \vu[a] - \langle \vx, \vu \rangle ) \geq 0$.
    \end{itemize}
    \item If $\vu[a] - \langle \vx, \vu \rangle < 0$,
    \begin{itemize}
        \item if $\vR[a] \leq 0$, we have $\vdelta[a] = 0$ since $\vtheta[a] = 0 = \vtheta'[a]$.
        \item if $\vR[a] > 0$, it follows that $\vdelta[a] < 0$. Again, we have $\vdelta[a] ( \vu[a] - \langle \vx, \vu \rangle ) \geq 0$.
    \end{itemize}
\end{itemize}
Combining those items, \eqref{eq:RM-goal} follows.
\end{proof}

We now turn to the more general setting of constrained optimization. First, combining~\Cref{lemma:closeness,lemma:refinement} that were proven earlier, we formally show that $\RMplus$ improves the value of the underlying function provided that the norm of the regret vector is not too small.

\condiRMplus*

\begin{proof}
    Using the quadratic bound for $u$, we have
    \begin{align}
        u(\vx') - u(\vx) &\geq  \langle \nabla u(\vx), \vx' - \vx \rangle - \frac{L}{2} \|\vx - \vx'\|_2^2 \notag \\
        &\geq \frac{1}{ \|\vr' \|_1} \|\vr - \vr' \|_2^2 - \frac{L}{2} \|\vr - \vr' \|_1^2 \left( \frac{1}{\|\vr\|_1} + \frac{1}{\|\vr'\|_1}  \right)^2 \label{align:firstb} \\
        &\geq \frac{1}{ \|\vr' \|_1} \|\vr - \vr' \|_2^2 - \frac{9 m L}{2 \|\vr'\|_1^2} \|\vr - \vr' \|_2^2 \label{align:secondb} \\
        &\geq \frac{1}{2 \|\vr'\|_1} \|\vr - \vr' \|_2^2, \label{align:thirdb}
    \end{align}
    where~\eqref{align:firstb} uses the one-step improvement property (\Cref{lemma:refinement}) applied for $\vu \defeq \nabla u(\vx)$ together with~\Cref{lemma:closeness}; \eqref{align:secondb} follows from the fact that $\|\vR\|_1 \geq \|\vR'\|_1 - m \geq \frac{1}{2} \|\vR'\|_1$ since $\|\vr'\|_1 \geq \|\vr'\|_2 \geq 2 m$ and $|\langle \vx - \vx', \nabla u (\vx) \rangle | \leq 1$ for all $\vx' \in \Delta(\cA)$ (per our normalization assumption); and~\eqref{align:thirdb} follows from the assumption that $\| \vr' \|_1 \geq 9 m L$.
\end{proof}

To make use of~\Cref{lemma:R-L}, we next establish that the $\ell_2$ norm of the regret vector under $\RMplus$ is nondecreasing.

\nondecreasingregret*

\begin{proof}
We have $\vr^{(t)} - \vr^{(t-1)} = \max(\vg^{(t)}, -\vr^{(t-1)})$ (element-wise), so
\begin{equation*}
  \| \vr^{(t)} - \vr^{(t-1)} \|_2 = \| \max(\vg^{(t)}, -\vr^{(t-1)}) \|_2 \ge \| [\vg^{(t)}]^+ \|_2.
\end{equation*}

Further, $\langle \vr^{(t-1)}, \vr^{(t)} - \vr^{(t-1)} \rangle = \langle \vr^{(t-1)}, \max(\vg^{(t)}, -\vr^{(t-1)}) \rangle \ge \langle \vr^{(t-1)}, \vg^{(t)} \rangle = 0$, where we used the fact that $\vr^t \ge \vec 0$, coordinate-wise. Therefore,
    \begin{align*}
    \| \vr^{(t)} \|_2^2 = \| \vr^{(t)} - \vr^{(t-1)} + \vr^{(t-1)} \|_2^2 &= \| \vr^{(t)} - \vr^{(t-1)} \|_2^2 + \| \vr^{(t-1)} \|_2^2 + 2 \langle \vr^{(t-1)}, \vr^{(t)} - \vr^{(t-1)} \rangle \\
    &\geq \| \vr^{(t-1)} \|_2^2 + \| [\vg^{(t)}]^+ \|_2^2,
    \end{align*}
    as claimed.
\end{proof}

Armed with~\Cref{lemma:R-L,prop:increasing-regret}, we can now prove~\Cref{theorem:singlesimplex}.

\singlesimplex*

\begin{proof}
    Let $\tcrit \in [T]$ be the largest $t$ such that $\| \vr^{(t)} \|_2 < \max \{2 m, 9 m L \} = R$. By~\Cref{prop:increasing-regret},
    \begin{equation*}
    %\label{eq:nondecrea-KKT}
    \| \vr^{(t)} \|_2^2 \ge \| \vr^{(t-1)} \|_2^2 + \| [\vg^{(t)}]_+ \|_2^2 \geq \| \vr^{(t-1)} \|_2^2 +  \KKTgap(\vx^{(t)})^2;
\end{equation*}
so,
    \begin{equation}
        \label{eq:before}
        \sum_{t=1}^{\tcrit}  \KKTgap(\vx^{(t)})^2 \leq \sum_{t=1}^{\tcrit} ( \|\vr^{(t)} \|_2^2 - \|\vr^{(t-1)} \|_2^2) = \| \vr^{(\tcrit)} \|_2^2  \leq R^2.
    \end{equation}
    Further, for any $t \geq t_c + 1$, we have $\| \vr^{(t)} \|_2 \geq R$ since the $\ell_2$ norm of the regret vector is nondecreasing (\Cref{prop:increasing-regret}) and $\| \vr^{(\tcrit+1)} \|_2 \geq R$. Thus, by~\Cref{lemma:R-L},
    \begin{equation}
        \label{eq:after}
        \sum_{t=t_c+1}^T \frac{1}{2 \|\vr^{(t)} \|_1 }  \KKTgap(\vx^{(t)})^2 \leq u(\vx^{(T+1)}) - u(\vx^{(\tcrit+1)}) \leq \urange.
    \end{equation}
    Combining~\eqref{eq:before} and~\eqref{eq:after}, together with the fact that $\| \vr^{(t)} \|_1 \leq m \sqrt{t}$,
    \begin{equation*}
        \sum_{t=1}^T \frac{1}{ m \sqrt{t} } \KKTgap(\vx^{(t)})^2 \leq 2 \urange + R^2.
    \end{equation*}
    This leads to the claim.
\end{proof}

Next, we use~\Cref{theorem:singlesimplex} to establish that even \emph{simultaneous} $\RMplus$ converges in symmetric potential games, as long as the players have the same initialization.

\sympot*

\begin{proof}
    We will argue that simultaneous $\RMplus$, under the same initialization, in a symmetric game is tantamount to running $\RMplus$ with respect to the function $\Delta(\cA) \ni \vx \mapsto \Phi(\vx, \dots, \vx)$, where $\cA_1 = \dots = \cA_n = \cA$. We first claim that for any $a \in \cA$,
    \begin{equation}
        \label{eq:firstclaim}
        \frac{\partial \Phi(\vx, \dots, \vx) }{\partial \vx[a]} = \sum_{i=1}^n \frac{\partial \Phi(\vx_1, \dots, \vx_n) }{\partial \vx_i[a]} \bigg\mid_{\vx_1 = \dots = \vx_n = \vx}.
    \end{equation}
    By definition, $\Phi(\vx_1, \dots, \vx_n)$ is multilinear. Let us consider a monomial of $\Phi$ of the form
    \begin{equation*}
        m_{\vec{a}}(\vx_1, \dots, \vx_n) = \prod_{i = 1}^n \vx_i[a_i]
    \end{equation*}
    for some joint action $\va = (a_1, \dots, a_n) \in \cA_1 \times \dots \times \cA_n$. Then
    \begin{align}
        \frac{\partial m_{\vec{a}}(\vx, \dots, \vx)}{\partial \vx[a]} &= \frac{\partial}{\partial \vx[a]} \left( \prod_{a' \in \cA'} (\vx[a'])^{d(a')} \right) \notag \\
        &=
        \begin{cases}
            d(a) (\vx[a])^{d(a) - 1} \prod_{a' \in \cA' \setminus \{ a \}} (\vx[a'])^{d(a')} & \text{if } a \in \cA' \text{ and} \\
            0 & \text{otherwise},
        \end{cases} \label{align:cases}
    \end{align}
    where $\cA' = \cA'(\vec{a}) = \{ a' \in \cA : \exists i \in [n] \text{ such that } a' = a_i \} $ and degrees $d(a') = | \{ i \in [n] : a_i = a' \} | \geq 1 $ for $a' \in \cA'$. Further,
    \begin{equation*}
        \frac{ \partial m_{\vec{a}}(\vx_1, \dots, \vx_n)}{\partial \vx_i[a]} = \begin{cases}
            \prod_{i' \neq i} \vx_{i'}[a_{i'}] & \text{if } a = a_i \\
            0 & \text{otherwise}.
        \end{cases}
    \end{equation*}
    In particular,
    \begin{equation*}
    \frac{ \partial m_{\vec{a}}(\vx_1, \dots, \vx_n)}{\partial \vx_i[a]} \bigg\mid_{\vx_1 = \dots = \vx_n} = \mathbbm{1} \{ a = a_i \} \vx[a]^{d(a) - 1} \prod_{a' \in \cA' \setminus \{a \}} (\vx[a'])^{d(a')}.
    \end{equation*}
    As a result,
    \begin{equation*}
        \sum_{i=1}^n \frac{ \partial m_{\vec{a}}(\vx_1, \dots, \vx_n)}{\partial \vx_i[a]} \bigg\mid_{\vx_1 = \dots = \vx_n} = \mathbbm{1} \{ a \in \cA' \} d(a) \vx[a]^{d(a) - 1} \prod_{a' \in \cA' \setminus \{a \}} (\vx[a'])^{d(a')},
    \end{equation*}
    which matches the expression in~\eqref{align:cases}. That is, we have shown that for any $\vec{a} \in \cA_1 \times \dots \times \cA_n$,
    \begin{equation*}
        \frac{\partial m_{\vec{a}}(\vx, \dots, \vx)}{\partial \vx[a]} = \sum_{i=1}^n \frac{ \partial m_{\vec{a}}(\vx_1, \dots, \vx_n)}{\partial \vx_i[a]} \bigg\mid_{\vx_1 = \dots = \vx_n}.
    \end{equation*}
    Since $\Phi(\vx_1, \dots, \vx_n) = \sum_{\vec{a} \in \cA_1 \times \dots \times \cA_n} \Phi(\vec{a}) m_{\vec{a}}(\vx_1, \dots, \vx_n)$, \eqref{eq:firstclaim} follows by linearity. Moreover, the symmetry assumption concerning the potential $\Phi$ tells us that
    \begin{equation*}
        \frac{ \partial \Phi(\vx_1, \dots, \vx_n)}{ \partial \vx_i[a]} \bigg\mid_{\vx_1 = \dots = \vx_n} = \frac{ \partial \Phi(\vx_1, \dots, \vx_n)}{ \partial \vx_{i'}[a]} \bigg\mid_{\vx_1 = \dots = \vx_n}
    \end{equation*}
    for any $i, i' \in [n]$. Combining with~\eqref{eq:firstclaim}, we have that for any $a \in \cA$,
    \begin{equation*}
        \frac{\partial \Phi(\vx, \dots, \vx) }{\partial \vx[a]} = n \frac{\partial \Phi(\vx_1, \dots, \vx_n) }{\partial \vx_i[a]} \bigg\mid_{\vx_1 = \dots = \vx_n = \vx} \forall i \in [n].
    \end{equation*}
    Given that $\RMplus$ is scale invariant, we conclude that simultaneous $\RMplus$ on the potential game is equivalent to running $\RMplus$ on the function $ \Delta(\cA) \ni \vx \mapsto \Phi(\vx, \dots, \vx)$, and the claim follows from~\Cref{theorem:singlesimplex}.
\end{proof}

We now turn to the analysis of alternating $\RMplus$ in constrained optimization problems over multiple probability simplices. We clarify that when the regret vector is initialized at $\R_{\geq 0}^{\cA_i} \ni \vr_i^{(0)} \neq \vec{0}$, as assumed below, the first strategy must be defined consistently as $\vx_i^{(1)} \propto \vr_i^{(0)}$.

\easycor*

\begin{proof}
    By~\Cref{prop:increasing-regret}, it follows that $\| \vr_i^{(t)} \|_2 \geq \| \vr_i^{(0)} \|_2 = \max \{ 2 m_i, 9 m_i L \} $. Following the proof of~\Cref{lemma:R-L}, we have
    \begin{equation*}
        u(\vx^{(t+1)}_{i' \leq i}, \vx^{(t)}_{i' > i} ) - u(\vx_{i' < i}^{(t+1)}, \vx_{i' \geq i}^{(t)}) \geq \frac{1}{2 \|\vr_i^{(t)}\|_1} \left( \BRgap_i(\vx_i^{(t)}, \vu_i^{(t)} )  \right)^2 \mathbbm{1} \{ \BRgap_i(\vx_i^{(t)}, \vu_i^{(t)} ) > \epsilon \}.
    \end{equation*}
    As a result, the telescopic summation yields
    \begin{equation*}
        \sum_{t=1}^T \sum_{i=1}^n \frac{1}{2 \|\vr_i^{(t)}\|_1} \left( \BRgap_i(\vx_i^{(t)}, \vu_i^{(t)} )  \right)^2 \mathbbm{1} \{ \BRgap_i(\vx_i^{(t)}, \vu_i^{(t)} ) > \epsilon \} \leq \urange,
    \end{equation*}
    Since $\|\vr_i^{(t)} \|_1 \leq m \sqrt{t}$ for all $i \in [n]$ and $t \in [T]$, it will take at most $1 + \nicefrac{ 4 m^2 \urange^2}{\epsilon^4}$ rounds to converge to a point in which all players have at most an $\epsilon$ best-response gap, which in turn implies that the KKT gap is at most $n \epsilon$. Rescaling $\epsilon$ concludes the proof.
\end{proof}

To analyze simultaneous $\RMplus$, we adapt~\Cref{lemma:R-L} as follows.

\begin{lemma}
    \label{lemma:R-L-multi}
    Let $u$ be an $L$-smooth function over $\Delta(\cA_1) \times \dots \times \Delta(\cA_n)$. For any $i \in [n]$ and $\vr_i \in \R^{\cA_i}_{\geq 0}$ with $\vr_i \neq \vec{0}$, we define $\vx_i \defeq \nicefrac{\vr_i}{\|\vr_i\|_1}$. Further, let $\vr_i' \defeq [ \vr_i + \nabla_{\vx_i} u(\vx) - \langle \vx_i, \nabla_{\vx_i} u(\vx) \rangle \vec{1} ]^+ \neq \vec{0}$ and $\vx_i' \defeq \nicefrac{\vr_i'}{ \|\vr_i'\|_1}$. If $\|\vr_i'\|_2 \geq \max\{ 2 m_i, 9 m_i L \}$ for any $i \in [n]$, then
    \begin{equation*}
        u(\vx') - u(\vx) \geq \frac{1}{2 \max_{1 \leq i \leq n} \|\vr_i'\|_1} \sum_{i=1}^n \left( \max_{\vxstar_i \in \Delta(\cA_i)} \langle \vxstar_i - \vx_i, \nabla_{\vx_i} u(\vx)  \rangle  \right)^2.
    \end{equation*}
\end{lemma}

\begin{proof}
    Using the quadratic bound for $u$, we have
    \begin{align}
        u(\vx') - u(\vx) &\geq  \langle \nabla u(\vx), \vx' - \vx \rangle - \frac{L}{2} \|\vx - \vx'\|_2^2 \notag \\
        &= \sum_{i=1}^n \left( \langle \nabla_{\vx_i} u(\vx), \vx_i' - \vx_i \rangle - \frac{L}{2} \|\vx_i - \vx_i'\|_2^2 \right) \notag \\
        &\geq \sum_{i=1}^n \left( \frac{1}{ \|\vr_i' \|_1} \|\vr_i - \vr_i' \|_2^2 - \frac{L}{2} \|\vr_i - \vr_i' \|_1^2 \left( \frac{1}{\|\vr_i\|_1} + \frac{1}{\|\vr_i'\|_1}  \right)^2 \right) \label{align:firstb-new} \\
        &\geq \sum_{i=1}^n \left( \frac{1}{ \|\vr_i' \|_1} \|\vr_i - \vr_i' \|_2^2 - \frac{9 m_i L}{2 \|\vr_i'\|_1^2} \|\vr_i - \vr_i' \|_2^2 \right) \label{align:secondb-new} \\
        &\geq \frac{1}{2} \sum_{i=1}^n \left( \frac{1}{\|\vr_i'\|_1} \|\vr_i - \vr_i' \|_2^2 \right), \label{align:thirdb-new}
    \end{align}
    where~\eqref{align:firstb-new} uses the one-step improvement property (\Cref{lemma:refinement}) applied for each player $i \in [n]$ and $\vu_i \defeq \nabla_{\vx_i} u(\vx)$ together with~\Cref{lemma:closeness}; \eqref{align:secondb-new} follows from the fact that $\|\vr_i \|_1 \geq \|\vr_i'\|_1 - m_i \geq \frac{1}{2} \|\vr_i'\|_1$ since $\|\vr_i'\|_1 \geq \|\vr_i'\|_2 \geq 2 m_i$ and $|\langle \vx_i - \vx_i', \nabla_{\vx_i} u (\vx) \rangle | \leq 1$ for all $\vx_i' \in \Delta(\cA_i)$ (per our normalization assumption); and~\eqref{align:thirdb-new} follows from the assumption that $\| \vr_i' \|_1 \geq 9 m_i L$ for each $i \in [n]$.
\end{proof}

Similarly to~\Cref{cor:easycor}, we can now show the following concerning simultaneous $\RMplus$.

\begin{corollary}
    \phantomsection
    \label{cor:easycor-sim}
    If $u$ is an $L$-smooth function in $\Delta(\cA_1) \times \dots \times \Delta(\cA_n)$ with range $\urange$, simultaneous $\RMplus$ initialized at $\vr^{(0)}_i = \max\{ 2\sqrt{m_i}, 9 \sqrt{m_i} L \} \vec{1} $ for each player $i \in [n]$ requires at most $1 + \frac{ 4 n^2 m^2 \urange^2 }{\epsilon^4}$ rounds to reach an $\epsilon$-KKT point of $u$.
\end{corollary}

\begin{proof}
    By~\Cref{prop:increasing-regret}, it follows that $\| \vr_i^{(t)} \|_2 \geq \| \vr_i^{(0)} \|_2 = \max \{ 2m_i, 9 m_i L \} $. By~\Cref{lemma:R-L-multi},
    \begin{equation*}
        u(\vx^{(t+1)}) - u(\vx^{(t)}) \geq \frac{1}{2n} \frac{1}{\max_{1 \leq i \leq n} \|\vr_i^{(t)} \|_1} \left( \KKTgap(\vx^{(t)})  \right)^2,
    \end{equation*}
    and the claim follows.
\end{proof}

Next, we show how to extend the analysis even when the regret vector is initialized at zero, at the cost of incurring a worse dependence on $1/\epsilon$.

\mainth*

\begin{proof}
    We analyze $\epsilon$-lazy simultaneous $\RMplus$; the alternating version can be treated similarly. If $\|\vr_i^{(t)} \|_2 \geq \max\{ 2 m_i, 9 m_i L \}$ for all players with best-response gap at least $\epsilon$, we have
    \begin{equation}
        \label{eq:urange}
        u(\vx^{(t+1)}) - u(\vx^{(t)}) \geq \frac{1}{2 \max_{1 \leq i \leq n} \|\vr_i^{(t)}\|_1} \sum_{i=1}^n \left( \BRgap_i(\vx_i^{(t)}, \vu_i^{(t)} )  \right)^2 \mathbbm{1} \{ \BRgap_i(\vx_i^{(t)}, \vu_i^{(t)} ) > \epsilon \};
    \end{equation}
    this follows similarly to~\Cref{lemma:R-L-multi} since players with best-response gap below $\epsilon$ do not update their strategies (under lazy updates). Now, by~\Cref{prop:increasing-regret}, it follows that the total number of rounds in which there is a player $i$ with at least an $\epsilon$ best-response gap and $\| \vr_i^{(t)} \|_2 < \max\{2m_i, 9 m_i L \} = R_i$ is at most $\sum_{i=1}^n (1 + \nicefrac{ R_i^2 }{\epsilon^2})$. We will now bound the number of rounds $T'$ it takes to encounter two such rounds. We observe that, between such rounds, we only update players that have at least an $\epsilon$ best-response gap and $\| \vr_i^{(t)} \|_2 \geq R_i$. By~\eqref{eq:urange}, this can continue for at most $1 + \nicefrac{2 \urange m \sqrt{T}}{\epsilon^2}$ rounds, where $T$ is the total number of rounds it takes for all players to have a best-response gap of at most $\epsilon$. That is, $T' \leq 1 + \nicefrac{2 \urange m \sqrt{T}}{\epsilon^2}$. Further, $T \leq T' + T' \sum_{i=1}^n (1 + \nicefrac{ R_i^2 }{\epsilon^2})$, and the claim follows by solving in terms of $T$ and rescaling $\epsilon$.
\end{proof}

\iffalse
\begin{proof}
    As before, if $\|\vr_i^{(t)} \|_2 \geq \max\{ 2 m_i, 9 m_i L \}$, we have
    \begin{equation}
        \label{eq:urange}
        u(\vx^{(t+1)}_{i' \leq i}, \vx^{(t)}_{i' > i} ) - u(\vx_{i' < i}^{(t+1)}, \vx_{i' \geq i}^{(t)}) \geq \frac{1}{2 \|\vr_i^{(t)}\|_1} \left( \BRgap_i(\vx_i^{(t)}, \vu_i^{(t)} )  \right)^2 \mathbbm{1} \{ \BRgap_i(\vx_i^{(t)}, \vu_i^{(t)} ) > \epsilon \} .
    \end{equation}
    By~\Cref{prop:increasing-regret}, it follows that the total number of rounds in which there is a player $i$ with at least an $\epsilon$ best-response gap and $\| \vr_i^{(t)} \|_2 < \max\{2m_i, 9 m_i L \} = R_i$ is at most $\sum_{i=1}^n (1 + \nicefrac{ R_i^2 }{\epsilon^2})$. We will now bound the number of rounds $T'$ it takes to encounter two such rounds. We observe that, between such rounds, we only update players that have at least an $\epsilon$ best-response gap and $\| \vr_i^{(t)} \|_2 \geq R_i$. By~\eqref{eq:urange}, this can continue for at most $1 + \nicefrac{2 \urange m \sqrt{T}}{\epsilon^2}$ rounds, where $T$ is the total number of rounds it takes for all players to have a best-response gap of at most $\epsilon$. That is, $T' \leq 1 + \nicefrac{2 \urange m \sqrt{T}}{\epsilon^2}$. Further, $T \leq T' + T' \sum_{i=1}^n (1 + \nicefrac{ R_i^2 }{\epsilon^2})$, and the claim follows by solving in terms of $T$ and rescaling $\epsilon$.
\end{proof}
\fi

\subsection{Proofs from Section~\ref{sec:RM}}
\label{appendix:lowerbound}

We conclude with the proofs from~\Cref{sec:RM}. Below (left figure), we provide an illustrative example of the matrix $\mat{B}$, defined earlier in~\eqref{eq:gamelower}, for $m = 6$. The plots in~\Cref{fig:exp-sep} are obtained by running simultaneous $\RM$ (left) and alternating $\RMplus$ (right) on this exact game. It is worth pointing out that one could also use a staircase---instead of a spiral---pattern as shown in the right figure. 

\begin{center}
    \scalebox{0.8}{\begin{tikzpicture}[baseline=(current bounding box.center)]
  % Define the matrix, with every cell being a node
  \matrix (m) [matrix of nodes,
               nodes in empty cells,
               nodes={minimum size=8mm, anchor=center},
               left delimiter={[},
               right delimiter={]},
               row sep=0.6em, column sep=0.6em] {
    |[name=a1]| 1 & 0 & 0 & 0 & 0 & 0 & 0.5 \\
    0 & |[name=a5]| 5 & 0 & 0 & 0 & |[name=a4]| 4 & 0 \\
    0 & 0 & |[name=a9]| 9 & 0 & |[name=a8]| 8 & 0 & 0 \\
    0 & 0 & |[name=a10]| 10 & |[name=a11]| 11 & 0 & 0 & 0 \\
    0 & |[name=a6]| 6 & 0 & 0 & |[name=a7]| 7 & 0 & 0 \\
    |[name=a2]| 2 & 0 & 0 & 0 & 0 & |[name=a3]| 3 & 0 \\
    0.5 & 0 & 0 & 0 & 0 & 0 & |[name=a0]| 0 \\
  };

  % Draw arrows between the important entries
  \draw[->, thin, black, loosely dotted] (a0) to (a1);
  \draw[->, thick, red] (a1) to[bend right=15] (a2);
  \draw[->, thick, red] (a2) to[bend right=15] (a3);
  \draw[->, thick, red] (a3) to[bend right=15] (a4);
  \draw[->, thick, red] (a4) to[bend right=15] (a5);
  \draw[->, thick, red] (a5) to[bend right=15] (a6);
  \draw[->, thick, red] (a6) to[bend right=15] (a7);
  \draw[->, thick, red] (a7) to[bend right=15] (a8);
    \draw[->, thick, red] (a8) to[bend right=15] (a9);
    \draw[->, thick, red] (a9) -- (a10);
    \draw[->, thick, red] (a10) -- (a11);
\end{tikzpicture}
}
    \hspace{2em}
    \scalebox{0.8}{

\begin{tikzpicture}[baseline=(current bounding box.center)]
  % Define the matrix, with every cell being a node
  \matrix (m) [matrix of nodes,
               nodes in empty cells,
               nodes={minimum size=8mm, anchor=center},
               left delimiter={[},
               right delimiter={]},
               row sep=0.6em, column sep=0.6em] {
    |[name=a1]| 1 & |[name=a2]| 2 & 0 & 0 & 0 & 0\\
    0 & |[name=a3]| 3 & |[name=a4]| 4 & 0 & 0 & 0 \\
    0 & 0 & |[name=a5]| 5 & |[name=a6]| 6 & 0 & 0 \\
    0 & 0 & 0 & |[name=a7]| 7 & |[name=a8]| 8 & 0 \\
    0 & 0 & 0 & 0 & |[name=a9]| 9 & |[name=a10]| 10 \\
    0 & 0 & 0 & 0 & 0 & |[name=a11]| 11 \\
  };

  % Draw arrows between the important entries
  \draw[->, thick, red] (a1) to (a2);
  \draw[->, thick, red] (a2) to (a3);
  \draw[->, thick, red] (a3) to (a4);
  \draw[->, thick, red] (a4) to (a5);
  \draw[->, thick, red] (a5) to (a6);
  \draw[->, thick, red] (a6) to (a7);
  \draw[->, thick, red] (a7) to (a8);
    \draw[->, thick, red] (a8) to (a9);
    \draw[->, thick, red] (a9) -- (a10);
    \draw[->, thick, red] (a10) -- (a11);
\end{tikzpicture}
}
\end{center}

Our main goal is to prove the following invariance.

\invariance*

It is possible to check the claim for $k = 1, 2, 3, 4$ by executing $\RM$ for a certain number of rounds. In particular, we find that $T_3 \geq 5$ and $T_4 \geq 20$. We proceed by induction in $k$. Suppose that it holds for all payoffs $1, \dots, \kappa$. We will show that it holds for $\kappa + 1$.

\begin{lemma}
    \label{lemma:indone}
    For any even $\kappa+2 \geq k \geq 4$, let $\vr_1^{(\thigh{k-2})}[a_1]$ be the regret of Player 1 with respect to any action $a_1 \in \cA_1(k)$. Then $\vr_{1}^{(\thigh{k-2} )}[a_1] \leq - \sum_{l=2}^{k-2} (l-1) T_l$. Similarly, for any odd $\kappa + 2 \geq k \geq 5$, if $\vr_2^{(\thigh{k-2})}[a_2]$ is the regret of Player 2 with respect to any action $a_2 \in \cA_2(k)$, $\vr_{2}^{(\thigh{k-2} )}[a_2] \leq - \sum_{l=2}^{k-2} (l-1) T_l$.
\end{lemma}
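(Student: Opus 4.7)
The plan is to proceed by strong induction on $k$, treating even $k$ explicitly; the odd case follows by the symmetric argument with the roles of Players $1$ and $2$ interchanged. The base case $k = 4$ is verified by direct computation: during rounds $1, \dots, \thigh{2}$ Player $2$'s support is contained in $\{m+1, 1\}$, and every row $a_1 \in \cA_1(4) \subseteq \{2, \dots, m-1\}$ satisfies $\mat{B}[a_1, m+1] = 0$ and $\mat{B}[a_1, 1] = 0$ (since column $1$ of $\mat{A}$ carries nonzero entries only at rows $1$ and $m$). Meanwhile Player $1$'s expected utility during $[\tlow{2}, \thigh{2}]$ is a convex combination of $\mat{A}[1,1] = 1$ and $\mat{A}[m,1] = 2$, hence at least $1$. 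Summing over $T_2$ rounds yields $\vr_1^{(\thigh{2})}[a_1] \leq -T_2$.

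For the inductive step, fix an even $k \in \{6, \dots, \kappa + 2\}$ and $a_1 \in \cA_1(k)$, and decompose
\begin{equation*}
    \vr_1^{(\thigh{k-2})}[a_1] = \vr_1^{(\tlow{2}-1)}[a_1] + \sum_{l = 2}^{k-2} \sum_{\tau = \tlow{l}}^{\thigh{l}} \bigl( \vu_1^{(\tau)}[a_1] - \langle \vx_1^{(\tau)}, \vu_1^{(\tau)} \rangle \bigr).
\end{equation*}
The initial term is nonpositive by the same computation as in the base case. It remains to show that each per-round regret increment in period $l$ is at most $-(l-1)$. By \Cref{lemma:resting-transition}, during $[\tlow{l}, \thigh{l}]$ the supports of the two players are contained in $\{a_1(l-1), a_1(l)\}$ and $\{a_2(l-1), a_2(l)\}$ respectively, as otherwise $(a_1(l+1), a_2(l+1))$ would already have been played. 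The crucial \emph{orthogonality claim} is that for every $j \geq l + 2$, $\mat{A}[a_1(j), a_2(l-1)] = \mat{A}[a_1(j), a_2(l)] = 0$; since $\cA_1(k) \subseteq \cA_1(l+2)$, this gives $\vu_1^{(\tau)}[a_1] = 0$. On the other hand, $\langle \vx_1^{(\tau)}, \vu_1^{(\tau)} \rangle$ is a convex combination of entries $\mat{A}[a_1(i), a_2(i')]$ with $i, i' \in \{l-1, l\}$; a brief parity-dependent case analysis (using $a_1(l-1) = a_1(l)$ when $l$ is odd and $a_2(l-1) = a_2(l)$ when $l$ is even) shows each such entry equals either $l-1$ or $l$, so this expected utility is at least $l-1$. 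Each instantaneous increment is therefore at most $-(l-1)$, and summing over the $T_l$ rounds of period $l$ and then over $l = 2, \dots, k-2$ delivers the bound.

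The main obstacle is establishing the orthogonality claim, which I would prove by a secondary induction on the nesting depth of $\mat{A}_{m,k}$. The recursion places payoffs $k+1, k+2, k+3, k+4$ exclusively in row $1$, row $m$, column $1$, and column $m$ of $\mat{A}_{m,k}$, and defers payoffs $\geq k + 5$ to the inner block $\mat{A}_{m-2, k+4}$, whose row and column indices lie in $\{2, \dots, m-1\}$. Consequently, if $(a_1(l), a_2(l))$ sits on the outer frame of some sub-block while $(a_1(j), a_2(j))$ lies strictly inside the corresponding inner sub-block, the row $a_1(j)$ is disjoint from the frame rows of that sub-block; because the frame columns $a_2(l-1), a_2(l)$ carry nonzero entries only along those frame rows, they vanish at row $a_1(j)$. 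Unwinding this through every nested sub-block that $(a_1(l), a_2(l))$ traverses completes the argument.
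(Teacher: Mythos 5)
Your overall accounting is the same as the paper's: during each period $[\tlow{l}, \thigh{l}]$ with $2 \leq l \leq k-2$, any action $a_1 \in \cA_1(k)$ earns utility $0$ while Player 1's realized utility is at least $l-1$ (a convex combination of the entries $l-1$ and $l$, by the parity structure of the supports), so the regret of $a_1$ drops by at least $l-1$ per round; summing over the $T_l$ rounds and over $l$ gives the bound. Your extra care with the rounds before $\tlow{2}$ and with the base case is fine and only makes explicit what the paper leaves implicit.

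There is, however, a genuine flaw in your ``orthogonality claim.'' As stated---$\mat{A}[a_1(j), a_2(l-1)] = \mat{A}[a_1(j), a_2(l)] = 0$ for \emph{every} $j \geq l+2$---it is false at the boundary $j = l+2$ when $l$ is odd. Concretely, in the $m=6$ instance one has $a_1(5) = 2$ and $a_2(3) = 6$, yet $\mat{A}[2,6] = 4 \neq 0$: row $a_1(5)$ carries the payoffs $\{4,5\}$ and column $a_2(3)$ carries the payoffs $\{3,4\}$, which intersect. Your nesting-depth argument inherits the same defect, because the frame column $m$ of a sub-block $\mat{A}_{m,k}$ does \emph{not} carry nonzero entries only along frame rows: the payoff $k+4$ sits at position $(2, m)$, i.e., at row $1$ of the inner block. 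The lemma is saved only because, for Player 1, $k$ is \emph{even}: then every row in $\cA_1(k)$ contains only payoffs $\geq k$ (since $a_1(k) \neq a_1(k-1)$ for even $k$), whereas each column $a_2(l)$ with $l \leq k-2$ contains only payoffs $\leq l+1 \leq k-1$, so the payoff sets cannot intersect and the entry is $0$. So when $l$ is odd you in fact have $j \geq k \geq l+3$, and the bad case $j = l+2$ never arises. You should replace the blanket claim by this payoff-set disjointness argument (or restrict the claim to the parities actually used); as written, the intermediate statement you set out to prove is false, and a careful execution of your secondary induction would get stuck precisely at the column-$m$ frame entries.
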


\begin{proof}
    Let $a_1 \in \cA_1(k)$ and $l \in [k-2]$ for an even $k$. Playing $a_1 \in \cA_1(k)$ during $[\tlow{l}, \thigh{l}]$ gives Player 1 a utility of $0$; this follows from the fact that for any column $a_2 \in \{ a_2(1), a_2(3), \dots, a_2(k-3) \} = \{ a_2(1), a_2(2), a_2(3), \dots, a_2(k-3), a_2(k-2) \} $, it holds that $\mat{A}[a_1(k), a_2] = 0$, by construction of $\mat{A}$. At the same time, Player 1 actually got a utility of at least $l-1$ for each round in $[\tlow{l}, \thigh{l}]$. This means that every time Player 1 updates its regret vector within the time period $[\tlow{l}, \thigh{l}]$, the regret of $a_1$ decreases by at least $l-1$. The same reasoning applies for Player 2 when $k$ is odd.
\end{proof}

\begin{lemma}
    \label{lemma:indtwo}
    For any even $\kappa \geq k \geq 4$, $T_k \geq - \frac{1}{2} \vr_{2}^{(\thigh{k-1})}[a_2(k+1)]$. Similarly, for every odd $\kappa  \geq k \geq 5$, $T_k \geq - \frac{1}{2} \vr_{1}^{(\thigh{k-1})}[a_1(k+1)]$.
\end{lemma}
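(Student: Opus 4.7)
The plan is to carry out an inductive step that, conditional on \Cref{lemma:resting-transition} holding for all payoffs up to $\kappa$, first pins down the exact play during $[\tlow{k}, \thigh{k}]$ and then tracks how fast Player 2's regret for action $a_2(k+1)$ recovers. I treat the even case $\kappa \geq k \geq 4$; the odd case follows by swapping the roles of the two players.

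My first step is to show, via \Cref{lemma:resting-transition}, that throughout phase $k$ both players in fact play the \emph{pure} profile $(a_1(k), a_2(k))$. \Cref{lemma:resting-transition} would in principle permit Player 2 to mix between $a_2(k)$ and $a_2(k+1)$, but a short induction on the recursive definition of $\mat{A}_{m,k}$ shows that for every even $k \geq 2$ one has $a_1(k) = a_1(k+1)$ (visible in the $m=6$ example: $a_1(2)=a_1(3)=m$, $a_1(4)=a_1(5)=2$, $a_1(6)=a_1(7)=m-1$, and so on). Consequently, if Player 2 placed any positive probability on $a_2(k+1)$ at some $t \in [\tlow{k}, \thigh{k}]$, the profile $(a_1(k+1), a_2(k+1)) = (a_1(k), a_2(k+1))$ would be played with positive probability, contradicting the definition of $\thigh{k}$ as the last round before this happens.

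The second step is a per-round regret computation. With both players pure, Player 2's realized utility at every round $t \in [\tlow{k}, \thigh{k}]$ equals $\mat{B}[a_1(k), a_2(k)] = k$, while the counterfactual utility from deviating to $a_2(k+1)$ is $\mat{B}[a_1(k), a_2(k+1)] = \mat{B}[a_1(k+1), a_2(k+1)] = k+1$, again invoking $a_1(k) = a_1(k+1)$. Hence $\vr_2[a_2(k+1)]$ gains exactly $1$ in each of the $T_k$ rounds of phase $k$, so that
\begin{equation*}
    \vr_2^{(\thigh{k})}[a_2(k+1)] = \vr_2^{(\thigh{k-1})}[a_2(k+1)] + T_k.
\end{equation*}
By the $\RM$ update, for $a_2(k+1)$ to receive positive probability at round $\thigh{k}+1 = \tlow{k+1}$ we need $\vr_2^{(\thigh{k})}[a_2(k+1)] > 0$, which gives $T_k > -\vr_2^{(\thigh{k-1})}[a_2(k+1)]$, and a fortiori the claimed $T_k \geq -\tfrac12 \vr_2^{(\thigh{k-1})}[a_2(k+1)]$. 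The odd case is handled identically, replacing $(\text{Player 2}, a_2(k+1))$ by $(\text{Player 1}, a_1(k+1))$ and invoking the parallel identity $a_2(k) = a_2(k+1)$ for odd $k$.

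The main obstacle is the combinatorial bookkeeping behind the identities $a_1(k) = a_1(k+1)$ (even $k$) and $a_2(k) = a_2(k+1)$ (odd $k$). Proving them requires identifying, at each level of the recursion $\mat{A}_{m - 2\ell,\, 4\ell}$, which corner of the block houses each of the four payoffs $4\ell+1,\ldots,4\ell+4$, and then inducting on $\ell$. Once that is settled, the rest of the proof is a one-line accounting of the single unit of regret gained per round during phase $k$; the factor $1/2$ in the statement is in fact slack, since the sharper bound $T_k \geq -\vr_2^{(\thigh{k-1})}[a_2(k+1)]$ holds.
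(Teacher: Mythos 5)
Your overall skeleton matches the paper's: bound the per-round growth of $\vr_2[a_2(k+1)]$ during $[\tlow{k}, \thigh{k}]$ and observe that this regret must climb from its (very negative) value at $\thigh{k-1}$ to a positive value before $a_2(k+1)$ can be played. However, your first step---that both players play the \emph{pure} profile $(a_1(k), a_2(k))$ throughout $[\tlow{k}, \thigh{k}]$---is false, and the ``gains exactly $1$ per round'' accounting built on it fails for part of the phase. Your argument via the definition of $\thigh{k}$ only rules out Player~2 placing mass on $a_2(k+1)$ during the phase; it does not rule out \emph{Player 1's} residual mixing. Indeed, for even $k$ the phase begins at the round where Player~1 first puts positive (typically tiny) weight on $a_1(k)$ while still playing $a_1(k-1)$ with large probability; this is exactly the ``Player 1 mixes between $a_1(k-1)$ and $a_1(k)$'' clause of \Cref{lemma:resting-transition} applied to the odd payoff $k-1$, and the paper's proof of \Cref{lemma:reg-ub} explicitly works with this sub-phase. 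During those rounds Player~2's realized utility is $k-1+q$ with $q \defeq \vx_1^{(t)}[a_1(k)]$, and the counterfactual utility of $a_2(k+1)$ is $(k+1)q$, so the per-round increment is $k(q-1)+1$, not $1$---it is in fact negative for small $q$.

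Your final inequality happens to survive because $k(q-1)+1 \leq 1$ for all $q \in [0,1]$, so the increment is still at most $1$ per round over the whole phase; but establishing that requires precisely the mixed-play computation you skipped. The paper sidesteps the issue with a cruder uniform bound: the counterfactual utility of $a_2(k+1)$ is at most $k+1$ while the realized utility throughout the phase is at least $k-1$, so the increment is at most $2$ per round, which is where the factor $\nicefrac{1}{2}$ in the statement comes from. To repair your write-up, either carry out the mixed-play computation above (obtaining the sharper constant $1$) or fall back to the paper's bound of $2$. Your identification of the identities $a_1(k)=a_1(k+1)$ for even $k$ (and the analogue for odd $k$) is correct and is indeed needed; the paper treats it as immediate from the spiral construction.
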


\begin{proof}
    We fix an even $k \geq 4$. $T_k$ is at least as large as the number of rounds it takes for $a_2(k+1)$ to have nonnegative regret, starting from $\vr_{2}^{(\thigh{k-1})}[a_2(k+1)]$. But in every round in $[\tlow{k}, \thigh{k}]$ the regret of $a_2(k+1)$ can increase additively by at most $2$. This holds because the utility of Player 2 for playing $a_2(k+1)$ is larger than the utility obtained in each round in $[\tlow{k}, \thigh{k}]$ by at most $2$. So, $T_k \geq \lceil - \frac{1}{2} \vr_{2}^{(\thigh{k-1})}[a_2(k+1)] \rceil \geq - \frac{1}{2} \vr_{2}^{(\thigh{k-1})}[a_2(k+1)]$. The same reasoning applies when $k$ is odd.
\end{proof}

The following upper bound on the regret is crude, but will suffice for our purposes.

\begin{lemma}[Regret upper bound]
    \label{lemma:reg-ub}
    For any even $\kappa \geq k \geq 4$, $\| [\vr_1^{(\thigh{k})}]^+ \|_\infty \leq  2 \| [\vr_1^{(\thigh{k-2})}]^+ \|_\infty + 2 \leq \frac{5}{3} 2^{k/2}$ since $\| \vr_1^{(\thigh{2})} \|_\infty \leq \frac{4}{3}$. Similarly, for any odd $k \geq 5$, $\| [\vr_2^{(\thigh{k})}]^+ \|_\infty \leq 2 \| [\vr_2^{(\thigh{k-2})}]^+ \|_\infty + 2 \leq \frac{5}{3} 2^{(k-1)/2}$ since $\| \vr_2^{(\thigh{3})} \|_\infty \leq \frac{4}{3} $.
\end{lemma}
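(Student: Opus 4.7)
The plan is to establish the recurrence $\|[\vr_1^{(\thigh{k})}]^+\|_\infty \le 2 \|[\vr_1^{(\thigh{k-2})}]^+\|_\infty + 2$ by tracking Player $1$'s regret vector through the four natural sub-phases of the interval $(\thigh{k-2}, \thigh{k}]$ dictated by \Cref{lemma:resting-transition}. Writing $M := \|[\vr_1^{(\thigh{k-2})}]^+\|_\infty$ and abbreviating $p := \vr_1[a_1(k-1)]$, $q := \vr_1[a_1(k)]$, I decompose the interval (with $k$ even, so $a_1(k-2) = a_1(k-1)$) into: Phase A (start of period $k-1$, Player $2$ mixes between $a_2(k-2)$ and $a_2(k-1)$ while Player $1$ stays pure at $a_1(k-1)$); Phase B (pure play at $(a_1(k-1), a_2(k-1))$); Phase C (start of period $k$, Player $1$ mixes between $a_1(k-1)$ and $a_1(k)$ while Player $2$ is pure at $a_2(k-1) = a_2(k)$); and Phase D (pure play at $(a_1(k), a_2(k))$).

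First I would verify that no action outside $\{a_1(k-1), a_1(k)\}$ contributes a new positive regret during these four phases. Using the spiral structure of $\mat{B}$, in every round of all four phases $\vu_1^{(t)}$ is supported on $\{a_1(k-3), a_1(k-1), a_1(k)\}$, with $\vu_1^{(t)}[a_1(k-3)] \le k-3$ and $\langle \vx_1^{(t)}, \vu_1^{(t)}\rangle \ge k-2$; a short case check then shows the instantaneous regret of every action outside $\{a_1(k-1), a_1(k)\}$ is $\le -1$. So $[\vr_1^{(t)}[a]]^+ \le [\vr_1^{(\thigh{k-2})}[a]]^+ \le M$ for every such $a$ throughout. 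Moreover, since at $\thigh{k-2}$ Player $1$ is pure at $a_1(k-2) = a_1(k-1)$ and $\RM$ plays proportionally to $[\vr_1]^+$, the action $a_1(k-1)$ is the unique coordinate with positive regret there, giving $p^{(\thigh{k-2})} = M$; the zero instantaneous regret for $a_1(k-1)$ in Phases A and B then forces $p$ to remain at $M$ up to the start of Phase C.

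The core of the argument is controlling $q$. In Phase A the Property forces $q \le 0$ (else Player $1$ would mix), and in Phase B the instantaneous regret for $a_1(k)$ is exactly $+1$ per round, so $q$ crosses zero in a single step and Phase C opens with $q^{\mathrm{start\,C}} \in (0, 1]$ and $p^{\mathrm{start\,C}} = M$. In Phase C only $p, q$ are positive, so $\vx_1 \propto (p, q)$ and a direct computation yields $\Delta p = -q/(p+q)$ and $\Delta q = p/(p+q)$; this gives the two invariants $\Delta(q - p) = 1$ and $\Delta(p^2 + q^2) = (p^2 + q^2)/(p+q)^2 \le 1$. Summing across the $T_C$ rounds of Phase C (which ends when $p$ first drops to some $p^{\mathrm{end\,C}} \in [-1, 0]$), I get $T_C = (M - p^{\mathrm{end\,C}}) + (q^{\mathrm{end\,C}} - q^{\mathrm{start\,C}}) \le M + 1 + q^{\mathrm{end\,C}}$, while conservation yields $(q^{\mathrm{end\,C}})^2 \le M^2 + 1 + T_C$. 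Setting $y := q^{\mathrm{end\,C}}$ leaves the quadratic $y^2 - y - M^2 - M - 2 \le 0$, whose solution satisfies $y \le (1 + \sqrt{4M^2 + 4M + 9})/2 \le M + 2$.

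Phase D then freezes $q$ (both players pure, instantaneous regret $0$ for $a_1(k)$) while all other coordinates strictly decrease, so $\|[\vr_1^{(\thigh{k})}]^+\|_\infty \le \max(M, M + 2) \le 2M + 2$, closing the induction. The base case $\|[\vr_1^{(\thigh{2})}]^+\|_\infty \le 4/3$ is a direct finite computation from the first few rounds of $\RM$ initialized at $(m+1, m+1)$, and unrolling the recurrence $a_k \le 2 a_{k-2} + 2$ with $a_2 = 4/3$ gives $a_k \le \tfrac{5}{3}\cdot 2^{k/2} - 2$. The Player $2$ statement for odd $k$ follows by the symmetric argument with rows and columns of $\mat{B}$ exchanged. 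I expect the main technical obstacle to be the careful verification, at each phase boundary, that earlier long-dead coordinates like $a_1(k-3)$ truly keep shrinking and never re-emerge with positive regret, and that the transition from Phase B to Phase C happens in a single step so that $q^{\mathrm{start\,C}} \in (0, 1]$ always holds.
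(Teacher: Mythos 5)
Your proof is correct, and its core estimate follows a genuinely different route from the paper's. The paper analyzes only the mixing phase (your Phase C) and does so with cruder, monotone rate bounds: it splits the phase at the first round $t'$ at which $\vr_1[a_1(k)] \ge \vr_1[a_1(k-2)]$, notes that $\vr_1^{(t')}[a_1(k)] \le M+1$ with $M \defeq \|[\vr_1^{(\thigh{k-2})}]^+\|_\infty$ (increments are at most $1$), and then uses that from $t'$ onward $a_1(k)$ is played with probability at least $\nicefrac{1}{2}$, so its regret grows by at most $\nicefrac{1}{2}$ per round while that of $a_1(k-2)$ falls by at least $\nicefrac{1}{2}$ per round; this caps the remaining growth at roughly $M+1$ and yields the stated recursion $2M+2$. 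You instead sum the exact one-step invariants $\Delta(q-p)=1$ and $\Delta(p^2+q^2)=(p^2+q^2)/(p+q)^2 \le 1$ across the whole phase and solve a quadratic, obtaining the sharper conclusion $q \le M+2$, i.e., a recursion with additive increment $2$ rather than doubling, so the positive regret grows linearly in $k$ instead of as $\tfrac{5}{3}2^{k/2}$. Either bound suffices downstream, since the proof of \Cref{lemma:resting-transition} only needs this quantity to be dwarfed by the factorially growing negative regret in~\eqref{eq:reglow}, but your constant is strictly better, and your explicit bookkeeping of Phases A, B, D (the frozen value of $p$, the entry of $q$ into Phase C within $(0,1]$, and the instantaneous regret $\le -1$ of all actions outside $\{a_1(k-1),a_1(k)\}$) spells out what the paper leaves implicit. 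Two details to pin down, both of the kind you flagged yourself: (i) the identity $p^{(\thigh{k-2})} = M$ presupposes $[\vr_1^{(\thigh{k-2})}]^+ \neq \vec{0}$, which holds inductively because the pure action exits its own mixing phase with strictly positive regret that is then frozen during pure play --- and in any case your quadratic argument only needs $p^{(\thigh{k-2})} \le M$; (ii) $q$ may first turn positive at the last round of Phase A rather than during Phase B, but its instantaneous regret there is $(k-1)\vx_2[a_2(k-1)]-(k-2) \le 1$ as well, so the conclusion that $q$ enters Phase C in $(0,1]$ still stands.
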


\begin{proof}
    The fact that $\| \vr_1^{(\thigh{2})} \|_\infty, \| \vr_2^{(\thigh{3})} \|_\infty \leq \frac{4}{3}$ can be shown as part of the basis of the induction. We make the argument for an even $k$. From round $\tlow{k}$ until Player 1 plays $a_1(k)$ with probability 1, the regret of $a_1(k)$ increases by $k - (k \vx^{(t)}_1[a_1(k)] + (k-1) \vx^{(t)}_1[a_1(k-2)]) = \vx^{(t)}_1[a_1(k-2)]$ and the regret of $a_1(k-2)$ increases by $k-1 - (k \vx^{(t)}_1[a_1(k)] + (k-1) \vx^{(t)}_1[a_1(k-2)]) = -1 + \vx^{(t)}_1[a_1(k-2)]$; that is, it decreases by $1 - \vx^{(t)}_1[a_1(k-2)]$. Let $t'$ be the first round for which $\vr^{(t')}[a_1(k)] \geq \vr^{(t')}[a_1(k-2)]$. It holds that $\vr^{(t')}[a_1(k)] \leq \| [\vr_1^{(\thigh{k-2})}]^+ \|_\infty + 1 $ since the regret of $a_1(k)$ is increasing by at most $1$ in each round and $\vr^{(t')}[a_1(k-2)] \leq \| [\vr_1^{(\thigh{k-2})}]^+ \|_\infty$. From then onward, the regret of $a_1(k)$ is increasing by at most $1/2$ while the regret of $a_1(k-2)$ is decreasing by at least $1/2$. Thus, it will take at most $\lceil 2 | \vr^{(t')}[a_1(k-2)] | \rceil \leq 2 | \vr^{(t')}[a_1(k-2)] | + 1 \leq 2 \| [\vr_1^{(\thigh{k-2})}]^+ \|_\infty + 1$ rounds for the regret of $a_1(k-2)$ to be nonpositive. During that time, the regret of $a_1(k)$ can increase by at most $\| [\vr_1^{(\thigh{k-2})}]^+ \|_\infty + 1$. 
\end{proof}

\begin{proof}[Proof of~\Cref{lemma:resting-transition}]
    If $\kappa$ is odd, it suffices to prove that in every round in which Player 1 mixes between $a_1(\kappa)$ and $a_1(\kappa + 1)$, Player 2 plays $a_2(\kappa) = a_2(\kappa+1)$ with probability 1. Similarly, if $\kappa$ is even, it suffices to prove that in every round in which Player 2 mixes between $a_2(\kappa)$ and $a_2(\kappa+1)$, Player 1 plays $a_1(\kappa) = a_1(\kappa+1)$ with probability 1. Let us analyze the case where $\kappa$ is even; the odd case is similar. When Player 2 starts mixing more and more to $a_2(\kappa+1)$, it makes the row $a_1(\kappa + 2)$ more attractive for Player 2. By~\Cref{lemma:indone,lemma:indtwo}, 
    \begin{equation}
        \label{eq:reglow}
        \vr_1^{(\thigh{\kappa})}[a_1(\kappa+2)] \leq - \frac{\kappa - 1}{2} T_{\kappa} - \frac{\kappa - 2}{2} T_{\kappa-1} \leq - \frac{(\kappa - 1)!}{ 2^{\kappa - 2}} T_4 - \frac{(\kappa - 2)!}{ 2^{\kappa - 3}} T_3.
    \end{equation}
    At the same time, \Cref{lemma:reg-ub} implies that Player 2 is mixing between $a_2(\kappa)$ and $a_2(\kappa+1)$ for at most $3 \| [\vr_2^{(\thigh{\kappa-1})}]^+ \|_\infty + 2 \leq 5 \cdot 2^{(\kappa-1)/2} + 2$ rounds. To see this, we observe that it takes at most $\lceil \| [\vr_2^{(\thigh{\kappa-1})}]^+ \|_\infty \rceil $ rounds for the action $a_2(\kappa+1)$ to be played with at least the same probability as $a_2(\kappa)$, which in turn holds because the regret of $a_2(\kappa+1)$ increases by $\vx^{(t)}_2[a_2(\kappa)]$ while the regret of $a_2(\kappa)$ decreases by $1 - \vx_2^{(t)}[a_2(\kappa)]$. From then on, the regret of $a_2(\kappa)$ decreases by at least $1/2$ in each round, so it takes at most $\lceil 2 \| [\vr_2^{(\thigh{\kappa-1})}]^+ \|_\infty \rceil$ rounds for it to be nonpositive. We now claim that, by~\eqref{eq:reglow}, action $a_1(\kappa+2)$ is never played during those rounds. The reason is that since $T_3 \geq 5$ and $T_4 \geq 20$ (by our inductive basis),
    \begin{equation*}
        \vr_1^{(\thigh{\kappa})}[a_1(\kappa+2)]  \leq - 20 \frac{(\kappa - 1)!}{2^{\kappa - 2}} - 5 \frac{(\kappa-2)!}{2^{\kappa - 3}}
    \end{equation*}
    and in each round the regret of $a_1(\kappa+2)$ can only increase additively by $2$. Since
    \begin{equation*}
    \frac{1}{2} \left(  20 \frac{(\kappa - 1)!}{2^{\kappa - 2}} + 5 \frac{(\kappa-2)!}{2^{\kappa -3}} \right) > 5 \cdot 2^{(\kappa-1)/2} + 2 \quad \forall \kappa \geq 4,   
    \end{equation*}
    the inductive step follows.
\end{proof}

The next lemma shows that, under the invariance of~\Cref{lemma:resting-transition}, the only way to reach an approximate Nash equilibrium is to start playing the actions corresponding to $2m -1$, which is the maximum payoff in the matrix.

\begin{lemma}
    \label{lemma:mixedNE}
    Consider any strategy profile $(\vx_1, \vx_2)$ such that Player 1 only assigns positive probability to actions in $\{ a_1(k), a_1(k+1) \}$ and Player 2 only assigns positive probability to actions in $\{ a_2(k), a_2(k+1) \}$, where $k+1 < 2m - 1$. Then either Player 1 or Player 2 has a deviation benefit of at least $\nicefrac{1}{k + 2} - \gamma$ for any $\gamma > 0$.
\end{lemma}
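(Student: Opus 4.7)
The plan is to exploit the staircase structure of the payoff matrix $\mat{A}$. By construction, the nonzero entries of $\mat{A}$ are exactly the positions $\{(a_1(\ell), a_2(\ell)) : \ell \in [2m-1]\}$, and along this sequence the two coordinates advance alternately: when $k$ is odd, $a_2(k) = a_2(k+1)$ while $a_1(k) \neq a_1(k+1)$, and when $k$ is even the roles swap. I handle the case $k$ odd; the even case is entirely symmetric.

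Let $p := \vx_1[a_1(k)]$, so Player~1 plays $a_1(k)$ with probability $p$ and $a_1(k+1)$ with probability $1-p$. Since Player~2's support lies in $\{a_2(k), a_2(k+1)\}$ and these two actions coincide, Player~2 plays the pure action $a_2(k)$. The current expected payoff, common to both players, is $pk + (1-p)(k+1)$. The proof then consists of comparing this to two candidate deviations.

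Player~1's deviation to the pure action $a_1(k+1)$ earns $\mat{A}[a_1(k+1), a_2(k)] = \mat{A}[a_1(k+1), a_2(k+1)] = k+1$, so the deviation benefit is $p$. For Player~2's deviation to $a_2(k+2)$, I claim $\mat{A}[a_1(k), a_2(k+2)] = 0$ and $\mat{A}[a_1(k+1), a_2(k+2)] = k+2$. The second equality uses that $k+1$ is even, hence $a_1(k+1) = a_1(k+2)$, so this entry is exactly the payoff-$(k+2)$ staircase position. For the first equality, since the nonzeros of $\mat{A}$ are exactly at staircase positions, it suffices to observe that $a_1(\ell) = a_1(k)$ holds only for $\ell \in \{k-1, k\}$ (for odd $k \geq 3$), and in those two cases $a_2(\ell) \in \{a_2(k-2), a_2(k)\}$, neither of which equals $a_2(k+2)$ because the spiral advances strictly in its column coordinate between steps $k$ and $k+2$; the degenerate $k=1$ base case is handled by the observation that $a_1(1)=1$ is visited only at $\ell = 1$. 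Hence Player~2's deviation earns expected payoff $(1-p)(k+2)$, giving deviation benefit $(1-p)(k+2) - [pk + (1-p)(k+1)] = (1-p) - pk$.

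It then suffices to observe that $\max\{p,\, (1-p) - pk\} \geq \frac{1}{k+2}$: the two quantities are equal precisely at $p = \frac{1}{k+2}$, where each equals $\frac{1}{k+2}$, and moving $p$ in either direction strictly increases one of them. This yields the claimed lower bound for every $\gamma > 0$. The main obstacle is the ``off-spiral'' claim $\mat{A}[a_1(k), a_2(k+2)] = 0$, which demands careful bookkeeping of the recursive definition of $\mat{A}_{m,k}$ to rule out collisions between $(a_1(k), a_2(k+2))$ and every staircase coordinate across recursive layers; once this combinatorial fact is in place, the remainder of the argument is a direct two-line calculation.
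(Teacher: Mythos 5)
Your proof is correct and takes essentially the same route as the paper's: both compare the deviation to the next spiral action (whose benefit pins the mixing probability) against the deviation to the action two steps ahead, relying on the off-spiral entry $\mat{A}[a_1(k),a_2(k+2)]=0$ (the paper treats the symmetric even-$k$ case, phrased as a contradiction from an $\epsilon$-equilibrium rather than as a direct minimization of the max of the two benefits). Your explicit verification of the vanishing off-spiral entry is actually more careful than the paper, which uses that fact implicitly.
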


\begin{proof}
    By construction of the game, either $a_1(k) = a_1(k+1)$ or $a_2(k) = a_2(k+1)$. We can assume that $a_1(k) = a_1(k+1)$; the argument when $a_2(k) = a_2(k+1)$ is symmetric. Let $p$ be the probability Player 2 places at $a_2(k+1)$ and $1 - p$ at $a_2(k)$. Suppose that the deviation benefit of each player is at most $\epsilon$. The utility of Player 2 under the current strategy profile is $k (1 - p) + (k+1) p = k + p$, while deviating to $a_2(k+1)$ gives $k+1$. So, $p > 1 - \epsilon$. Given that $k+1 < 2m - 1$, Player 1 can deviate to $a_1(k+2)$ to obtain a utility of $p (k+2) \leq k + p + \epsilon$. Combining with the fact that $p \geq 1 - \epsilon$, this implies $\epsilon \geq \nicefrac{1}{k+2}$.
\end{proof}

\paragraph{Uniform initialization}

So far, our lower bound assumes that one can initialize $\RM$ arbitrarily. A more common initialization prescribes randomizing uniformly at random. In what follows, we point out that the same argument works by considering a different common payoff matrix; namely,
\begin{equation*}
    \R^{(2m) \times (2m)} \ni \tilde{\mat{B}}[a_1, a_2] \defeq 
    \begin{cases}
        \mat{A}[a_1, a_2] & \text{if } 1 \leq a_1 \leq m \text{ and } 1 \leq a_2 \leq m;\\
        1 - \frac{1}{m} & \text{if } a_1 = 1 \text{ and } m+1 \leq a_2 \leq 2m ;\\
        1 - \frac{3}{m} & \text{if } m+1 \leq a_1 \leq 2m \text{ and } a_2 = 1;\\
        - \frac{1}{m} \sum_{a_2' = 1}^m \mat{A}[a_1, a_2'] & \text{if } 2 \leq a_1 \leq m \text{ and } m+1 \leq a_2 \leq 2m;\\
        - \frac{1}{m} \sum_{a_1' = 1}^m \mat{A}[a_1', a_2] & \text{if } m+1 \leq a_1 \leq 2m \text{ and } 2 \leq a_2 \leq m;\\
        \frac{1}{m^2} \sum_{a_1' = 1}^m \sum_{a_2' = 1}^m \mat{A}[a_1', a_2'] - \frac{2}{m} & \text{if } m+1 \leq a_1 \leq 2m \text{ and } m+1 \leq a_2 \leq 2m.
    \end{cases}
\end{equation*}
By assumption, $\vx_1^{(1)}$ and $\vx_2^{(1)}$ are the uniform distributions over $[2m]$. We observe that
\begin{align*}
    \sum_{a_1 = 1}^{2m} \sum_{a_2 = 1}^{2m} \tilde{\mat{B}}[a_1, a_2] &= \sum_{a_1 = 1}^{m} \sum_{a_2 = 1}^{m} \mat{A}[a_1, a_2] + m \left( 1 - \frac{1}{m} \right) + m \left( 1 - \frac{3}{m} \right) \\
    &- \sum_{a_1 = 2}^m \sum_{a_2 = 1}^m \mat{A}[a_1, a_2] - \sum_{a_1 = 1}^m \sum_{a_2 = 2}^m  \mat{A}[a_1, a_2] + \sum_{a_1 = 1}^m \sum_{a_2 = 1}^m \mat{A}[a_1, a_2] - 2m \\
    &= 2 \sum_{a_1 = 1}^{m} \sum_{a_2 = 1}^{m} \mat{A}[a_1, a_2] - \sum_{a_1 = 2}^m \sum_{a_2 = 1}^m \mat{A}[a_1, a_2] - 1 - \sum_{a_1 = 1}^m \sum_{a_2 = 2}^m  \mat{A}[a_1, a_2] - 3 = 0
\end{align*}
since $1 = \sum_{a_2 = 1}^m \mat{A}[a_1 = 1, a_2]$ and $3 = \sum_{a_1 = 1}^m \mat{A}[a_1, a_2=1]$. In turn, this implies that $\vx_1^{(1)} \tilde{\mat{B}} \vx_2^{(1)} = \frac{1}{4 m^2} \sum_{a_1 = 1}^{2m} \sum_{a_2 = 1}^{2m} \tilde{\mat{B}}[a_1, a_2] = 0$. Let $\vu_1^{(1)} = \tilde{\mat{B}} \vx_2^{(1)}$ and $\vu_2^{(1)} = \tilde{\mat{B}}^\top \vx_1^{(1)}$. We have
\begin{equation*}
    \vu_1^{(1)}[a_1] = 
    \begin{cases}
        \frac{1}{2m} \sum_{a_2 = 1}^m \mat{A}[a_1 = 1, a_2] + \frac{1}{2m} m \left( 1 - \frac{1}{m} \right) = \frac{1}{2} & \text{if } a_1 = 1;\\
        \frac{1}{2m} \sum_{a_2=1}^m \mat{A}[a_1, a_2] - \frac{1}{2m} m \frac{1}{m} \sum_{a_2 = 1}^m \mat{A}[a_1, a_2] = 0 & \text{if } 2 \leq a_1 \leq m;\\
        \frac{1}{2m} \left( 1 - 2 \right) = -\frac{1}{2m} & \text{if } m+1 \leq a_1 \leq 2m.
    \end{cases}
\end{equation*}
Since $\langle \vx_1^{(1)}, \vu_1^{(1)} \rangle = 0$, 
\begin{equation*}
    \vr_1^{(1)}[a_1] = 
    \begin{cases}
        \frac{1}{2} & \text{if } a_1 = 1;\\
        0 & \text{if } 2 \leq a_1 \leq m;\\
        -\frac{1}{2m} & \text{if } m+1 \leq a_1 \leq 2m.
    \end{cases}
\end{equation*}
Similarly, we have
\begin{equation*}
    \vr_2^{(1)}[a_2] = 
    \begin{cases}
        \frac{1}{2} & \text{if } a_2 = 1;\\
        0 & \text{if } 2 \leq a_2 \leq m;\\
        -\frac{1}{2m} & \text{if } m+1 \leq a_2 \leq 2m.
    \end{cases}
\end{equation*}
As a result, the second round sees both players playing the first action with probability 1. In view of the fact that $\mat{A}[a_1=1, a_2] < 1$ when $1 \leq a_2 \leq m$ and $\mat{A}[a_1, a_2 = 1] < 1$ when $1 \leq a_1 \leq m$, we immediately revert to the previous analysis concerning simultaneous $\RM$ executed on the game described in~\eqref{eq:gamelower}. Thus, we arrive at the following lower bound (by a change of variables $m' \leftarrow 2m$).

\begin{corollary}
    \label{cor:uniform-init}
    Simultaneous $\RM$ requires $m^{\Omega(m)}$ rounds to converge to a $\frac{1}{m + 1}$-Nash equilibrium in two-player $m \times m$ identical-interest games. This holds even when both players initialize at the uniform random strategy.
\end{corollary}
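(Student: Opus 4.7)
The plan is to reduce the uniform-initialization case to the pure-profile case already handled in \Cref{theorem:slowRM}, via the augmented $2m \times 2m$ payoff matrix $\tilde{\mat{B}}$ constructed just above. The underlying idea is to pad the original game with $m$ dummy rows and $m$ dummy columns whose payoffs are calibrated so that (i) a single round of $\RM$ from the uniform distribution drives the regrets to a state in which both players deterministically play action $1$ in round $2$, and (ii) the dummy actions never resurface, so from round $2$ onward the trajectory coincides with simultaneous $\RM$ on $\mat{A}$ started at the pure profile $(1, 1)$.

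For step (i), I would simply verify the algebraic computation sketched in the excerpt. The padding entries on rows $2, \dots, m$ and columns $2, \dots, m$ are chosen as the negative averages of the corresponding $\mat{A}$ slices, so that row and column sums of $\tilde{\mat{B}}$ over all $2m$ indices vanish except on row/column $1$. This yields $\langle \vx_i^{(1)}, \vu_i^{(1)} \rangle = 0$ under uniform initialization, and produces instantaneous regret $1/2$ on action $1$, zero on actions $2, \dots, m$, and $-1/(2m)$ on each dummy action. Hence in round $2$ both players place all mass on action $1$.

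For step (ii), I would argue that the dummy regrets stay nonpositive throughout all subsequent play. Starting from $(1,1)$, the utility realized in each round is some integer payoff $k \in [1, 2m-1]$ drawn from $\mat{A}$. A dummy row against any column $a_2 \in [m]$ yields at most $1 - 3/m < 1 \le k$: against column $1$ the dummy pays exactly $1 - 3/m$, while against columns $2, \dots, m$ it equals a negative average of a nonnegative column of $\mat{A}$ (nonnegativity of $\mat{A}$ follows directly from its recursive definition) and is therefore nonpositive; a symmetric bound holds for dummy columns. Consequently each dummy regret is monotonically non-increasing from its initial value $-1/(2m)$, so $\RM$ never revives a dummy action. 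From round $2$ onward the trajectory is thus indistinguishable from the pure-profile dynamics on $\mat{A}$, and \Cref{lemma:resting-transition,lemma:negreg,lemma:overcomingnegregret,lemma:mixedNE} apply verbatim, yielding $T_{2m-2} \ge (m-2)!/2^{m-3} = m^{\Omega(m)}$ rounds before any strategy profile of the form guaranteed by \Cref{lemma:mixedNE} is reached. Rewriting the dimension as $m' = 2m$ to match the size of the padded game recovers the stated $\frac{1}{m+1}$-Nash approximation lower bound.

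The main obstacle is verifying step (ii) cleanly: one must confirm that the negative-average dummy entries really are dominated by the realized $\mat{A}$-payoff throughout the exponentially long trajectory, not merely in round $2$. This reduces to the nonnegativity of $\mat{A}$ together with the strict gap $1 - 3/m < 1$ baked into the padding choice, but it is the one place where the specific numerical constants in $\tilde{\mat{B}}$ (namely $1 - 1/m$ and $1 - 3/m$, rather than $0$) are essential rather than cosmetic.
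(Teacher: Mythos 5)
Your proposal matches the paper's argument essentially verbatim: the round-one computation with $\tilde{\mat{B}}$ forcing both players onto action $1$, followed by the observation that the dummy payoffs (at most $1-3/m$ against column $1$ and nonpositive against columns $2,\dots,m$) are strictly dominated by the realized payoff along the spiral, so the dummy regrets only decrease and the dynamics reduce to the pure-initialization analysis of \Cref{theorem:slowRM}. Your step (ii) is in fact spelled out more carefully than the paper's one-line justification; the only cosmetic slip is calling the realized per-round payoff an ``integer $k$'' --- during the mixing phases it is a convex combination $k+p$ --- but it is still at least $1$, which is all the domination argument needs.
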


\end{document}